\newtheorem{theorem}{Theorem}[section]
\newtheorem{lemma}[theorem]{Lemma}
\newtheorem{claim}[theorem]{Claim}
\newtheorem{corollary}[theorem]{Corollary}
\newtheorem{proposition}[theorem]{Proposition}
\newcommand{\executeiffilenewer}[3]{%
\ifnum\pdfstrcmp{\pdffilemoddate{#1}}%
{\pdffilemoddate{#2}}>0%
{\immediate\write18{#3}}\fi%
} 
\newcommand{%
\executeiffilenewer{.svg}{.pdf}%
{inkscape -z -D --file=.svg %
--export-pdf=.pdf --export-latex}%
{\input{.pdf_tex}}}[1]{%
\executeiffilenewer{#1.svg}{#1.pdf}%
{inkscape -z -D --file=#1.svg %
--export-pdf=#1.pdf --export-latex}%
{\input{#1.pdf_tex}}}%
\newcommand{\svg}[2]{\def\svgwidth{#1}%
\executeiffilenewer{#2.svg}{#2.pdf}%
{inkscape -z -D --file=#2.svg %
--export-pdf=#2.pdf --export-latex}%
{\input{#2.pdf_tex}}}
\newcommand{\tw}{{\textbf{tw}}}
\newcommand{\pw}{{\textbf{pw}}}
\newcommand{\ora}[1]{\overrightarrow{#1}}
\newcommand{\kkhs}{\textsc{$k\times k$ Hitting Set}}
\newcommand{\kkhsr}{\textsc{$k\times k$ Hitting Set$^{T}$}}
\newcommand{\tkkhsr}{\textsc{$2k\times 2k$ Hitting Set$^{T}$}}
\newcommand{\kkpermis}{\textsc{$k\times k$ Permutation Independent Set}}
\newcommand{\kkis}{\textsc{$k\times k$ Independent Set}}
\newcommand{\kkbis}{\textsc{$2k\times 2k$ Bipartite Permutation Independent Set}}
\newcommand{\kkcl}{\textsc{$k\times k$ Clique}}
\newcommand{\kkpermcl}{\textsc{$k\times k$ Permutation Clique}}
\newcommand{\dirpath}{\textsc{Directed Disjoint Paths}}
\newcommand{\disjpath}{\textsc{Disjoint Paths}}
\newcommand{\chnum}{\textsc{Chromatic Number}}
\newcommand{\dist}{\textsc{Distortion}}
\newcommand{\threecol}{\textsc{3-Coloring}}
\newcommand{\clstr}{\textsc{Closest String}}
\newcommand{\cperm}{\textsc{Constrained Permutation}}
\newcommand{\problem}[4]{
\begin{center}
\fbox{\parbox{0.9\linewidth}{
#1

\noindent\begin{tabularx}{\linewidth}{rX}
Input: & #2\\
Parameter: & #3\\
Question: & #4
\end{tabularx}}}
\end{center}
}
\date{}
\begin{document}

\title{Slightly Superexponential Parameterized Problems\footnote{A preliminary version of the paper appeared in the proceedings of SODA 2011.}}
\author{Daniel Lokshtanov\thanks{Department of Informatics, University of Bergen, Bergen, Norway. 
    \texttt{daniello@ii.uib.no}. Supported by ERC Starting Grant PaPaAlg (No. 715744).}
 \and D\'aniel Marx\thanks{Institute for Computer Science and Control, Hungarian Academy of Sciences (MTA SZTAKI), Budapest,
Hungary. \texttt{dmarx@cs.bme.hu}. Supported by ERC Starting Grant PARAMTIGHT (No. 280152) and Consolidator Grant SYSTEMATICGRAPH (No. 755978).}
 \and Saket Saurabh\thanks{The Institute of Mathematical Sciences, Chennai, India.    \texttt{saket@imsc.res.in}. Supported by the ERC Starting Grant PARAPPROX (No. 306992).}
}
\maketitle

\thispagestyle{empty}
\begin{abstract}
A central problem in parameterized algorithms is to obtain algorithms
 with running time $f(k)\cdot n^{O(1)}$ such that $f$ is as slow growing function of the parameter $k$ as possible.
In particular, a large number of basic parameterized problems admit parameterized algorithms where $f(k)$ is single-exponential, that is, $c^k$ for some constant $c$, which makes aiming for such a running time a natural goal for other problems as well. 
However there are still plenty of problems where the $f(k)$ appearing
in the best known running time is worse than single-exponential and it
remained ``slightly superexponential'' even after serious attempts to
bring it down.  A natural question to ask is whether the $f(k)$
appearing in the running time of the best-known algorithms is optimal
for any of these problems.

In this paper, we examine parameterized problems where $f(k)$ is
$k^{O(k)}=2^{O(k\log k)}$ in the best known running time and for a number of 
such problems, we show that the dependence
on $k$ in the running time cannot be improved to single exponential. 
More precisely we prove following tight lower bounds, for four natural problems, arising 
from three different domains:
\begin{itemize}
\item In the  \clstr\ problem, given strings $s_1$, $\dots$, $s_t$ over an alphabet
  $\Sigma$ of length $L$ each, and an integer $d$, the question is whether  there exists a string $s$ over $\Sigma$ of length $L$, such that its hamming distance from each of the strings $s_i$, $1\leq i \leq t$, is at most $d$. The pattern matching problem \clstr\  
is known to be solvable in
  time $2^{O(d\log d)}\cdot n^{O(1)}$ and $2^{O(d\log
    |\Sigma|)}\cdot n^{O(1)}$ . 
  We show that there are no $2^{o(d\log d)}\cdot n^{O(1)}$ or
  $2^{o(d\log |\Sigma|)}\cdot n^{O(1)}$ time algorithms, unless the Exponential Time
Hypothesis (ETH) fails.
\item The graph embedding problem \dist, that is, deciding whether a graph $G$ has a metric 
embedding into the integers with distortion at most $d$ can be solved in time $2^{O(d\log d)}\cdot
  n^{O(1)}$. We show that there is no $2^{o(d\log d)}\cdot n^{O(1)}$
  time algorithm, unless the ETH fails.
\item The \disjpath\ problem can be solved in time in time $2^{O(w\log
    w)}\cdot n^{O(1)}$ on graphs of treewidth at most $w$. We show that there is no $2^{o(w\log w)}\cdot n^{O(1)}$
  time algorithm, unless the ETH fails.
\item The \chnum\ problem can be solved in time in time $2^{O(w\log
    w)}\cdot n^{O(1)}$ on graphs of treewidth at most $w$. We show that there is no $2^{o(w\log w)}\cdot n^{O(1)}$
  time algorithm, unless the ETH fails.
\end{itemize}
To obtain our results, we first prove the lower bound for variants
of basic problems: finding cliques, independent sets, and hitting
sets. These artificially constrained variants  
form a good starting point for proving lower bounds on natural
problems without any technical restrictions and could be of independent interest. Several follow up works have already obtained tight lower bounds by using our framework, and we believe it will prove useful in obtaining even more lower bounds in the future. 
%We believe that many further 
%results of this form can be obtained by using the framework of the current paper. 
\end{abstract}
%\newpage
\pagestyle{plain}
\section{Introduction}
The goal of parameterized complexity is to find ways of solving
NP-hard problems more efficiently than brute force: our aim is to
restrict the combinatorial explosion to a parameter that is hopefully
much smaller than the input size. Formally, a {\em parameterization}
of a problem is assigning an integer $k$ to each input instance and we
say that a parameterized problem is {\em fixed-parameter tractable
  (FPT)} if there is an algorithm that solves the problem in time
$f(k)\cdot |I|^{O(1)}$, where $|I|$ is the size of the input and $f$ is an
arbitrary computable function depending on the parameter $k$
only. There is a long list of NP-hard problems that are FPT under
various parameterizations: finding a vertex cover of size $k$, finding
a cycle of length $k$, finding a maximum independent set in a graph
of treewidth at most $k$, etc.
For more background, the reader is referred to the monographs \cite{DBLP:books/sp/CyganFKLMPPS15,MR2001b:68042,grohe-flum-param,MR2223196}.

The practical applicability of fixed-parameter tractability results
depends very much on the form of the function $f(k)$ in the running
time. In some cases, for example in results obtained from Graph Minors
theory, the function $f(k)$ is truly horrendous (towers of
exponentials), making the result purely of theoretical interest. On
the other hand, in many cases $f(k)$ is a moderately growing
exponential function: for example, $f(k)$ is $1.2738^k$ in the current fastest
algorithm for finding a vertex cover of size $k$ \cite{DBLP:conf/mfcs/ChenKX06}, which can be further improved
to $1.1616^k$ in the special case of graphs with maximum degree 3 \cite{DBLP:conf/csr/Xiao08}. For
some problems, $f(k)$ can be even subexponential (e.g.,
$c^{\sqrt{k}}$) \cite{1070514,BidimensionalSurvey_GD2004,HMinorFree_JACM,DBLP:conf/icalp/AlonLS09}.

The implicit assumption in the research on fixed-parameter
tractability is that whenever a reasonably natural problem turns out
to be FPT, then we can improve $f(k)$ to $c^k$ with some small $c$
(hopefully $c<2$) if we work on the problem hard enough. Indeed, for
some basic problems, the current best running time was obtained after
a long sequence of incremental improvements. However, it is very well
possible that for some problems there is no algorithm with
single-exponential $f(k)$ in the running time.

In this paper, we examine parameterized problems where $f(k)$ is
``slightly superexponential'' in the best known running time: $f(k)$
is of the form $k^{O(k)}=2^{O(k\log k)}$.  Algorithms with this
running time naturally occur when a search tree of height at most $k$
and branching factor at most $k$ is explored, or when all possible
permutations, partitions, or matchings of a $k$ element set are
enumerated. For a number of such problems, we show that the dependence
on $k$ in the running time cannot be improved to single exponential.
More precisely, we show that a $2^{o(k\log k)}\cdot |I|^{O(1)}$ time
algorithm for these problems would violate the Exponential Time
Hypothesis (ETH), which is a complexity-theoretic assumption that can be informally stated as saying that there is no $2^{o(n)}$-time
algorithm for $n$-variable 3SAT \cite{MR1894519}.

In the first part of the paper, we prove the lower bound for variants
of basic problems: finding cliques, independent sets, and hitting
sets. These variants are artificially constrained such that the search
space is of size $2^{O(k\log k)}$ and we prove that a $2^{o(k\log k)} \cdot |I|^{O(1)}$
time algorithm would violate the ETH. The results in this section
demonstrate that for some problems the natural $2^{O(k\log k)} \cdot |I|^{O(1)}$ upper
bound on the search space is actually a tight lower bound on the
running time. More importantly, the results on these basic problems
form a good starting point for proving lower bounds on natural
problems without any technical restrictions.

In the second part of the paper, we use our results on the basic
problems to prove tight lower bounds for four natural problems from three
different domains:
\begin{itemize}
%\vspace{-1.5mm}
%\setlength{\itemsep}{-5pt}
\item 
 In the  \clstr\ problem, given strings $s_1$, $\dots$, $s_t$ over an alphabet
  $\Sigma$ of length $L$ each, and an integer $d$, the question is whether  there exists a string $s$ over $\Sigma$ of length $L$, such that its hamming distance from each of the strings $s_i$, $1\leq i \leq t$, is at most $d$.  The pattern matching problem \clstr\ is known to be solvable in
  time $2^{O(d\log d)}\cdot  |I|^{O(1)}$ \cite{MR1984615} and $2^{O(d\log
    |\Sigma|)}\cdot  |I|^{O(1)}$ \cite{DBLP:journals/siamcomp/MaS09}. 
  We show that there are no $2^{o(d\log d)}\cdot n^{O(1)}$ or
  $2^{o(d\log |\Sigma|)}\cdot n^{O(1)}$ time algorithms, unless the ETH fails.

%\item The pattern matching problem \clstr\ is known to be solvable in
%  time $2^{O(d\log d)}\cdot  |I|^{O(1)}$ \cite{MR1984615} and $2^{O(d\log
%    |\Sigma|)}\cdot  |I|^{O(1)}$ \cite{DBLP:journals/siamcomp/MaS09}.
%  We show that there is no $2^{o(d\log d)}\cdot  |I|^{O(1)}$ and
%  $2^{o(d\log |\Sigma|)}\cdot |I|^{O(1)}$ time algorithm, unless the ETH fails.
\item The graph embedding problem \dist, that is, deciding whether a $n$ vertex graph $G$ has a metric 
embedding into the integers with distortion at most $d$ can be done in time $2^{O(d\log d)}\cdot
  n^{O(1)}$\cite{DBLP:conf/icalp/FellowsFLLRS09}. We show that there is no $2^{o(d\log d)}\cdot n^{O(1)}$
  time algorithm, unless the ETH fails.
% Deciding whether a graph $G$ has a metric embedding into the integers with
%   distortion at most $d$ can be done in time $2^{O(d\log d)}\cdot
%   n^{O(1)}$ \cite{DBLP:conf/icalp/FellowsFLLRS09}. We show that there is no $2^{o(d\log d)}\cdot n^{O(1)}$
%   time algorithm, unless the ETH fails.
\item The \disjpath\ problem can be solved  in time $2^{O(w\log
    w)}\cdot n^{O(1)}$ on $n$ vertex graphs of treewidth at most $w$~\cite{Scheffler94}. 
We show that there is no $2^{o(w\log w)}\cdot n^{O(1)}$
  time algorithm, unless the ETH fails.
\item The \chnum\ problem can be solved  in time $2^{O(w\log
    w)}\cdot n^{O(1)}$ on $n$ vertex graphs of treewidth at most $w$~\cite{DBLP:journals/dam/JansenS97}.
We show that there is no $2^{o(w\log w)}\cdot n^{O(1)}$
  time algorithm, unless the ETH fails.
\end{itemize}
We remark that the algorithm given in~\cite{Scheffler94} does not mention the running time for \disjpath\ 
as $2^{O(w\log w)}\cdot n^{O(1)}$ on graphs of bounded treewidth but a closer look reveals that 
it is indeed the case. We expect that many further results of this form can be obtained by
using the framework of the current paper. Thus parameterized problems
requiring ``slightly superexponential'' time $2^{O(k\log k)}\cdot |I|^{O(1)}$ is not a
shortcoming of algorithm design or pathological situations, but an
unavoidable feature of the landscape of parameterized complexity.

It is important to point out that it is a real possibility that some
$2^{O(k\log k)}\cdot |I|^{O(1)}$ time algorithm can be improved to
single-exponential dependence with some work. In fact, there are
examples of well-studied problems where the running time was ``stuck''
at $2^{O(k\log k)} \cdot |I|^{O(1)}$ for several years before some new
algorithmic idea arrived that made it possible to reduce the
dependence to $2^{O(k)}\cdot |I|^{O(1)}$:
\begin{itemize}
%\vspace{-1.5mm}
%\setlength{\itemsep}{-4pt}
\item In 1985, Monien \cite{MR87a:05097} gave a $k!\cdot n^{O(1)}$
  time algorithm for finding a cycle of length $k$ in a graph on $n$
  vertices. Alon, Yuster, and Zwick \cite{MR1411787} introduced the
  color coding technique in 1995 and used it to show that a cycle of
  length $k$ can be found in time $2^{O(k)}\cdot n^{O(1)}$.
\item In 1995, Eppstein~\cite{DBLP:journals/jgaa/Eppstein99} gave a $O(k^k n)$
  time algorithm for deciding if a $k$-vertex planar graph $H$ is a
  subgraph of an $n$-vertex planar graph $G$. 
  Dorn~\cite{DBLP:conf/stacs/Dorn10} gave an improved algorithm with
  running time $2^{O(k)}\cdot n$. One of the main technical tools in
  this result is the use of sphere cut decompositions of planar
  graphs, which was used earlier to speed up algorithms on planar
  graphs in a similar way \cite{DBLP:conf/esa/DornPBF05}.
\item In 1995, Downey and Fellows \cite{downey-fellows-feasibility} gave a
  $k^{O(k)}\cdot n^{O(1)}$ time algorithm for \textsc{Feedback Vertex Set} (given an
  undirected graph $G$ on $n$ vertices, delete $k$ vertices to make it acyclic). A
  randomized $4^k\cdot n^{O(1)}$ time algorithm was given in 2000~\cite{DBLP:journals/jair/BeckerBG00}. The
  first deterministic $2^{O(k)}\cdot n^{O(1)}$ time algorithms appeared only in
  2005 \cite{GGHNW-2005,mr2352543}, using the technique of
  iterative compression introduced by Reed et al. \cite{ReedSmithVetta-OddCycle}.
\item In 2003, Cook and Seymour \cite{DBLP:journals/informs/CookS03} used standard dynamic programming techniques to give a $2^{O(w\log w)}$ $n^{O(1)}$-time algorithm for \textsc{Feedback Vertex Set} on graphs of treewidth $w$, and it was considered plausible that this is the best possible form of running time. Hence it was a remarkable surprise in 2011 when Cygan et al.~\cite{Cygan2011} presented a $3^{w}n^{O(1)}$ time randomized algorithm by using the so-called Cut \& Count technique. Later, Bodlaender et al.~\cite{BodlaenderCKN2015} and Fomin et al.~\cite{FedorDS2014} obtained deterministic single-exponential parameterized algorithms using a different approach.

\end{itemize}
As we can see in the examples above, achieving single-exponential
running time often requires the invention of significant new
techniques. Therefore, trying to improve the running time for a problem whose best known
parameterized algorithm is slightly superexponential can lead to
important new discoveries and developments. However, in this paper we
identify problems for which such an improvement is very unlikely. The
$2^{O(k\log k)}$ dependence on $f(k)$ seems to be inherent to these problems, or one should realize that in achieving 
single-exponential dependence one is essentially trying to disprove the ETH.
%one should not waste too much time on trying to achieve
%single-exponential dependence. 

There are some lower bound results on FPT problems in the
parameterized complexity literature, but not of the form that we are
proving here. Cai and
Juedes~\cite{DBLP:journals/jcss/CaiJ03} proved that if the
parameterized version of a MAXSNP-complete
problems (such as \textsc{Vertex Cover} on graphs of maximum degree 3)
can be solved in time $2^{o(k)}\cdot |I|^{O(1)}$, then ETH fails. Using
parameterized reductions, this result can be transfered to other
problems: for example, assuming the ETH, there is a no $2^{o(\sqrt{k})}\cdot |I|^{O(1)}$ time
algorithm for planar versions of \textsc{Vertex Cover},
\textsc{Independent Set}, and \textsc{Dominating Set} (and this bound
is tight). However, no lower bound above $2^{O(k)}$ was obtained this
way for any problem so far.

Flum, Grohe, and Weyer \cite{DBLP:journals/jcss/FlumGW06} tried to
rebuild parameterized complexity by redefining fixed-parameter
tractability as $2^{O(k)}\cdot |I|^{O(1)}$ time and introducing
appropriate notions of reductions, completeness, and complexity
classes. This theory could be potentially used to show that the
problems treated in the current paper are hard for certain classes,
and therefore they are unlikely to have single-exponential
parameterized algorithms. However, we see no reason why these
problems would be complete for any of those classes (for example, the
only complete problem identified in \cite{DBLP:journals/jcss/FlumGW06}
that is actually FPT is a model checking on problem on words for which
it was already known that $f(k)$ cannot even be elementary).  Moreover, we are not
only giving evidence against single-exponential time algorithms in this paper,
but show that the $2^{O(k\log k)}$ dependence is actually tight.

%%% Local Variables: 
%%% mode: latex
%%% TeX-master: "mainS.tex"
%%% End: 

%%\input{basicSD}
\section{Basic problems}
In this section, we modify basic problems in such a way that they can
be solved in time $2^{O(k\log k)}|I|^{O(1)}$ by brute force, and this is best possible
assuming the ETH. In all the problems of this section, the task is to
select exactly one element from each row of a $k\times k$ table such
that the selected elements satisfy certain constraints. This means
that the search space is of size $k^k=2^{O(k\log k)}$. We denote by
$[k]\times [k]$ the set of elements in a $k\times k$ table, where
$(i,j)$ is the element in row $i$ and column $j$. Thus selecting
exactly one element from each row gives a set $(1,\rho(1))$, $\dots$,
$(k,\rho(k))$ for some mapping $\rho: [k]\to [k]$. In some of the
variants, we not only require that exactly one element is selected
from each row, but we also require that exactly one element is
selected from each column, that is, $\rho$ has to be a
permutation. The lower bounds for such permutation problems will
be essential for proving hardness results on \clstr\
(Section~\ref{sec:closest-substring}) and \textsc{Distortion}
(Section~\ref{sec:distortion}). The key step in obtaining the lower
bounds for permutation problems is the randomized reordering argument
of Theorem~\ref{th:kkbis}. The analysis and derandomization of this
step is reminiscent of the color coding \cite{MR1411787} and chromatic
coding \cite{DBLP:conf/icalp/AlonLS09} techniques. 

To prove that a too fast algorithm for a certain problem $P$
contradicts the Exponential Time Hypothesis, we have to reduce
$n$-variable 3SAT to problem $P$ and argue that the algorithm would
solve 3SAT in time $2^{o(n)}$. It will be somewhat more convenient to do
the reduction from \textsc{3-Coloring}. We use the well-known fact
that there is a polynomial-time reduction from 3SAT to
\textsc{3-Coloring} where the number of vertices of the graph is linear in the size formula.
\begin{proposition}\label{prop:sat-3col}
Given a 3SAT formula $\phi$ with $n$-variables and $m$-clauses, it is
possible to construct a graph $G$ with $O(n+m)$ vertices in polynomial
time such that $G$ is 3-colorable if and only if $\phi$ is
satisfiable.
\end{proposition}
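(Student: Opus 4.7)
The plan is to give the standard literal-per-vertex construction, taking care that the clause gadgets use only a constant number of vertices each so that the total vertex count is $O(n+m)$.

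First I would introduce a \emph{palette triangle} on three vertices labelled $T$, $F$, $B$, pairwise adjacent. In any proper $3$-coloring these three vertices must receive the three distinct colors, and by symmetry we may assume the colors are named $T$, $F$, $B$ after the vertex that gets them. Then for each variable $x_i$ of $\phi$ I would introduce two vertices, one for the literal $x_i$ and one for the literal $\bar x_i$, and make both of them adjacent to each other and to the palette vertex $B$. This forces $\{x_i,\bar x_i\}$ to receive the two colors $\{T,F\}$ in some order, so a proper $3$-coloring of this part encodes a truth assignment to the variables.

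The main ingredient is the clause gadget. For each clause $C_j=(\ell_1\vee\ell_2\vee\ell_3)$ I would attach a constant-size gadget to the three literal vertices corresponding to $\ell_1,\ell_2,\ell_3$ and to the palette vertices $T$ and $B$, with the property that the gadget admits a proper extension of the coloring if and only if at least one of $\ell_1,\ell_2,\ell_3$ has color $T$. The textbook realization uses six new vertices arranged as two triangles sharing a vertex, together with two edges connecting the ``top'' of this structure to $T$ and $B$ respectively. I would verify the two directions by case analysis on how the three literals are colored: if all three are $F$, a short parity argument shows the gadget cannot be properly $3$-colored; if at least one is $T$, I would exhibit an explicit extension.

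Finally I would count: $3$ palette vertices, $2n$ literal vertices, and a constant number (at most $6$) of vertices per clause gadget, giving $O(n+m)$ vertices in total, and the construction is clearly polynomial-time. Combining the three parts, a proper $3$-coloring of $G$ yields a satisfying assignment (read off from the literal vertices), and conversely any satisfying assignment of $\phi$ extends, via the chosen gadget colorings, to a proper $3$-coloring of $G$. The only nontrivial step is the correctness of the clause gadget, but this is a routine constant-size verification rather than a genuine obstacle.
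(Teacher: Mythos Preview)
Your construction is the standard textbook reduction and is correct; the palette triangle, the literal pairs, and the six-vertex clause gadget behave exactly as you describe, and the vertex count is indeed $O(n+m)$. Note, however, that the paper does not actually prove this proposition: it states it as a well-known fact and moves on, so there is no ``paper's own proof'' to compare against. Your write-up supplies precisely the folklore argument the paper is implicitly invoking.
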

Proposition~\ref{prop:sat-3col} implies that an algorithm for
\textsc{3-Coloring} with running time subexponential in the number of
vertices gives an algorithm for 3SAT that is subexponential in the
number of {\em clauses}. This is sufficient for our purposes, as the
Sparsification Lemma of Impagliazzo, Paturi and Zane \cite{MR1894519} shows that such
an algorithm already violates the ETH.
\begin{lemma}[\cite{MR1894519}]\label{lem:sparse}
Assuming the ETH, there is no $2^{o(m)}$ time algorithm for $m$-clause 3SAT.
\end{lemma}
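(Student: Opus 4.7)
The plan is to derive this from the Exponential Time Hypothesis together with the Sparsification Lemma of Impagliazzo, Paturi and Zane. The Sparsification Lemma asserts that for every $\epsilon>0$ there is a constant $c_\epsilon$ such that any 3SAT formula $\phi$ on $n$ variables can be rewritten as a disjunction $\phi \equiv \phi_1 \vee \dots \vee \phi_t$ of at most $t \leq 2^{\epsilon n}$ 3SAT formulas, each on the same variable set and with at most $c_\epsilon \cdot n$ clauses, and moreover the list $\phi_1,\dots,\phi_t$ can be produced in time $2^{\epsilon n}\cdot \mathrm{poly}(n)$. This is precisely the tool that converts an "in terms of clauses" bound into an "in terms of variables" bound.

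Suppose for contradiction that some algorithm $A$ decides $m$-clause 3SAT in time $2^{g(m)}\cdot \mathrm{poly}(m)$ with $g(m) = o(m)$. Given an arbitrary $\delta>0$, I would choose $\epsilon := \delta/2$ and let $c_\epsilon$ be the constant furnished by the Sparsification Lemma. To decide satisfiability of an $n$-variable instance $\phi$, produce the sparsified list $\phi_1,\dots,\phi_t$ and invoke $A$ on each $\phi_i$, accepting iff some $\phi_i$ is satisfiable. The total running time is at most
\[
2^{\epsilon n}\cdot 2^{g(c_\epsilon n)}\cdot \mathrm{poly}(n).
\]
Since $c_\epsilon$ is a fixed constant depending only on $\epsilon$ and $g(m)=o(m)$, for all sufficiently large $n$ we have $g(c_\epsilon n) \leq (\delta/2)\,n$, so the overall running time is bounded by $2^{\delta n}\cdot \mathrm{poly}(n)$. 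As $\delta>0$ was arbitrary, this decides $n$-variable 3SAT in time $2^{o(n)}$, contradicting the ETH.

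The main obstacle is really the Sparsification Lemma itself, whose proof is a delicate branching argument on "hearts" of the formula; here I invoke it as a black box from \cite{MR1894519}. The only remaining care is in the order of quantifiers: $\epsilon$ must be fixed before $n$ is taken large, so that the $2^{\epsilon n}$ blow-up from sparsification and the $2^{g(c_\epsilon n)}$ cost per sparsified instance together stay subexponential in $n$. Because $c_\epsilon$ influences the second factor only through the $o(\cdot)$ in $g$, picking $\epsilon=\delta/2$ and then letting $n$ grow is enough to close the argument.
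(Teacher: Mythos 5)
Your derivation is correct and is exactly the argument the paper has in mind: the paper states this lemma as a citation to Impagliazzo, Paturi and Zane, and the intended justification is precisely the black-box application of the Sparsification Lemma you give (sparsify into at most $2^{\epsilon n}$ formulas with $O_\epsilon(n)$ clauses each, run the assumed $2^{o(m)}$ algorithm on each, and contradict ETH), with the quantifier order handled just as you describe. No gaps; the only cosmetic looseness is the final ``$\delta$ arbitrary'' phrasing, which is the standard way of contradicting ETH's ``there exists $\delta>0$'' formulation.
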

Combining Proposition~\ref{prop:sat-3col} and Lemma~\ref{lem:sparse} gives
the following proposition:

\begin{proposition}\label{prop:3col}
Assuming the ETH, there is no $2^{o(n)}$ time algorithm for deciding whether 
an $n$-vertex graph is $3$-colorable.
 %and there is no $2^{o(m)}$ time algorithm for $m$-clause 3SAT.
\end{proposition}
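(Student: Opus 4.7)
The plan is to prove Proposition~\ref{prop:3col} by a direct contrapositive argument that chains together the two ingredients already stated in the excerpt: the linear-size reduction from 3SAT to \textsc{3-Coloring} (Proposition~\ref{prop:sat-3col}) and the Sparsification Lemma (Lemma~\ref{lem:sparse}). Concretely, I would assume for contradiction the existence of an algorithm $\mathcal{A}$ that decides 3-colorability of an $N$-vertex graph in time $2^{o(N)}$, and use it to build a $2^{o(m)}$ algorithm for $m$-clause 3SAT, contradicting Lemma~\ref{lem:sparse}.

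The key quantitative observation is that one may assume without loss of generality that in an $m$-clause 3SAT instance the number of variables $n$ satisfies $n \le 3m$: any variable not appearing in any clause is irrelevant and can be deleted in polynomial time. Then, given an $m$-clause instance $\phi$, I would first perform this cleanup so that $n = O(m)$, apply the polynomial-time reduction from Proposition~\ref{prop:sat-3col} to obtain a graph $G$ on $N = O(n+m) = O(m)$ vertices that is 3-colorable if and only if $\phi$ is satisfiable, and finally run $\mathcal{A}$ on $G$.

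The running time of the composed algorithm is polynomial plus $2^{o(N)} = 2^{o(m)}$, which is $2^{o(m)}$ overall. This contradicts Lemma~\ref{lem:sparse}, so no such $\mathcal{A}$ can exist, establishing the proposition.

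There is essentially no obstacle beyond bookkeeping: the only subtlety is making sure the $O(n+m)$ vertex count in Proposition~\ref{prop:sat-3col} translates into $o(N) \Rightarrow o(m)$, which is handled by the preprocessing that forces $n = O(m)$. No new macros or environments are needed.
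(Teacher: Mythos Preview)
Your proposal is correct and matches the paper's approach exactly: the paper simply states that the proposition follows from ``Combining Proposition~\ref{prop:sat-3col} and Lemma~\ref{lem:sparse}'' without spelling out the details, and your write-up is precisely the intended chaining of those two results (with the harmless additional observation that one may assume $n\le 3m$).
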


\subsection{\kkcl}

The first problem we investigate is the variant of the standard clique
problem where the vertices are the elements of a $k\times k$ table,
and the clique we are looking for has to contain exactly one element
from each row.

\problem{\kkcl}{A graph $G$ over the vertex set $[k]\times [k]$}
{$k$}{Is there a $k$-clique in $G$ with exactly one
element from each row?}
Note that the graph $G$ in the \kkcl\ instance has $O(k^2)$ vertices
at most $O(k^4)$ edges, thus the
size of the instance is $O(k^4)$.

\begin{theorem}\label{th:kkcl}
Assuming the ETH, there is no $2^{o(k\log k)}$ time algorithm for \kkcl.
\end{theorem}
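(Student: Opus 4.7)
The plan is a direct reduction from \threecol\ using Proposition~\ref{prop:3col}: given an $n$-vertex graph $G$, I will build an equivalent \kkcl\ instance with $k = \Theta(n/\log n)$, so that $k \log k = \Theta(n)$. Then a hypothetical $2^{o(k\log k)}$ algorithm for \kkcl\ would decide 3-colorability of $G$ in time $2^{o(n)} \cdot n^{O(1)}$, contradicting the ETH.

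First I would choose $k$ as the smallest integer for which $3^{\lceil n/k \rceil} \le k$; a short calculation shows $k = \Theta(n/\log n)$ suffices, and this value satisfies $k \log k = \Theta(n)$. Partition $V(G)$ arbitrarily into $k$ groups $V_1,\dots,V_k$, each of size at most $\lceil n/k \rceil$. For each group $V_i$, enumerate all proper 3-colorings of the induced subgraph $G[V_i]$; by the choice of $k$ there are at most $3^{|V_i|}\le k$ of them. Place these colorings as the vertices of row $i$ of the $[k]\times[k]$ table, padding any unused slots in the row with isolated dummy vertices so that every row contains exactly $k$ vertices.

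For the edges: for $i \ne i'$ and any two colorings represented by $(i,j)$ and $(i',j')$, insert an edge iff the two colorings are \emph{compatible}, meaning that no edge of $G$ with one endpoint in $V_i$ and one in $V_{i'}$ has its two endpoints receiving the same color under the union of the two local colorings. Dummy vertices receive no incident edges, so they cannot participate in any clique. It is then straightforward to check that $G$ admits a proper 3-coloring iff the constructed graph has a $k$-clique with exactly one vertex from each row: such a clique selects a proper local coloring for each $V_i$, and the pairwise compatibility of all pairs of rows guarantees that their union is a globally proper 3-coloring of $G$; conversely, any proper 3-coloring of $G$ restricts to a proper local coloring of each $V_i$, and these restrictions are pairwise compatible by construction.

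The only genuinely substantive step is the parameter balance — choosing $k$ so that the $3^{n/k}$ local colorings of each part just fit into $k$ columns while still making $k \log k = \Theta(n)$. The rest of the construction is routine: it runs in $\mathrm{poly}(n)$ time and produces an instance of size $O(k^4) = \mathrm{poly}(n)$, so any $2^{o(k\log k)}\cdot |I|^{O(1)}$ algorithm for \kkcl\ would indeed yield a $2^{o(n)}$ algorithm for \threecol, contradicting Proposition~\ref{prop:3col}.
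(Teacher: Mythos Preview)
Your proposal is correct and follows essentially the same approach as the paper: partition the vertices into $k$ groups of size roughly $n/k$, encode the proper 3-colorings of each group as the columns of a row, and connect compatible colorings across rows, with $k$ chosen so that $3^{\lceil n/k\rceil}\le k$ and hence $k\log k=\Theta(n)$. The paper's proof is virtually identical, down to the treatment of unused column slots as isolated vertices.
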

\begin{proof}
Suppose that there is an algorithm $\mathbb{A}$ that solves \kkcl\ in
$2^{o(k\log k)}$ time. We show that this implies that 
\threecol\ on a graph with $n$ vertices can be solved in time
$2^{o(n)}$, which contradicts the ETH by Proposition~\ref{prop:3col}.

Let $H$ be a graph with $n$ vertices. Let $k$ be the smallest integer
such that $3^{n/k+1}\le k$, or equivalently, $n\le k\log_3 k-k$. Note that
such a finite $k$ exists for every $n$ and  it is easy
to see that $k\log k=O(n)$ for the smallest such $k$. Intuitively, it
will be useful to think of $k$ as a value somewhat larger than $n/\log
n$ (and hence $n/k$ is somewhat less than $\log n$).

Let us partition the vertices of $H$ into $k$ groups $X_1$, $\dots$,
$X_k$, each of size at most $\lceil n/k \rceil$. For every $1\le i \le
k$, let us fix an enumeration of all the proper 3-colorings of
$H[X_i]$. Note that there are most $3^{\lceil n/k\rceil}\le 3^{n/k+1}
  \le k$ such 3-colorings for every $i$. We say that a proper 3-coloring
  $c_i$ of $H[X_i]$ and a proper 3-coloring $c_j$ of $H[X_j]$ are {\em
    compatible} if together they form a proper coloring of $H[X_i\cup
  X_j]$: for every edge $uv$ with $u\in X_i$ and $v\in X_j$, we have
  $c_i(u)\neq c_j(v)$.  Let us construct a graph $G$ over the vertex set
  $[k]\times [k]$ where vertices $(i_1,j_1)$ and $(i_2,j_2)$ with
  $i_1\ne i_2$ are adjacent if and only if the $j_1$-th proper
  coloring of $H[X_{i_1}]$ and the $j_2$-th proper coloring of
  $H[X_{i_2}]$ are compatible (this means that if, say, $H[X_{i_1}]$
  has less than $j_1$ proper colorings, then $(i_1,j_1)$ is an
  isolated vertex).

We claim that $G$ has a $k$-clique having exactly one vertex from each
row if and only if $H$ is 3-colorable. Indeed, a proper 3-coloring of $H$
induces a proper 3-coloring for each of $H[X_1]$, $\dots$,
$H[X_k]$. Let us select vertex $(i,j)$ if and only if the proper
coloring of $H[X_i]$ induced by $c$ is the $j$-th proper coloring of
$H[X_i]$. It is clear that we select exactly one vertex from each row
and they form a clique: the proper colorings of $H[X_i]$ and $H[X_j]$
induced by $c$ are clearly compatible. For the other direction,
suppose that $(1,\rho(1))$, $\dots$, $(k,\rho(k))$ form a $k$-clique
for some mapping $\rho:[k]\to[k]$. Let $c_i$ be the $\rho(i)$-th
proper 3-coloring of $H[X_i]$. The colorings $c_1$, $\dots$, $c_k$
together define a coloring $c$ of $H$. This coloring $c$ is a proper
3-coloring: for every edge $uv$ with $u\in X_{i_1}$ and $v\in X_{i_2}$, the
fact that $(i_1,\rho(i_1))$ and $(i_2,\rho(i_2))$ are adjacent means
that $c_{i_1}$ and $c_{i_2}$ are compatible, and hence $c_{i_1}(u)\neq c_{i_2}(v)$.

Running the assumed algorithm $\mathbb{A}$ on $G$ decides the
3-colorability of $H$. Let us estimate the running time of
constructing $G$ and running algorithm $\mathbb{A}$ on $G$. The graph
$G$ has $k^2$ vertices and the time required to construct $G$ is
polynomial in $k$: for each $X_i$, we need to enumerate at most $k$
proper 3-colorings of $G[X_i]$. Therefore, the total running time
is $2^{o(k\log k)}\cdot k^{O(1)}=2^{o(n)}$ (using that $k\log k=O(n)$). It follows that we have a
$2^{o(n)}$ time algorithm for \threecol\ on an $n$-vertex graph, contradicting the ETH.
\end{proof}

\kkpermcl\ is a more restricted version of \kkcl: in addition to requiring that the
clique contains exactly one vertex from each {\em row,} we also require that it
contains exactly one vertex from each {\em column.} In other words, the
vertices selected in the solution are $(1,\rho(1))$, $\dots$, $(k,\rho(k))$ for some
{\em permutation} $\rho$ of $[k]$. Given an instance $I$ of \kkcl\
having a solution $S$, if
we randomly reorder the vertices in each row, then with some
probability the reordered version of solution $S$ contains exactly one
vertex from each row and each column of the reordered instance. In
Theorem~\ref{th:kkpermcl}, we use this argument to show that a
$2^{o(k\log k)}$ time algorithm for \kkpermcl\ gives a {\em randomized}
$2^{o(k\log k)}$ time algorithm for \kkcl. Section~\ref{sec:derandomization} shows how the proof
of Theorem~\ref{th:kkpermcl} can be derandomized.

\begin{theorem}\label{th:kkpermcl}
If there is a $2^{o(k\log k)}$ time algorithm for \kkpermcl, then
there is a randomized $2^{o(m)}$ time algorithm  for $m$-clause 3SAT.
\end{theorem}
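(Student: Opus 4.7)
The plan is to give a randomized reduction from \kkcl\ to \kkpermcl, and then chain it with the reductions already used in the proof of Theorem~\ref{th:kkcl} (which go from 3SAT through \threecol\ to \kkcl). A solution of \kkcl\ corresponds to an arbitrary mapping $\rho : [k] \to [k]$ while a solution of \kkpermcl\ requires $\rho$ to be a permutation, so the natural idea is to randomly permute the column labels within each row and hope that a fixed \kkcl\ solution is turned into a permutation.

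Concretely, given a \kkcl\ instance $(G,k)$, for each row $i\in[k]$ draw a uniformly random permutation $\pi_i$ of $[k]$, independently across rows, and let $G'$ be the graph on $[k]\times[k]$ in which $(i,\pi_i(j))$ and $(i',\pi_{i'}(j'))$ are adjacent iff $(i,j)$ and $(i',j')$ are adjacent in $G$. Any permutation clique in $G'$ yields, by inverting the $\pi_i$, a clique in $G$ with exactly one vertex per row, so $(G',k)$ is a YES-instance of \kkpermcl\ only if $(G,k)$ is a YES-instance of \kkcl. Conversely, given a solution $(1,\rho(1)),\dots,(k,\rho(k))$ of $(G,k)$, the corresponding clique $(1,\pi_1(\rho(1))),\dots,(k,\pi_k(\rho(k)))$ in $G'$ has one vertex per row by construction and is a permutation clique exactly when the values $\pi_i(\rho(i))$ are pairwise distinct; since each $\pi_i(\rho(i))$ is uniform on $[k]$ and the $\pi_i$'s are independent, this happens with probability exactly $k!/k^{k}$, which by Stirling equals $\Theta(\sqrt{k}\cdot e^{-k}) = 2^{-O(k)}$.

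To amplify, one repeats the random relabeling $N = \lceil k^{k}/k!\rceil = 2^{O(k)}$ times, runs the hypothetical $2^{o(k\log k)}$ algorithm for \kkpermcl\ on each of the $N$ instances, and accepts iff any call accepts. A YES-instance is detected with constant probability and a NO-instance is never accepted, and the total running time is $N\cdot 2^{o(k\log k)}\cdot k^{O(1)} = 2^{o(k\log k)}$. Composing this with the reduction inside the proof of Theorem~\ref{th:kkcl} (which produces a \kkcl\ instance of parameter $k$ with $k\log k = O(n)$ from an $n$-vertex \threecol\ instance) and with the linear-size reduction of Proposition~\ref{prop:sat-3col} from 3SAT, one obtains a randomized algorithm for $m$-clause 3SAT running in time $2^{o(m)}$.

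The main thing to worry about is that the amplification factor $2^{O(k)}$ does not swallow the slack in the running time budget. Because $k!/k^k$ decays only exponentially in $k$ (not like $2^{-k\log k}$), the factor $2^{O(k)}$ is comfortably absorbed into $2^{o(k\log k)}$; independence of the $\pi_i$ across rows is what makes the distinctness probability compute cleanly to $k!/k^k$, so choosing them independently is essential.
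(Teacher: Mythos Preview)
Your argument is correct, but it is not quite the same as the paper's. The paper does not permute columns within each row by a random \emph{permutation}; it chooses a random \emph{function} $c:[k]\times[k]\to[k]$, marks with $\bigstar$ any entry that collides with another entry in its row, and drops all edges touching $\bigstar$-entries. This forces the paper to analyze two separate events (the clique entries receive pairwise distinct colors, and each clique entry avoids collisions within its row) and to argue their independence. Your use of independent uniform permutations $\pi_i$ makes the within-row collision issue disappear automatically, so you only need the single event that the values $\pi_i(\rho(i))$ are distinct, whose probability is exactly $k!/k^k$; this is cleaner for the randomized statement.

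What the paper's route buys is a smoother path to the subsequent derandomization (Section~\ref{sec:derandomization} and Corollary~\ref{cor:detpermcl}): the objects to be derandomized are colorings $[k]\times[k]\to[k+1]$ that properly color certain ``cactus-grid'' graphs, and such coloring families are built from perfect hash families of functions. Your permutation-per-row formulation would require, for derandomization, a family of tuples of permutations hitting every possible $\rho$, which is a slightly different combinatorial object; it can also be done, but the paper's function-based phrasing plugs directly into off-the-shelf hash-family constructions. For the randomized Theorem~\ref{th:kkpermcl} itself, your proof is entirely valid and a bit more elementary.
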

\begin{proof}
We show how to transform an instance $I$ of \kkcl\ into an instance $I'$ of
\kkpermcl\ with the following properties: if $I$ is a no-instance,
then $I'$ is a no-instance, and if $I$ is a yes-instance, then $I'$ is
a yes-instance with probability at least $2^{-O(k)}$.  
This means that if we perform this transformation $2^{O(k)}$ times and
accept $I$ as a yes-instance if and only at least one of the
$2^{O(k)}$ constructed instances is a yes-instance, then the
probability of incorrectly rejecting a yes-instance can be reduced to
an arbitrary small constant. Therefore, a $2^{o(k\log k)}$ time
algorithm for \kkpermcl\ implies a randomized $2^{O(k)}\cdot
2^{o(k\log k)}=2^{o(k\log k)}$ time
algorithm for \kkcl.

Let $c(i,j):[k]\times[k]\to [k]$ be a mapping chosen uniform at
random; we can imagine $c$ as a coloring of the $k\times k$ vertices.
Let $c'(i,j)=\bigstar$ if there is a $j'\neq j$ such that
$c(i,j)=c(i,j')$ and let $c'(i,j)=c(i,j)$ otherwise (i.e., if
$c(i,j)=x\neq \bigstar$, then no other vertex has color $x$ in row
$i$). The instance $I'$ of \kkpermcl\ is constructed the following
way: if there is an edge between $(i_1,j_1)$ and $(i_2,j_2)$ in
instance $I$ and $c'(i_1,j_1),c'(i_2,j_2)\neq \bigstar$, then we add an
edge between $(i_1,c'(i_1,j_1))$ and $(i_2,c'(i_2,j_2))$ in instance
$I'$. That is, we use mapping $c$ to rearrange the vertices in each
row. If vertex $(i,j)$ clashes with some other vertex in the same row
(that is, $c(i,j)=\bigstar$), then all the edges incident to $(i,j)$ are
thrown away.

Suppose that $I'$ has a $k$-clique $(1,\rho(1))$, $\dots$,
$(k,\rho(k))$ for some permutation $\rho$ of $[k]$. For every $i$,
there is a unique $\delta(i)$ such that $c'(i,\delta(i))=\rho(i)$:
otherwise $(i,\rho(i))$ is an isolated vertex in $I'$. It is easy to
see that $(1,\delta(i))$, $\dots$, $(k,\delta(k))$ is a clique in $I$:
vertices $(i_1,\delta(i_1))$ and $(i_2,\delta(i_2))$ have to be
adjacent, otherwise there would be no edge between $(i_1,\rho(i_1))$
and $(i_2,\rho(i_2))$ in $I'$. Therefore, if $I$ is a no-instance,
then $I'$ is a no-instance as well.

Suppose now that $I$ is a yes-instance: there  is a clique
$(1,\delta(1))$, $\dots$, $(k,\delta(k))$ in $I$. Let us estimate the
probability that the following two events occur:
\begin{itemize}
\setlength\itemsep{-.9mm}
\item[(1)] For every $1\le i_1<i_2\le k$, $c(i_1,\delta(i_1))\neq
  c(i_2,\delta(i_2))$.
\item[(2)] For every $1\le i \le k$ and $1\le j \le k$ with $j\neq
  \delta(i)$, $c(i,\delta(i))\neq c(i,j)$.
\end{itemize}
Event (1) means that $c(1,\delta(1))$, $\dots$, $c(k,\delta(k))$ is a
permutation of $[k]$. Therefore, the probability of (1) is
$k!/k^k=e^{-O(k)}$ (using Stirling's Formula). For a particular $i$,
event (2) holds if $k-1$ randomly chosen values are all different from $c(i,\delta(i))$. Thus
the probability that (2) holds for a particular $i$ is
$(1-1/k)^{-(k-1)}\ge e^{-1}$ and the probability that (2) holds for
every $i$ is at least $e^{-k}$.  Furthermore, events (1) and (2) are
independent: we can imagine the random choice of the mapping $c$ as
first choosing the values $c(1,\delta(1))$, $\dots$, $c(k,\delta(k))$
and then choosing the remaining $k^2-k$ values. Event (1) depends only
on the first $k$ choices, and for any fixed result of the first $k$
choices, the probability of event (2) is the same. Therefore, the
probability that (1) and (2) both hold is $e^{-O(k)}$. 

Suppose that (1) and (2) both hold. Event (2) implies that
$c(i,\delta(i))=c'(i,\delta(i))
\neq \bigstar$ for every $1\le i \le k$. Event (1) implies that if we set
$\rho(i):=c(i,\delta(i))$, then $\rho$ is a permutation of
$[k]$. Therefore, the clique $(1,\rho(1))$, $\dots$, $(k,\rho(k))$ is a
solution of $I'$, as required.
\end{proof}
In the next section, we show that instead of random colorings, we can
use a certain deterministic family of colorings. This will imply:
\begin{corollary}\label{cor:detpermcl}
Assuming the ETH, there is no $2^{o(k\log k)}$ time algorithm for
\kkpermcl.
\end{corollary}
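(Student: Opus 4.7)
The plan is to follow the proof of Theorem~\ref{th:kkpermcl} verbatim, but to replace the random coloring $c:[k]\times[k]\to[k]$ with a coloring drawn from an explicitly constructed deterministic family $\mathcal{F}$ of such colorings. The property $\mathcal{F}$ needs is: for every $\delta:[k]\to[k]$, there exists $c\in\mathcal{F}$ satisfying both events (1) and (2) from the proof of Theorem~\ref{th:kkpermcl}. Given such an $\mathcal{F}$ of size $|\mathcal{F}|=2^{O(k)}\cdot k^{O(1)}$ that is enumerable in time $|\mathcal{F}|\cdot k^{O(1)}$, the randomized reduction of Theorem~\ref{th:kkpermcl} becomes deterministic: produce the $|\mathcal{F}|$ transformed instances of \kkpermcl\ and accept iff at least one of them is a yes-instance. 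Correctness of this reduction is inherited from Theorem~\ref{th:kkpermcl}: the ``no-instance preserved'' direction does not use randomness, and the ``yes-instance preserved'' direction now follows from the defining property of $\mathcal{F}$ applied to the solution $\delta$ of the input \kkcl\ instance.

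The main technical step is the construction of $\mathcal{F}$. I would start from a Naor--Schulman--Srinivasan style $(n,k)$-perfect hash family of size $2^{O(k)}\log n$: taking $n=k^2$ and applying such a family $\mathcal{H}$ to the set $D_\delta=\{(i,\delta(i)):i\in[k]\}$ (which has exactly $k$ elements for every $\delta$) immediately yields, for some $h\in\mathcal{H}$, a coloring satisfying condition (1). Ensuring condition (2) simultaneously is more delicate, since perfect hashing alone does not control collisions \emph{within a row} between $(i,\delta(i))$ and the other $k-1$ positions $(i,j)$ with $j\neq\delta(i)$. The plan is to compose $\mathcal{H}$ with an auxiliary family that ``isolates'' the diagonal element in each row; the probabilistic argument in the proof of Theorem~\ref{th:kkpermcl} shows that a single random coloring already enjoys both properties with probability $e^{-O(k)}$, so in principle only a family of size $e^{O(k)}=2^{O(k)}$ is needed, and a suitable splitter-based combination of $\mathcal{H}$ with per-row isolating colorings should realize this explicitly.

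The hard part will be arranging the combination so that the total size remains $2^{O(k)}\cdot k^{O(1)}$: the naive product of $\mathcal{H}$ with an independent family of $k^k$ per-row choices blows up to $2^{O(k\log k)}$, matching exactly the bound we are trying to beat. Overcoming this requires exploiting algebraic/combinatorial structure in the spirit of color coding and chromatic coding (as the paper hints), not an independent random choice per row. Once $\mathcal{F}$ is in hand, the corollary follows immediately: a hypothetical $2^{o(k\log k)}$-time algorithm for \kkpermcl, run once for each $c\in\mathcal{F}$, yields a deterministic $2^{o(k\log k)}$-time algorithm for \kkcl, which by Theorem~\ref{th:kkcl} contradicts the ETH.
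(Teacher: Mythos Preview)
Your overall reduction framework is correct and matches the paper exactly: replace the random coloring in Theorem~\ref{th:kkpermcl} by a small explicit family $\mathcal{F}$, run the hypothetical \kkpermcl\ algorithm once per coloring, and invoke Theorem~\ref{th:kkcl}. The ``no'' direction and the final contradiction are exactly as you describe.

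The gap is the construction of $\mathcal{F}$. You correctly isolate the requirement (for every $\delta:[k]\to[k]$, some $c\in\mathcal{F}$ satisfies events (1) and (2)), but your proposed route --- a global $(k^2,k)$-perfect hash family to secure (1), composed with some unspecified per-row ``isolator'' for (2) --- is not carried out, and you yourself note that the natural product blows up to $2^{O(k\log k)}$. The paper's construction avoids this blowup by working \emph{column-wise} rather than row-wise. Concretely, it observes that requiring (1) and (2) simultaneously is exactly requiring a proper coloring of the ``cactus-grid'' graph (the clique $\{(i,\delta(i))\}_i$ together with, for each $i$, all edges from $(i,\delta(i))$ to the rest of row $i$). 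It then guesses the set $S\subseteq[k]$ of columns that the clique meets and the multiplicities $(k_1,\dots,k_\ell)$ with $\sum k_i=k$, assigns each column in $S$ a disjoint block of $k_i$ colors (so vertices in different columns automatically receive different colors, which handles (2) for free whenever $j\neq\delta(i)$ lies in a different column), uses a $(k,k_i)$-perfect hash family inside each column to separate the clique vertices there, and colors all columns outside $S$ with the single extra color $k+1$. This gives $|\mathcal{F}|\le 2^k\cdot 4^k\cdot\prod_i 11^{k_i}\log k = 2^{O(k\log\log k)}$, which is already $2^{o(k\log k)}$ and hence suffices for the corollary; the paper then refines this to $2^{O(k)}$ with an additional range-reduction trick, but that refinement is not needed here.

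Two minor points. First, you only need $|\mathcal{F}|=2^{o(k\log k)}$, not $2^{O(k)}$, so your worry about hitting exactly single-exponential size is stronger than necessary. Second, the paper's family maps into $[k+1]$ rather than $[k]$; the extra color is harmless for the reduction (vertices receiving it simply become isolated in the transformed instance, as with the $\bigstar$ in the randomized proof).
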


%\input{derandS}
%!TEX root = mainS.tex
\subsubsection{Derandomization}\label{sec:derandomization}
In this section, we give a coloring family that can be used instead of
the random coloring in the proof of Theorem~\ref{th:kkpermcl}.  We
call a graph $G$ to be a {\em cactus-grid graph} if the vertices are
elements of a $k\times k$ table and the graph precisely consists of a
clique containing exactly one vertex from each row and each 
vertex in the clique is adjacent to every other vertex in its row.
There are no other edges in the graph, thus the graph has exactly
$\binom{k}{2}+k(k-1)$ edges.
%Vertices in row $i$ is denoted by $\{(i-1)+1,\ldots,(i-1)+k\}$. 
%We are interested in finding a chromatic family for cactus-grid graphs. 
We are interested in  a coloring family ${\cal
  F}=\left\{f:[k]\times [k]\rightarrow [k+1] \right\}$ with the
property that for any cactus-grid graph $G$ with vertices from
$k\times k$ table, there exists a function $f\in {\cal F}$ such that
$f$ properly colors the vertices of $G$. We call such a $\cal F$ as a
coloring family for cactus-grid graphs.

Before we proceed to construct a coloring family $\cal F$ of size
$2^{O(k \log \log k)}$, we explain how this can be used to obtain the derandomized
version of Theorem~\ref{th:kkpermcl}, the
Corollary~\ref{cor:detpermcl}.  Suppose that the instance $I$ of
\kkcl\ is a yes-instance. Then there is a clique $(1,\delta(1))$,
$\dots$, $(k,\delta(k))$ in $I$. Consider the cactus-grid graph $G$
consisting of clique $(1,\delta(1))$, $\dots$, $(k,\delta(k))$ and for
each $1\le i\le k$, the edges between $(i,\delta(i))$ and $(i,j)$ for
every $j\neq \delta(i)$. Let $f\in {\cal F}$ be a proper coloring of
$G$. Now since $(1,\delta(1))$, $\dots$, $(k,\delta(k))$ is a clique
in $G$ they get  distinct colors by $f$ and since all the vertices in
the row $i$, $(i,j)$, $j\neq \delta(i)$, are adjacent to
$(i,\delta(i))$ we have that $f((i,j))\neq f(i,\delta(i))$.  So if we
use this $f$ in place of $c(i,j)$, the random coloring used in the
proof of Theorem~\ref{th:kkpermcl}, then events (1) and (2) hold and
we know that the instance $I'$ obtained using $f$ is a yes-instance of
\kkpermcl. Thus we know that an instance $I$ of \kkcl\ has a clique of
size $k$ containing exactly one element from each row if and only if
there exists an $f\in {\cal F}$ such that the corresponding instance
$I'$ of \kkpermcl\ has a clique of size $k$ such that it contains
exactly one element from each row and column. This together with the
fact that the size of ${\cal F}$ is bounded by $2^{O(k \log \log k)}$ imply the
Corollary~\ref{cor:detpermcl}. 

To construct our deterministic coloring family we also 
 need a few known results on perfect hash families. Let ${\cal
  H}=\{f:[n]\rightarrow [k] \} $ be a set of functions such that for
all subsets $S$ of size $k$ there is a $h\in {\cal H}$ such that it is
one-to-one on $S$. The set $\cal H$ is called {\em $(n,k)$-family of
  perfect hash functions}. There are some known constructions for set
$\cal H$. We summarize them below.
\begin{proposition}[\cite{MR1411787,NaorSS95}]
There exists explicit construction $\cal H$ of $(n,k)$-family of perfect hash functions of size $O(11^k \log n)$. 
There is also another explicit construction $\cal H$ of $(n,k)$-family of perfect hash functions of size 
$O(e^k k^{O(\log k)} \log n)$.  
\end{proposition}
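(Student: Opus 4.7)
Both bounds in the proposition are standard derandomization results from the cited papers; the proposal is to sketch the template common to both constructions rather than reconstruct all technical details. The starting observation is that a uniformly random $f:[n]\to[k]$ is injective on any fixed $k$-subset $S$ with probability $k!/k^k=e^{-\Theta(k)}$, so a family of $\Theta(e^k\log n)$ independent random functions covers every $k$-set with high probability by the union bound. The task of both constructions is to replace this random family by an explicit one of essentially the same size.

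Both references follow the same two-step template. First, one constructs an $(n,k)$-\emph{splitter}: a small family of partitions of $[n]$ into $k$ blocks such that every $k$-subset is separated by some partition in the family, meaning each block intersects the subset in at most one element. Second, one composes the splitter with a small family of functions $[k]\to[k]$ that is injective on every $k$-set; the trivial choice of all $k^k$ maps suffices, but smaller explicit families are also available. The resulting perfect hash family has size equal to the product of the two component sizes, and correctness is immediate: given a $k$-set $S$, pick a partition that separates $S$ into singletons, and then a hash function that is injective on the image.

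For the $O(11^k\log n)$ bound one uses the Alon--Yuster--Zwick style construction, which builds the splitter from $q$-ary error-correcting codes and yields the constant $11$ in the exponent after optimizing the code parameters. For the $O(e^k k^{O(\log k)}\log n)$ bound one invokes the near-optimal splitters of Naor, Schulman, and Srinivasan, which are built from $\epsilon$-biased probability spaces and match the entropy lower bound $\Omega(e^k\log n)$ up to a factor quasi-polynomial in $k$.

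The main obstacle in a from-scratch proof is the combinatorial and algebraic analysis of the splitters themselves---in particular, the code-based argument behind the $11^k$ constant and the $\epsilon$-biased space construction behind the $e^k k^{O(\log k)}$ bound---both of which are the main technical content of the cited papers and which we therefore treat as black boxes, invoking them only through the stated size bounds.
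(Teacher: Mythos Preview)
The paper does not prove this proposition at all: it is stated as a citation of known results from \cite{MR1411787,NaorSS95} and used as a black box in the subsequent lemmas. So there is no ``paper's own proof'' to compare against; your choice to sketch the cited constructions and then invoke them as black boxes is entirely in line with how the paper treats the statement.

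One remark on your sketch itself. Your definition of an $(n,k)$-splitter---a family of partitions of $[n]$ into $k$ blocks such that every $k$-subset is separated---is already the definition of an $(n,k)$-perfect hash family, so the two-step template as you wrote it is circular (and the second step, composing with maps $[k]\to[k]$ injective on every $k$-set, is vacuous since the only such maps are permutations). The actual Naor--Schulman--Srinivasan template uses $(n,k,\ell)$-splitters with $\ell\neq k$: first an $(n,k,k^2)$-family to reduce the universe to $[k^2]$ (this is where the $\log n$ factor enters, via FKS-style hashing or codes), and then a recursive splitter construction on $[k^2]$ that breaks $k$ into roughly $\log k$ levels of size-$\sqrt{k}$ pieces (this is where the $k^{O(\log k)}$ factor enters). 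Since you explicitly treat the internals as black boxes this does not affect your use of the proposition, but the decomposition you wrote down is not the one that actually yields the bounds.
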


Now we are ready to state the main lemma of this section. 

\begin{lemma}
%$[\star]$\footnote{The proofs marked with $[\star]$ has been moved to appendix due to space restrictions.}
\label{lem:cactuscolfam}
There exists explicit construction of coloring family $\cal F$  for cactus-grid graphs of size $2^{O(k \log \log k)}$. 
 \end{lemma}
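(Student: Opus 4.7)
The plan is to construct $\mathcal{F}$ explicitly by combining two perfect hash families from the Proposition, with the ``safety color'' $k+1$ serving as a wildcard.

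First I would reduce the problem to the following: construct a family $\mathcal{H}$ of functions $h:[k]\times[k]\to [k]$ such that for every map $\delta:[k]\to[k]$, some $h\in\mathcal{H}$ satisfies (i) $h$ is injective on $S_\delta:=\{(i,\delta(i)):i\in[k]\}$ and (ii) for every row $i$, the value $h(i,\delta(i))$ does not appear in row $i$ at any column other than $\delta(i)$. Given such an $h$, the coloring $f(i,j):=h(i,j)$ when $h(i,j)$ is unique in row $i$, and $f(i,j):=k+1$ otherwise, is a proper $(k+1)$-coloring of the cactus-grid graph $G$ determined by $\delta$: the clique edges are handled by (i) together with (ii), and a row edge $(i,\delta(i))(i,j)$ is safe because either $f(i,j)=k+1\ne h(i,\delta(i))$, or else $f(i,j)=h(i,j)$ is row-unique and hence different from $h(i,\delta(i))$ by (ii).

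Next I would apply the explicit $(k^2,k)$-perfect hash family $\mathcal{H}_0$ of size $2^{O(k)}$ from the Proposition to take care of condition~(i): for every $\delta$, some $h_0\in \mathcal{H}_0$ is injective on $S_\delta$. Condition~(ii) is not implied by the PHF property, since it concerns values of $h_0$ outside the $k$-subset being hashed. To enforce (ii), I would combine $\mathcal{H}_0$ with an auxiliary layer built from an explicit $(k,\log k)$-perfect hash family $\mathcal{C}$ of functions $[k]\to[\log k]$, of size $k^{O(1)}$, again supplied by the Proposition. For each $c\in\mathcal{C}$ and each tuple $(b_1,\dots,b_k)\in[\log k]^k$, the data $(c,(b_i))$ determines for each row~$i$ a candidate set $c^{-1}(b_i)\subseteq[k]$ for the column $\delta(i)$, of size roughly $k/\log k$; for each such tuple and each $h_0\in\mathcal{H}_0$, I would define a perturbed function $h_{h_0,c,(b_i)}$ that agrees with $h_0$ on the candidate cells and deterministically reassigns the non-candidate cells in each row so as to eliminate row-collisions with the candidate cells. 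The resulting family has size
$$|\mathcal{H}_0|\cdot |\mathcal{C}|\cdot (\log k)^k \;=\; 2^{O(k)}\cdot k^{O(1)}\cdot 2^{k\log\log k}\;=\;2^{O(k\log\log k)},$$
as required.

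The main obstacle, which I expect to require the most care, is verifying that for every $\delta$ at least one triple $(h_0,c,(b_i))$ simultaneously achieves (i) and (ii): the PHF property of $\mathcal{C}$ is what guarantees that the bucket enumeration over $(b_1,\dots,b_k)$ can ``single out'' $\delta(i)$ among its row candidates, in direct analogy with the randomized reordering argument used in the proof of Theorem~\ref{th:kkpermcl}, but derandomized through the layered PHF structure. The $\log\log k$ rather than $\log k$ in the final exponent is exactly the gain one obtains from $\mathcal{C}$: it reduces the na\"ive $k^k$ enumeration over all choices of $\delta$ to only $(\log k)^k=2^{k\log\log k}$ bucket profiles, at the price of having to carefully verify combinatorial compatibility between the two PHF layers.
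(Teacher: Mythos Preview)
Your reduction of the problem to finding a family of functions $h$ satisfying conditions (i) and (ii) is fine and matches the paper's framing. However, the construction you propose to achieve (i) and (ii) has a genuine gap, and your approach is quite different from the paper's.

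The gap is in condition (ii). Your perturbed function $h_{h_0,c,(b_i)}$ agrees with $h_0$ on the candidate cells of row $i$ (those with column in $c^{-1}(b_i)$) and only reassigns the \emph{non}-candidate cells. But the candidate set $c^{-1}(b_i)$ has size roughly $k/\log k$, and nothing prevents $h_0$ from assigning the same value to $(i,\delta(i))$ and to some other candidate $(i,j')$ with $j'\in c^{-1}(b_i)$, $j'\ne\delta(i)$. A $(k^2,k)$-perfect hash family gives you no control whatsoever over the values of $h_0$ outside the distinguished $k$-set $S_\delta$, so such intra-row collisions among candidates can and will occur. Your perturbation does not touch them. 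Moreover, the perfect-hash property of $\mathcal{C}$ (injectivity on $\log k$-subsets of $[k]$) is never actually used in your argument; you only use that $c$ partitions $[k]$ into $\log k$ buckets, which any function does. The claim that $\mathcal{C}$ can ``single out $\delta(i)$ among its row candidates'' is therefore unsubstantiated.

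The paper's construction avoids this difficulty entirely by exploiting the \emph{column} structure rather than trying to repair rows. One guesses the set $S\subseteq[k]$ of columns meeting the clique ($2^k$ choices), then the profile $(k_1,\dots,k_\ell)$ of how many clique vertices lie in each column of $S$ (a composition of $k$, at most $4^k$ choices), and finally for each column $s_j\in S$ a function from a $(k,k_j)$-perfect hash family. Each column in $S$ is assigned a \emph{disjoint} block of $k_j$ colors via offsets, and columns outside $S$ get color $k+1$. The point is that the two endpoints of any row edge $(i,\delta(i))(i,j)$ lie in different columns, so either $j\notin S$ (color $k+1$) or they receive colors from disjoint blocks; no row collision is possible. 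The per-column PHFs then give $\prod_j O(11^{k_j}\log k)\le 2^{O(k)}(\log k)^k=2^{O(k\log\log k)}$.
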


\begin{proof}
Our idea for deterministic coloring family  $\cal F$ for cactus-grid graphs is to keep $k$ functions 
$f_1,\ldots, f_k$ where each $f_i$ is an element of a $(k,k')$-family of perfect hash functions for 
some $k'$ and use it to map the elements of $\{i\}\times k$ (the column $i$). We guess the number of vertices of $G$ that appear in each column, and we reserve that many private colors for the column so that these colors are not used on the vertices of any other columns. 
This will ensure that we get the desired coloring family.  
%It is clear 
%that if we can achieved this then we are done. 
We make our intuitive idea more precise below. A description of a function $f\in {\cal F}$ consists of a tuple having 
\begin{itemize}
\item a set $S\subseteq [k]$;
\item a tuple $(k_1,k_2,\ldots,k_\ell)$ where $k_i\geq 1$, $\ell=|S|$ and $\sum_{i=1}^\ell k_i=k$;
\item $\ell$ functions $f_1,\ldots,f_\ell$ where $f_i\in {\cal H}_i$ and ${\cal H}_i$ is a $(k,k_i)$-family of 
      perfect hash functions. 
\end{itemize}
The set $S$ tells us which columns the clique intersects. Let the elements of $S=\{s_1,\ldots,s_\ell\}$ be sorted 
in increasing order, say $s_1<s_2 < \dots <s_{\ell}$. Then the tuple  $(k_1,k_2,\ldots,k_\ell)$ tells 
us that the column $s_j$, $1\leq j \leq \ell$, contains $k_j$ vertices from the clique. Hence 
with this interpretation, given a tuple $(S,(k_1,\ldots,k_\ell),f_1,\ldots,f_\ell)$ we define the coloring 
function $g:[k]\times [k] \rightarrow [k]$ as follows. Every element in $[k]\times \{1,\ldots,k\}\setminus S$ is mapped 
to $k+1$. Now for vertices in $[k] \times \{s_j\}$ (vertices in column $s_j$), we define $g(i,s_j)=f_j(i)+\sum_{1\leq i<j} k_i$. 
We do this for every $j$ between $1$ and $\ell$. This concludes the description. Now we show that it is indeed a coloring family 
for cactus-grid graphs. Given a cactus grid graph $G$, we first look at the columns it intersects and that forms our set $S$ and 
then the number of vertices it intersects in each column makes the the tuple $(k_1,k_2,\ldots,k_\ell)$. Finally for each of 
the columns there exists a function $h$ in the perfect $(k,k_i)$-hash family that maps the 
elements of clique in this column one to one with $[k_i]$; we store this function corresponding to this column. Now we show that the function $g$ corresponding to this tuple properly 
colors $G$. The function $g$ assigns different values from $[k]$ to the columns in $S$ and hence 
we have that the vertices of clique gets distinct colors as in each column we have  
a function $f_i$ that is one-to-one on the vertices of $S$. Now we look at the 
edge with both end-points in the same row. If any of the end-point occurs in column that is not in $S$, then we know that it has been 
assigned $k+1$ while the vertex from the clique has been assigned color from $[k]$. If both end-points are from $S$, then the 
offset we use to give different colors to vertices in these columns ensures that these end-points get different colors. This shows that 
$g$ is indeed a proper coloring of $G$. 
This shows that for every cactus-grid graph we have a function 
$g\in \cal F$. Finally, the bound on the size of $\cal F$ is as follows, 
\begin{eqnarray}\label{eqn:firstcolbd}
 2^k 4^k \prod_{i=1}^\ell (11^{k_i}\log k)\leq 2^{O(k)}(\log k)^\ell \leq 2^{O(k \log \log k)}. 
\end{eqnarray}
This concludes the proof. 
\end{proof}

% The bound achieved in Equation~\ref{eqn:firstcolbd} 
% on the size of $\cal F$ is sufficient for our purpose but it is not as small as $2^{O(k)}$ 
% that one can obtain using a simple application of probabilistic methods. We provide a $\cal F$ of size $2^{O(k)}$ in Appendix~\ref{apx:hf} 
% which could be of independent algorithmic interest.
 
The bound achieved in Equation~\ref{eqn:firstcolbd} 
on the size of $\cal F$ is sufficient for our purpose but it is not as small as $2^{O(k)}$ 
that one can obtain using a simple application of probabilistic methods. We provide a family $\cal F$ of size $2^{O(k)}$ below  
which could be of independent algorithmic interest.

\begin{lemma}\label{lem:cactuscolfamnew}
There exists explicit construction of coloring family $\cal F$  for cactus-grid graphs of size $2^{O(k)}$. 
 \end{lemma}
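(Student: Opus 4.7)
The plan is to remove the $(\log k)^{\ell}$ factor appearing in the bound~\eqref{eqn:firstcolbd} of Lemma~\ref{lem:cactuscolfam}, which is all that stands between that proof and a $2^{O(k)}$ bound. This factor arises from using a separate perfect hash family per column $s_j$, each of which contributes a $\log k$ term. I would instead treat the clique of a cactus-grid graph as a single $k$-subset of the $k^2$-element grid and perform one unified derandomization, rather than decomposing by columns.

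First, a probabilistic argument gives existence. For a fixed cactus-grid graph $G$ with clique $\{(i,\delta(i))\}_{i\in[k]}$ and a uniformly random coloring $f:[k]\times[k]\to[k+1]$, Stirling's formula gives that event~(1) of the proof of Theorem~\ref{th:kkpermcl} (the clique vertices receive pairwise distinct colors) occurs with probability $(k+1)!/(k+1)^k=e^{-\Theta(k)}$, and, conditionally, event~(2) (each $(i,\delta(i))$ receives a color different from every other cell in row $i$) holds with probability at least $(1-1/(k+1))^{k(k-1)}=e^{-\Theta(k)}$. Thus a random $f$ is a proper coloring of $G$ with probability $e^{-\Theta(k)}$; since there are only $k^k$ cactus-grid graphs, a union bound shows that a random family of size $M=2^{ck}$ for a sufficiently large absolute constant $c$ covers them all with positive probability.

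To turn this into an explicit construction I would appeal to the splitter machinery of Naor--Schulman--Srinivasan, which yields explicit $(n,k)$-perfect hash families of size $2^{O(k)}\log n$. Instantiated with $n=k^2$, this produces an explicit family $\mathcal{H}$ of size $2^{O(k)}$ that is injective on the clique of every cactus-grid graph, handling event~(1) uniformly. Event~(2) can then be enforced by composing $\mathcal{H}$ with a second small family of ``row refinements''---also of size $2^{O(k)}$---obtained by derandomizing the per-row uniqueness probability estimate via the method of conditional expectations or by a second layer of splitters applied to the row-collision structure; the product family has size $2^{O(k)}\cdot 2^{O(k)}=2^{O(k)}$.

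The main obstacle is to argue that these two derandomizations truly compose, that is, that for every cactus-grid graph there is a single pair in the product family satisfying both events simultaneously. This reduces to the observation that the per-row uniqueness constraints of event~(2) are essentially independent given the hashed colors of the clique, so the row refinements can be chosen uniformly across all cactus-grid graphs without interfering with the injectivity guarantee coming from $\mathcal{H}$; once this is verified, the total explicit family is of the desired size $2^{O(k)}$.
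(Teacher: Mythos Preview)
Your probabilistic existence argument is fine, but the explicit construction has a genuine gap at exactly the point you flag as ``the main obstacle.'' A $(k^2,k)$-perfect hash family gives you functions $h:[k]^2\to[k]$ that are injective on the clique, but such an $h$ uses only $k$ colors on $k^2$ cells, so by pigeonhole each of the $k$ rows will typically contain repeated colors. Whether the clique vertex $(i,\delta(i))$ is one of the repeats depends on $\delta$, and your ``row refinement'' family would have to repair this without knowing $\delta$. You assert that the per-row constraints of event~(2) are ``essentially independent given the hashed colors of the clique,'' but that is precisely what needs to be proved, and neither conditional expectations nor another splitter layer obviously gives a family of size $2^{O(k)}$ that works uniformly for all $\delta$ while keeping the range bounded by $k+1$. (If you take a product of two families, the range of the resulting coloring is the product of the ranges, which is too large for the intended application; and collapsing the range destroys the guarantees.) As written, the proposal does not supply the missing mechanism.

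The paper's proof does not treat the clique as an unstructured $k$-subset of $[k]^2$ at all. It keeps the column-by-column scheme of Lemma~\ref{lem:cactuscolfam} and attacks the $(\log k)^{\ell}$ loss directly: using the Fredman--Koml\'os--Szemer\'edi mod-$p$ hashing lemma, it first reduces each column's domain from $[k]$ to roughly $[k_j^2]$ before applying the $(k_j^2,k_j)$-perfect hash family, so the per-column cost becomes $11^{k_j}\log k_j\le 12^{k_j}$ and the product telescopes to $2^{O(k)}$. The only overhead is choosing the FKS multipliers; a short averaging argument (bucketing columns by $\lceil\log k_j\rceil$ and picking $O(\log k)$ multipliers per bucket) shows this costs only $2^{O((\log k)^3)}$. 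The row/column asymmetry is essential here: event~(2) is handled structurally by assigning disjoint color intervals to distinct columns and color $k+1$ to unused columns, rather than by a separate derandomization step.
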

\begin{proof}
We incurred a factor of $(\log k)^\ell$ in the construction given in Lemma~\ref{lem:cactuscolfam} because for every column we applied hash functions from 
$[k]\rightarrow [k_i]$. Loosely speaking, if we could replace these by $[k_i^2]\rightarrow [k_i]$, then the size of family 
will be $11^{k_i} \log k_i \leq 12^{k_i}$ and then $ \prod_{i=1}^\ell 11^{k_i}\log k_i\leq 12^k$. 
Next we describe 
a procedure to do this by incurring an extra  cost of 
 $2^{O(\log ^3 k)}$. To do this we use the following classical lemma proved by 
Fredman, Koml\'os and Szemer\'edi~\cite{FredmanKS84}.
% while constructing hash functions. 
%The key lemma we use from ~\cite{FredmanKS84} is following: 
\begin{lemma}[\cite{FredmanKS84}]
\label{lemma:fks}
Let $W\subseteq[n]$ with $|W|=r$. The mapping $f:[n]\rightarrow [2r^2]$ such that $f(x)=(t x \mod p)\mod~2r^2$ is one-to-one 
when restricted to $W$ for at least half of the values $t\in [p]$.  Here $p$ is any prime between $n$ and $2n$.  
\end{lemma}
The idea is to use Lemma~\ref{lemma:fks} to choose multipliers ($t$ in the above description) appropriately. Let us fix a prime $p$ between 
$k$ and $2k$.   
Given a set $S$ and a tuple $(k_1,k_2,\ldots,k_\ell)$ we make a partition of set $S$ as follows  
$S_{i}=\{s_j ~|s_j\in S, 2^{i-1}< k_j\leq 2^i\}$ for $i \in \{0,\ldots, \lceil \log k \rceil\}$. 
Now let us fix a set $S_i$, by our construction we know that the size of intersection of the clique with each of the columns in $S_i$  
is roughly same. For simplicity of argument, let us fix a clique $W$ of some cactus grid graph $G$. 
%and let its intersection to the columns in $S_i$ be $W_1, \ldots ,W_s$ where $s=|S_i|$. 
Consider a bipartite graph $(A,B)$ where $A$ contains a vertex for each column in $S_i$ and $B$ consists of 
numbers from $[p]$. Now we give an edge between vertex $a\in A$ and $b\in B$ if we can use $b$ as a multiplier in 
Lemma~\ref{lemma:fks}, that is, the map $f(x)=(bx~ \mod~p)\mod~2^{2i+1}$ 
is one-to-one when restricted to the vertices of the clique $W$ to the column $a$. 
%, where 
%$W'$ is intersection of clique $W$ with respect to column $a$. 

Observe that because of Lemma~\ref{lemma:fks}, every vertex 
in $A$ has degree at least $p/2$ and hence there exists a vertex $b \in B$ that can be used as a multiplier 
for at least half of the elements in the set $A$. We can repeat this argument by removing a vertex 
$b\in B$, that could be used as a multiplier for half of the vertices in $A$,  
and all the columns for which it can be multiplier.  This implies that there exits a set $X_i\subseteq [p]$ of size 
$\log |A|\leq \log k$ that could be used as a multiplier for every column in $A$.  
%able to do the range reduction as desired for columns in $S_i$. 
Now we give a description of a function $f\in {\cal F}$ that consists of a tuple having 
\begin{itemize}

\item a set $S\subseteq [k]$;
\item a tuple $(k_1,k_2,\ldots,k_\ell)$ where $k_i\geq 1$, $\ell=|S|$ and $\sum_{i=1}^\ell k_i=k$;
\item $((b^i_1,\ldots, b^{i}_q), (L^i_1,\ldots,L^i_q))$, $1\leq i \leq \lceil \log k \rceil $, $q=\lceil \log k \rceil$; 
Here $(L^i_1,\ldots,L^i_q)$ is a partition of $S_i$ and the interpretation is that for every column in 
$L_j^i$ we will use $b^i_j$ as a multiplier for range reduction; 
\item $\ell$ functions $f_1,\ldots,f_\ell$ where $f_i\in {\cal H}_i$ and ${\cal H}_i$ is a $(8k_i^2,k_i)$-family of 
      perfect hash functions. 
\end{itemize}
This completes the description. Now given a tuple 
$$(S,(k_1,\ldots,k_\ell),\{((b^i_1,\ldots, b^{i}_q), (L^i_1,\ldots,L^i_q))~|~ 1\leq i \leq \lceil \log k \rceil   \},  f_1,\ldots,f_\ell)$$ 
we define the coloring function $g:[k]\times [k] \rightarrow [k]$ as follows. Every element in $[k]\times \{1,\ldots,k\}\setminus S$ is mapped 
to $k+1$. Now for vertices in $[k] \times \{s_j\}$ (vertices in column $s_j$), we do as follows. 
Suppose $s_j\in L_\alpha^{\beta}$ then we define $g(i,s_j)=(\sum_{1\leq i<j} k_i)+f_j(((b_{\alpha}^{\beta} s_j)~\mod~p)\mod~ck_j^2 )$. 
We do this for every $j$ between $1$ and $\ell$. This concludes the description for $g$. Observe that given a vertex in column $s_j$ 
we first use the function in Lemma~\ref{lemma:fks} to reduce its range to roughly $O(k_j^2)$ and still preserving that for every subset $[k]$ of 
size at most $2k_j$ there is some multiplier which maps it injective. It is evident from the above description that this is indeed a 
coloring family of cactus grid graphs. The range of any function in $\cal F$ is $k+1$ and the size of this family is 
\[2^k 4^k  \prod_{i=1}^{\lceil \log k \rceil}(p)^{\log k}  \prod_{i=1}^{\lceil \log k \rceil}4^{\sum_{j=1}^{\lceil \log k \rceil} |L_j^i|}   \prod_{i=1}^\ell (11^{k_i}\log k_i) \leq 8^k (2k)^{\log k} 4^k 12^k \leq 2^{O(k+(\log
  k)^3)}\leq 2^{O(k)}. \]
The last assertion follows from the fact that $\sum_{i=1}^{\lceil \log k \rceil}\sum_{j=1}^{\lceil \log k \rceil} |L_j^i| \leq k$ and $\sum_{i=1}^\ell k_i=k$. This concludes the proof. 
\end{proof}

\subsection{\kkis}\label{sec:kkis}
The lower bounds in Section~\ref{th:kkcl} for \textsc{$k\times k$
  (Permutation) Clique} obviously hold
for the analogous \textsc{$k\times k$
  (Permutation) Independent Set} problem: by taking the complement of the graph, we can
reduce one problem to the other. We state here a version of the
independent set problem
that will be a convenient starting point for reductions in later sections:

\problem{\kkbis}{A graph $G$ over the vertex set $[2k]\times [2k]$
  where every edge is between $I_1=\{(i,j)\mid i,j\le k\}$ and
  $I_2=\{(i,j)\mid i,j\ge k+1\}$.}
{$k$}{Is there an independent set $(1,\rho(1))$, $\dots$,
  $(2k,\rho(2k)) \subseteq I_1\cup I_2$ in $G$ for some permutation $\rho$ of $[2k]$?}
That is, the upper left quadrant $I_1$ and the lower right quadrant
$I_2$ induce independent sets, and every edge is between these two
independent sets. The requirement that the solution is a subset of $I_1\cup I_2$ means
that $\rho(i)\le k$ for $1\le i \le k$ and $\rho(i)\ge k+1$ for $k+1
\le i \le 2k$.
\begin{theorem}
\label{th:kkbis}
Assuming the ETH, there is no $2^{o(k\log k)}$ time algorithm for \kkbis.
\end{theorem}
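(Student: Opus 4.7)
The plan is to reduce from \kkpermcl, which by Corollary~\ref{cor:detpermcl} has no $2^{o(k\log k)}$-time algorithm under ETH. Given an instance $H$ of \kkpermcl\ on vertex set $[k]\times[k]$, I build an instance $G$ of \kkbis\ on $[2k]\times[2k]$ by placing two coupled copies of the problem inside the quadrants $I_1$ and $I_2$, linked only by edges across the bipartition so that $G$ has no edges inside $I_1$ or inside $I_2$. The doubling $k\mapsto 2k$ preserves the form $2^{o(k\log k)}$, so any hypothetical $2^{o(k'\log k')}$-time algorithm with $k'=2k$ would contradict Corollary~\ref{cor:detpermcl}.

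The construction uses two families of edges, all going between $I_1$ and $I_2$. First, \emph{consistency edges}: for every $i\in[k]$, every $j\in[k]$, and every $j'\in[k]\setminus\{j\}$, add an edge between $(i,j)\in I_1$ and $(i+k,j'+k)\in I_2$. If $(i,\rho_1(i))\in I_1$ lies in the independent set, these edges rule out every vertex of row $i+k$ of $I_2$ except $(i+k,\rho_1(i)+k)$, forcing the selection in $I_2$ to mirror the one in $I_1$ with offset $k$. Second, \emph{non-clique edges}: for each pair of distinct rows $i_1\ne i_2\in[k]$ and each pair $(j_1,j_2)$ such that $\{(i_1,j_1),(i_2,j_2)\}$ is a non-edge of $H$, add an edge between $(i_1,j_1)\in I_1$ and $(i_2+k,j_2+k)\in I_2$. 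Combined with the consistency edges, this forces the $I_1$-projection $(1,\rho_1(1)),\dots,(k,\rho_1(k))$ to form a permutation clique in $H$, because every pair of distinct rows in the projection must be joined by an $H$-edge.

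Correctness runs in both directions. A permutation clique $(1,\sigma(1)),\dots,(k,\sigma(k))$ of $H$ lifts to the selection $\{(i,\sigma(i)),(i+k,\sigma(i)+k) : i\in[k]\}$, which defines a permutation $\rho$ of $[2k]$ with solution inside $I_1\cup I_2$; no consistency edge is incident to two selected vertices because the two halves align, and no non-clique edge is violated because $\sigma$ selects only $H$-edge pairs. Conversely, any solution of $G$ satisfies $\rho([k])=[k]$ and $\rho(\{k+1,\dots,2k\})=\{k+1,\dots,2k\}$ by the $I_1\cup I_2$ constraint, consistency then forces $\rho_2(i+k)=\rho_1(i)+k$, and avoidance of the non-clique edges turns $\rho_1$ into a permutation clique of $H$. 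Since $G$ has $4k^2$ vertices and is built in polynomial time, a $2^{o(k'\log k')}$-time algorithm for \kkbis\ with $k'=2k$ yields a $2^{o(k\log k)}$-time algorithm for \kkpermcl. The only delicate point is that alignment between $I_1$ and $I_2$ must be forced purely through cross-bipartition edges (intra-quadrant edges are forbidden); the consistency trick above is exactly the mechanism that achieves this.
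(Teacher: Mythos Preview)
Your proof is correct and essentially identical to the paper's. The paper phrases the reduction as starting from \kkpermis\ (the complement of \kkpermcl), so what you call ``non-clique edges of $H$'' are exactly the edges of the paper's independent-set instance; the consistency edges and the doubling construction $(i,j)\mapsto(i+k,j+k)$ match verbatim.
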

\begin{proof}
  Given an instance $I$ of \kkpermis, we construct an equivalent instance
  $I'$ of \kkbis\ the following way. For every $1\le i \le k$ and
  $1\le j,j ' \le k$, $j\neq j'$, we add an edge between $(i,j)$ and
  $(i+k,j'+k)$ in $I'$. If there is an edge between $(i_1,j_1)$ and
  $(i_2,j_2)$ in $I$, then we add an edge between $(i_1,j_1)$ and
  $(i_2+k,j_2+k)$ in $I'$. This completes the description of $I'$.

  Suppose that $I$ has a solution $(1,\delta(1))$, $\dots$,
  $(k,\delta(k))$ for some permutation $\delta$ of $[2k]$. Then it is
  obvious from the construction of $I'$ that $(1,\delta(1))$, $\dots$,
  $(k,\delta(k))$, $(1+k,\delta(1)+k)$, $\dots$, $(2k,\delta(k)+k)$ is
  an independent set of $I'$ and $\delta(1)$, $\dots$, $\delta(k)$,
  $\delta(1)+k$, $\dots$, $\delta(k)+k$ is clearly a permutation of
  $[2k]$. Suppose that $(1,\rho(1))$, $\dots$, $(2k,\rho(2k))$ is
  solution of $I'$ for some permutation $\rho$ of $[2k]$. By
  definition, $\rho(i)\le k$ for $1\le i\le k$. We claim that
  $(1,\rho(k))$, $\dots$, $(k,\rho(k))$ is an independent set of $I$.
  Observe first that $\rho(i+k)=\rho(i)+k$ for every $1\le i \le k$:
  otherwise there is an edge between $(i,\rho(i))$ and
  $(i+k,\rho(i+k))$ in $I'$. If there is an edge between
  $(i_1,\rho(i_1))$ and $(i_2,\rho(i_2))$ in $I$, then by construction
  there is an edge between $(i_1,\rho(i_1))$ and
  $(i_2+k,\rho(i_2)+k)=(i_2+k,\rho(i_2+k))$ in $I'$, contradicting the
  assumption that $(1,\rho(k))$, $\dots$, $(2k,\rho(2k))$ is an
  independent set in $I'$.
\end{proof}

% \begin{proof}
%   Given an instance $I$ of \kkpermis, we construct an equivalent instance
%   $I'$ of \kkbis\ the following way. For every $1\le i \le k$ and
%   $1\le j,j ' \le k$, $j\neq j'$, we add an edge between $(i,j)$ and
%   $(i+k,j'+k)$ in $I'$. If there is an edge between $(i_1,j_1)$ and
%   $(i_2,j_2)$ in $I$, then we add an edge between $(i_1,j_1)$ and
%   $(i_2+k,j_2+k)$ in $I'$. This completes the description of $I'$.
% 
%   Suppose that $I$ has a solution $(1,\delta(1))$, $\dots$,
%   $(k,\delta(k))$ for some permutation $\delta$ of $[2k]$. Then it is
%   obvious from the construction of $I'$ that $(1,\delta(1))$, $\dots$,
%   $(k,\delta(k))$, $(1+k,\delta(1)+k)$, $\dots$, $(2k,\delta(k)+k)$ is
%   an independent set of $I$ and $\delta(1)$, $\dots$, $\delta(k)$,
%   $\delta(1)+k$, $\dots$, $\delta(k)+k$ is clearly a permutation of
%   $[2k]$. Suppose that $(1,\rho(1))$, $\dots$, $(2k,\rho(2k))$ is
%   solution of $I'$ for some permutation $\rho$ of $[2k]$. By
%   definition, $\rho(i)\le k$ for $1\le i\le k$. We claim that
%   $(1,\rho(k))$, $\dots$, $(k,\rho(k))$ is an independent set of $I$.
%   Observe first that $\rho(i+k)=\rho(i)+k$ for every $1\le i \le k$:
%   otherwise there is an edge between $(i,\rho(i))$ and
%   $(i+k,\rho(i+k))$ in $I'$. If there is an edge between
%   $(i_1,\rho(i_1))$ and $(i_2,\rho(i_2))$ in $I$, then by construction
%   there is an edge between $(i_1,\rho(i_1))$ and
%   $(i_2+k,\rho(i_2)+k)=(i_2+k,\rho(i_2+k))$ in $I'$, contradicting the
%   assumption that $(1,\rho(k))$, $\dots$, $(2k,\rho(2k))$ is an
%   independent set in $I'$.
% \end{proof}

\subsection{\kkhs}
\textsc{Hitting Set} is a W[2]-complete problem, but if we restrict
the universe to a $k\times k$ table where only one element can be
selected from each row, then it can be solved in time $O^*(k^k)$ by
brute force.

\problem{\kkhs}{Sets $S_1,\dots,S_m\subseteq [k]\times [k]$.}
{$k$}
{Is there a set $S$ containing exactly one element from each row such
  that $S\cap S_i\neq \emptyset$ for any $1\le i \le m$?}
We say that the mapping $\rho$ {\em hits}
a set $S\subseteq [k]\times [k]$, if $(i,\rho(i))\in m$ for some $1\le
i \le S$. Note that unlike for \kkcl\ and \kkis, the size of the
\kkhs\ instance cannot be bounded by a function of $k$.

It is quite easy to reduce \kkis\ to \kkhs: for every pair
$(i_1,j_1)$, $(i_2,j_2)$ of adjacent vertices, we need to ensure that
they are not selected simultaneously, which can be forced by a set
that contains every element of rows $i_1$ and $i_2$, except
$(i_1,j_1)$ and $(i_2,j_2)$. However, in Section~\ref{th:closest} we
prove the lower bound for \clstr\ by reduction from a restricted form
of \kkhs\ where each set contains at most one element from each {\em
  row.} The following theorem proves the lower bound for this variant
of \kkhs. The basic idea is that an instance of \kkbis\ can be
transformed in an easy way into an instance of \textsc{Hitting Set}
where each set contains at most one element from each {\em column} and
we want to select exactly one element from each row and each column.
By adding each row as a new set, we can forget about the restriction
that we want to select exactly one element from each row: this
restriction will be automatically satisfied by any solution.
Therefore, we have a \textsc{Hitting Set} instance where we have to
select exactly one element from each column and each set contains at
most one element from each column. By changing the role of rows and
columns, we arrive to a problem of the required form.

\begin{theorem}
\label{th:kkhs}
Assuming the ETH, there is no $2^{o(k\log k)}\cdot n^{O(1)}$ time algorithm for \kkhs,
even  in the special case when each set contains at most one element
from each row.
\end{theorem}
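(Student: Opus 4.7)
The plan is to reduce from \kkbis\ (Theorem~\ref{th:kkbis}). Given an instance $G$ of \kkbis\ on $[2k]\times[2k]$, I will build an equivalent restricted \kkhs\ instance in three stages, following the outline given before the theorem: first an intermediate Hitting Set problem whose solution is a full permutation and whose sets contain at most one element per column; then adjoining row-sets to remove the ``one per row'' requirement; and finally swapping the roles of rows and columns.

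The intermediate instance will live on the universe $[2k]\times[2k]$, and the goal is to pick a permutation $\rho$ hitting every set. I include two kinds of sets. To enforce the bipartite selection pattern $\rho(i)\in[k]$ iff $i\le k$, I add ``half-row'' sets $T_i^{\mathrm{top}}=\{(i,j):1\le j\le k\}$ for $i\le k$ and symmetrically $T_i^{\mathrm{bot}}=\{(i,j):k<j\le 2k\}$ for $i>k$; each has exactly one element per column in its range and none elsewhere. To forbid an edge $e=\{(i_1,j_1),(i_2,j_2)\}$ of $G$, where $(i_1,j_1)\in I_1$ and $(i_2,j_2)\in I_2$, I add
\[
S_e=\{(i_1,j'_1):j'_1\in[k]\setminus\{j_1\}\}\cup\{(i_2,j'_2):j'_2\in[k+1,2k]\setminus\{j_2\}\}.
\]
Because rows $i_1$ and $i_2$ contribute elements in the disjoint column ranges $[k]$ and $[k+1,2k]$, this set has at most one element per column. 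A permutation respecting the bipartite pattern hits $S_e$ iff $\rho(i_1)\neq j_1$ or $\rho(i_2)\neq j_2$, which is precisely the condition that edge $e$ is avoided by the selection; hence the intermediate instance is a yes-instance iff $G$ is.

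Next I will apply the step sketched in the excerpt: adjoin the row-set $R_i=\{(i,j):1\le j\le 2k\}$ for every $i$, each of which still has exactly one element per column, and drop the ``one per row'' requirement. A one-per-column selection of $2k$ elements that hits every $R_i$ must cover all $2k$ rows and therefore be a permutation, which preserves equivalence. Finally, swapping rows with columns turns every set into one with at most one element per \emph{row}, and turns the goal into selecting one element per row---precisely the restricted form of \kkhs\ in the theorem statement. The construction has size polynomial in $k$, and the parameter of the produced instance is $K=2k$, so any hypothetical $2^{o(K\log K)}\cdot n^{O(1)}$ algorithm would solve \kkbis\ in $2^{o(k\log k)}$ time, contradicting Theorem~\ref{th:kkbis}. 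The delicate point is the ``at most one per column'' property of the edge gadget $S_e$: it succeeds only because \kkbis\ couples a row from the upper-left quadrant with a row from the lower-right quadrant, whose allowed column ranges are disjoint; without this bipartite structure, the natural edge gadget would place two elements in some columns.
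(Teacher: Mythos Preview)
Your proposal is correct and follows essentially the same route as the paper's proof: both reduce from \kkbis, use half-row sets to pin the solution to $I_1\cup I_2$, use the edge gadget $S_e$ (identical to the paper's), and then transpose. The only difference is cosmetic: the paper observes that the half-row sets $T_i$ alone already force the selection to be a permutation (by a pigeonhole count on each quadrant), so your full-row sets $R_i$ are redundant---harmless, but unnecessary.
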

\begin{proof}
To make the notation in the proof less confusing, we
introduce a transposed variant of the problem (denote by \kkhsr), where exactly one
element has to be selected from each column. We prove the lower bound
for \kkhsr\ with the additional restriction that each set contains at
most one element from each column; this obviously implies the
theorem. 

Given an instance $I$ of \kkbis, we construct an equivalent
\tkkhsr\ instance $I'$ on the universe $[2k]\times
[2k]$. For $1\le i \le k$, let set $S_i$ contain the first $k$ elements of
row $i$ and for $k+1 \le i \le 2k$, let set $S_i$ contain the last
$k$ elements of row $i$. For every edge $e$ in instance $I$, we construct a set
$S_e$ the following way. By the way \kkbis\ is defined, we need to
consider only edges connecting some $(i_1,j_1)$ and $(i_2,j_2)$ with
$i_1,j_1\le k$ and $i_2,j_2\ge k+1$. For such an edge $e$, let us define
\[
S_e=\{(i_1,j')\mid 1\le j' \le k, j'\neq j_1\} \cup 
\{(i_2,j')\mid k+1\le j' \le 2k, j'\neq j_2\}.
\]

Suppose that $(1,\delta(1))$, $\dots$, $(2k,\delta(2k))$ is a solution
of $I$ for some permutation $\rho$ of $[2k]$. We claim that it is a
solution of $I'$. As $\rho$ is a permutation, the set satisfies the
requirement that it contains exactly one element from each column. As
$\delta(i)\le k$ if and only if $i\le k$, the set $S_i$ is hit for
every $1\le i\le 2k$. Suppose that there is an edge $e$ connecting
$(i_1,j_1)$ and $(i_2,j_2)$ such that set $S_e$ of $I'$ is not hit by
this solution. Elements $(i_1,\delta(i_1))$ and $(i_2,\delta(i_2))$
are selected and we have $1\le \delta(i_1) \le k$ and $k+1 \le
\delta(i_2) \le 2k$. Thus if these two elements do not hit $S_e$, then
this is only possible if $\delta(i_1)=j_1$ and $\delta(i_2)=j_2$.
However, this means that the solution for $I$ contains the two
adjacent vertices $(i_1,j_1)$ and $(i_2,j_2)$, a contradiction.

Suppose now that $(\rho(1),1)$, $\dots$, $(\rho(2k),2k)$ is a solution
for $I'$. Because of the sets $S_i$, $1\le i \le 2k$, the solution
contains exactly one element from each row, i.e., $\rho$ is a
permutation of $2k$.  Moreover, the sets $S_1$, $\dots$, $S_k$
have to be hit by the $k$ elements in the first $k$ columns. This means
that $\rho(i)\le k$ if $i\le k$ and consequently $\rho(i)>k$ if $i>k$.
We claim that $(\rho(1),1)$, $\dots$, $(\rho(2k),2k)$ is also a solution of
$I$. It is clear that the only thing that has to be verified is that
these $2k$ vertices form an independent set.  Suppose that
$(\rho(j_1),j_1)$ and $(\rho(j_2),j_2)$ are connected by an edge $e$.
We can assume that $\rho(j_1)\le k$ and $\rho(j_2)>k$, which implies
$j_1\le k$ and $j_2>k$. The solution for $I'$ hits set $S_e$, which
means that either the solution selects an element $(\rho(j_1),j')$ or
an element $(\rho(j_2),j')$. Elements $(\rho(j_1),j_1)$ and
$(\rho(j_2),j_2)$ are the only elements of this form in the solution,
but neither of them appears in $S_e$. Thus $(\rho(1),1)$, $\dots$,
$(\rho(2k),2k)$ is indeed a solution of $I$
\end{proof}

\section{Closest String}\label{sec:closest-substring}
Computational biology applications often involve long sequences that
have to be analyzed in a certain way. One core problem is finding a
``consensus'' of a given set of strings: a string that is close to
every string in the input. The \clstr\ problem defined below
formalizes this task. 

\problem{\clstr}{Strings $s_1$, $\dots$, $s_t$ over an alphabet
  $\Sigma$ of length $L$ each, an integer $d$}
{$d$}
{Is there a string $s$ of length $L$ such $d(s,s_i)\le d$ for every
  $1\le i \le t$?}
We denote by $d(s,s_i)$ the {\em Hamming distance} of the strings $s$
and $s_i$, that is, the number of positions where they have different
characters. The solution $s$ will be called the {\em center string.}

\clstr\ and its generalizations (\textsc{Closest Substring},
\textsc{Distinguishing (Sub)string Selection}, \textsc{Consensus
  Patterns})  have been thoroughly explored both from the viewpoint of
approximation algorithms and fixed-parameter tractability
\cite{DBLP:journals/siamcomp/MaS09,DBLP:conf/faw/WangZ09,marx-closest-full,MR1984615,506150,chen-ma-wang-closest,MR2000185,MR2071152,MR1994748,MR2062503}.
In particular, Gramm et al. \cite{MR1984615} showed that \clstr\ is
fixed-parameter tractable parameterized by $d$: they gave an algorithm
with running time $O(d^d\cdot |I|)$. The algorithm works over an
arbitrary alphabet $\Sigma$ (i.e., the size of the alphabet is part of
the input). It is an obvious question whether the dependence on $d$
can be reduced to single exponential, i.e., whether the running time
can be improved to $2^{O(d)}\cdot |I|^{O(1)}$. For small fixed
alphabets, Ma and Sun~\cite{DBLP:journals/siamcomp/MaS09} achieved
single-exponential dependence on $d$: the running time of their
algorithm is $|\Sigma|^{O(d)}\cdot |I|^{O(1)}$. Improved algorithms with
running time of this form, but with better constants in the exponent
were given in \cite{DBLP:conf/faw/WangZ09,chen-ma-wang-closest}.  We
show here that the $d^d$ and $|\Sigma|^d$ dependence are best possible
(assuming the ETH): the dependence cannot be improved to $2^{o(d\log d)}$
or to $2^{o(d\log |\Sigma|)}$. More precisely, what our proof actually shows is that $2^{o(t\log t)}$ dependence is not possible for the parameter $t=\max\{d,|\Sigma|\}$. In particular, single exponential
dependence on $d$ cannot be achieved if the alphabet size is
unbounded.

\begin{theorem}\label{th:closest}
Assuming the ETH, there is no $2^{o(d\log d)}\cdot |I|^{O(1)}$ or
$2^{o(d\log|\Sigma|)}\cdot |I|^{O(1)}$ time algorithm for \clstr.
\end{theorem}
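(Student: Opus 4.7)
The plan is to reduce from the restricted variant of \kkhs\ established in Theorem~\ref{th:kkhs}, where every set contains at most one element per row. Given such an instance with sets $S_1,\dots,S_m\subseteq[k]\times[k]$ (we may assume $|S_j|\ge 1$ throughout, else the instance is trivially negative), we build a \clstr\ instance over alphabet $\Sigma=[k]$ with string length $L=k$ and distance bound $d=k-1$. A candidate center $s\in[k]^k$ is naturally read as a mapping $\rho\colon[k]\to[k]$ via $\rho(i):=s[i]$, so no auxiliary structure on $s$ needs to be enforced separately. Since $|\Sigma|=k$ and $d=k-1$ satisfy $d\log d = d\log|\Sigma| = \Theta(k\log k)$, both promised lower bounds will drop out of this single reduction.

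The construction: for each $S_j=\{(i_1,j_1),\dots,(i_t,j_t)\}$ (the $i_l$'s distinct by the row restriction) and each character $c\in[k]$, introduce a string $s_{j,c}$ of length $k$ by setting $s_{j,c}[i_l]:=j_l$ for $1\le l\le t$ and $s_{j,c}[i]:=c$ for $i\in[k]\setminus\{i_1,\dots,i_t\}$. This yields $mk$ strings of length $k$, polynomial in the input. Writing $I_j=\{i_1,\dots,i_t\}$ and $h_j(s)=\bigl|\{l:s[i_l]=j_l\}\bigr|$, a direct count gives
\[d(s,s_{j,c}) = (t-h_j(s)) + \bigl|\{i\in[k]\setminus I_j : s[i]\neq c\}\bigr| = k - h_j(s) - \bigl|\{i\in[k]\setminus I_j : s[i]=c\}\bigr|.\]
If $\rho$ hits $S_j$, then $h_j(s)\ge 1$ and $d(s,s_{j,c})\le k-1=d$ for every $c$. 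Conversely, if $\rho$ fails to hit $S_j$ then $h_j(s)=0$, and since $|[k]\setminus I_j|=k-t<k$ the values $\{s[i]:i\in[k]\setminus I_j\}$ must miss some $c^*\in[k]$, yielding $d(s,s_{j,c^*})=k>d$. Hence the \clstr\ instance is a yes-instance iff the original hitting set instance is, and a $2^{o(d\log d)}\cdot|I|^{O(1)}$ or $2^{o(d\log|\Sigma|)}\cdot|I|^{O(1)}$ algorithm for \clstr\ would solve the restricted \kkhs\ in time $2^{o(k\log k)}\cdot(mk)^{O(1)}$, contradicting Theorem~\ref{th:kkhs}.

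The main obstacle to anticipate is the ``for all $c$'' quantification: a single set-string using a wildcard symbol would be more natural but would force $|\Sigma|$ to grow and would complicate the distance bookkeeping. Taking $k$ copies of each set-string, one per padding character $c$, keeps $|\Sigma|$ pinned at exactly $k$---which is precisely what delivers the tight $2^{o(d\log|\Sigma|)}$ bound rather than the much weaker $2^{o(d)}$---and the pigeonhole step above automatically supplies, for every non-hitting $\rho$, a missing character $c^*$ that serves as a certificate of failure. The remaining checks (polynomial-time constructibility, and the boundary case $t=k$ where the single string $s_{j,c}$ does not depend on $c$) are routine.
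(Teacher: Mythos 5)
Your proof is correct and follows essentially the same route as the paper: a reduction from the row-restricted \kkhs\ of Theorem~\ref{th:kkhs} with $L=k$, $d=k-1$, several padded copies of each set-string, and a pigeonhole argument showing that any valid center string misses some padding character and is therefore forced to hit the set. The only (harmless) difference is that the paper pads with $k+1$ fresh dummy symbols, giving $|\Sigma|=2k+1$, whereas you reuse $\Sigma=[k]$ itself and apply the pigeonhole to the at most $k-1$ padded positions.
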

\begin{proof}
We prove the theorem by a reduction from the \textsc{Hitting Set}
problem considered in Theorem~\ref{th:kkhs}. Let $I$ be an instance of
\kkhs\ with sets $S_1$, $\dots$, $S_m$; each set contains at most one
element from each row. We construct an instance $I'$ of \clstr\ as
follows. Let $\Sigma=[2k+1]$, $L=k$, and $d=k-1$ (this means that the
center string has to have at least one character common with every
input string). Instance $I'$ contains $(k+1)m$ input strings
$s_{x,y}$ ($1\le x \le m$, $1\le y \le k+1$). If set $S_x$ contains
element $(i,j)$ from row $i$, then the $i$-th character of
$s_{x,y}$ is $j$; if $S_x$ contains no element of row $i$, then the
$i$-th character of $s_{x,y}$ is $y+k$. Thus string $s_{x,y}$ describes
the elements of set $S_x$, using a certain dummy value between $k+1$ and $2k+1$
to mark the rows disjoint from $S_x$. The strings $s_{x,1}$, $\dots$, $s_{x,k+1}$
differ only in the choice of the dummy values.

We claim that $I'$ has a solution if and only if $I$ has. Suppose that
$(1,\rho(1))$, $\dots$, $(k,\rho(k))$ is a solution of $I$ for some
mapping $\rho:[k]\to [k]$. Then the center string
$s=\rho(1)\dots\rho(k)$ is a solution of $I'$: if element $(i,\rho(i))$
of the solution hits set $S_x$ of $I$, then both $s$ and $s_{x,y}$ have
character $\rho(i)$ at the $i$-th position. For the other direction,
suppose that center string $s$ is a solution of $I'$. As the length of
$s$ is $k$, there is a $k+1\le y \le 2k+1$ that does not appear in
$s$. If the $i$-th character of $s$ is some $1\le c \le k$, then let
$\rho(i)=c$; otherwise, let $\rho(i)=1$ (or any other arbitrary
value). We claim that $(1,\rho(1))$, $\dots$, $(k,\rho(k))$ is a
solution of $I$, i.e., it hits every set $S_x$ of $I$. To see this,
consider the string $s_{x,y}$, which has at least one character common
with $s$. Suppose that character $c$ appears at the $i$-th position in
both $s$ and $s_{x,y}$. It is not possible that $c>k$: character $y$ is the
only character larger than $k$ that appears in $s_{x,y}$, but $y$ does
not appear in $s$. Therefore, we have $1\le c \le k$ and $\rho(i)=c$,
which means that element $(i,\rho(i))=(i,c)$ of the solution hits
$S_x$.

The claim in the previous paragraph shows that solving instance $I'$
using an algorithm for \clstr\ solves the \kkhs\ instance $I$. Note
that the size $n$ of the instance $I'$ is polynomial in $k$ and $m$.
Therefore, a $2^{o(d\log d)}\cdot |I|^{O(1)}$ or a $2^{o(d\log
  |\Sigma|)}\cdot |I|^{O(1)}$ algorithm for \clstr\ would give a
$2^{o(k\log k)}\cdot (km)^{O(1)}$ time algorithm for \kkhs, violating the ETH
(by Theorem~\ref{th:kkhs}).
\end{proof}

%%% Local Variables: 
%%% mode: latex
%%% TeX-master: t
%%% End: 

%\input{distS}

\section{Distortion}\label{sec:distortion}
%\vspace{-1cm}

Given an undirected graph $G$ with the vertex set $V(G)$
 and the edge set $E(G)$, a
metric associated with $G$ is $M(G) = (V(G),D)$, where the
distance function $D$ is the shortest path distance between $u$
and $v$ for each pair of vertices $u,v \in V(G)$. We refer to $M(G)$
as to the  {\em graph metric} of $G$. 
 Given a graph metric $M$ 
and another metric space $M'$ with distance functions $D$ and
$D'$, a mapping $f:M \rightarrow M'$ is called an {\em embedding}
of $M$ into $M'$. The mapping $f$ has {\em contraction} $c_f$ and
{\em expansion} $e_f$ if for every pair of points $p,q$ in $M$,
$
D(p,q) \leq D'(f(p),f(q)) \cdot c_f $ and $D(p,q) \cdot e_f \geq
D'(f(p),f(q))$ respectively. We say that $f$ is
\emph{non-contracting} if $c_f$ is at most $1$. A non-contracting
mapping $f$ has \emph{distortion} $d$ if $e_f$ is at most $d$. One of 
the most well studied case of graph embedding is when the host metric $M'$ 
is $\mathbb{R}^1$ and $D'$ is the Euclidean distance. This is also called embedding 
the graph into integers or line. Formally, the problem of \dist\ is defined as follows. 
\problem{\dist}
{A graph $G$, and a positive integer $d$}{$d$}{Is there an embedding $g:V(G) \rightarrow Z$
such that for all $u,v \in V(G)$,  $D(u,v) \leq |g(u)-g(v)| \leq d \cdot D(u,v)$?}

The problem of finding   embedding with good distortion  between metric 
spaces is a fundamental mathematical problem 
\cite{Indyk01,Linial02} that has been studied intensively  \cite{BadoiuCIS05,BadoiuDGRRRS05,BadoiuIS07,KenyonRS04}. 
Embedding a graph metric into a simple low-dimensional metric space
like the real line has proved to be a useful algorithmic tool in
various fields  (for an example see~\cite{GuptaNRS04} for a long list of applications). 
%A long list of applications given in
%\cite{GuptaNRS04} includes approximation algorithms for graph and
%network problems, such as sparsest cut, minimum bandwidth,
%low-diameter decomposition and optimal group Steiner trees, and
%on-line algorithms for metrical task systems and file migration
%problems. 
%The algorithmic issues
%of metric embeddings has developed  a lot in recent times 
%\cite{BadoiuCIS05,BadoiuDGRRRS05,BadoiuIS07,KenyonRS04}. 
%For
%example, B{\u{a}}doiu {\it et al.}
%describe approximation algorithms and hardness results for
%embedding general metrics into the line. In particular they show that the minimum distortion
%for a line embedding is hard to approximate up to a factor
%polynomial in $n$ even for weighted trees with polynomial spread
%(the ratio of maximum/minimum weights).
%For the case of unweighted graphs,  
B{\u{a}}doiu \textit{et al.}~\cite{BadoiuDGRRRS05} studied \dist\ from the viewpoint of approximation algorithms 
and exact algorithms. They showed  
that there is a constant $a>1$, such that
$a$-approximation of the minimum distortion of embedding into the line, is
NP-hard and provided an exact algorithm
computing embedding of a $n$ vertex graph into line with distortion $d$ in time $n^{O(d)}$.
Subsequently, Fellows et al.
\cite{DBLP:conf/icalp/FellowsFLLRS09} improved the running time of their algorithm to 
$d^O(d)\cdot n$ and thus proved \dist\ to be fixed parameter tractable parameterized by 
$d$. 
% 
% In another study Fomin et al.~\cite{FominLS09} gave an algorithm that
% finds a minimum distortion embedding of a graph on $n$ vertices into
% line in time $5^{n + o(n)}$ which was later improved to $4.383^n$ by
% Cygan and Pilipczuk~\cite{abs-1004-5012}.  
% Fellows et al.
% \cite{DBLP:conf/icalp/FellowsFLLRS09} studied the parameterized
% complexity of metric embeddings and proved that embedding graph metric
% into the line and more generally, into trees with bounded vertex
% degrees, is fixed parameter tractable when parameterized by the
% distortion. In particular Fellows et al.
% \cite{DBLP:conf/icalp/FellowsFLLRS09} gave an algorithm for \dist\
% running in time $O(nd^4(2d+1)^{2d})=d^{O(d)}\cdot n$.  
We show here
that the $d^{O(d)}$ dependence in the running time of \dist\ algorithm
is optimal (assuming the ETH). To achieve this we first obtain a lower
bound on an intermediate problem called \cperm, then give a reduction
that transfers the lower bound from \cperm\ to \dist. The superexponential dependence on $d$ is 
particularly interesting, as $c^{n}$ time algorithms for finding a minimum distortion 
embedding of a graph on $n$ vertices into line  have been given by 
Fomin et al.~~\cite{FominLS09} and Cygan and Pilipczuk~\cite{abs-1004-5012}.
\problem{\cperm}
{Subsets $S_1$, $\dots$, $S_m$ of $[k]$}{$k$}
{A permutation $\rho$ of $[k]$ such that for every $1\le i \le m$,
  there is a $1\le j <k$ such that $\rho(j),\rho(j+1)\in S_i$.}

Given a permutation $\rho$ of $[k]$, we say that $x$ and $y$ are
{\em neighbors} if $\{x,y\}=\{\rho(i),\rho(i+1)\}$ for some $1\le i<
k$. In the \cperm\ problem the task is to find a permutation that hits
every set $S_i$ in the sense that there is a pair $x,y\in S_i$ that
are neighbors in $\rho$. 
\begin{theorem}
\label{th:cperm}
Assuming the ETH, there is no $2^{o(k\log k)}m^{O(1)}$ time algorithm for
\cperm.
\end{theorem}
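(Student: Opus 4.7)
The plan is to prove Theorem~\ref{th:cperm} by a parameter-linear reduction from \kkbis\ (Theorem~\ref{th:kkbis}). Given an instance of \kkbis\ on $[2k]\times[2k]$ with graph $G$, I will construct a \cperm\ instance on a universe $U$ of size $O(k)$ together with $\mathrm{poly}(k,|E(G)|)$ constraint sets. Since $|U|\log|U|=O(k\log k)$, a hypothetical $2^{o(k'\log k')}m^{O(1)}$-time algorithm for \cperm\ (where $k'=|U|$) would yield a $2^{o(k\log k)}$-time algorithm for \kkbis, contradicting Theorem~\ref{th:kkbis}.

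The universe will contain $2k$ ``column'' elements $c_1,\ldots,c_{2k}$ representing the columns of \kkbis\ and $O(k)$ ``separator'' elements that delimit $2k$ ``row slots.'' The intended solution permutation places exactly one column element per row slot, with separators both anchoring the slot boundaries and encoding the \kkbis\ permutation $\rho$ via the choice of column in each slot. Schematically, using left and right separators $s_i^L,s_i^R$ for each row slot $i$,
\[
\sigma=s_1^L\,c_{\rho(1)}\,s_1^R\;s_2^L\,c_{\rho(2)}\,s_2^R\;\cdots\;s_{2k}^L\,c_{\rho(2k)}\,s_{2k}^R,
\]
so that each slot is contiguous and consecutive slots are joined by the adjacency $s_i^R\,s_{i+1}^L$.

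The constraints will fall into three families. \emph{Linker sets} $\{s_i^R,s_{i+1}^L\}$ of size~$2$ for each $i\in[2k-1]$ fix the slot ordering by forcing the right separator of slot $i$ to be adjacent to the left separator of slot $i+1$. \emph{Range sets} $A_i=\{s_i^L\}\cup\{c_j:j\in C_i\}$, where $C_i=[k]$ if $i\le k$ and $C_i=\{k+1,\ldots,2k\}$ otherwise, enforce the \kkbis\ quadrant restriction $\rho(i)\in C_i$: under the intended structure, $A_i$ is hit precisely when $s_i^L$ is adjacent to a column in $C_i$. \emph{Edge sets} $S_e=\{s_{i_1}^L,s_{i_2}^L\}\cup\{c_j:j\in[2k]\setminus\{j_1,j_2\}\}$, one per edge $e=(i_1,j_1)$-$(i_2,j_2)$ of $G$ with $i_1\le k<i_2$, exploit the fact that the two endpoints sit in disjoint column ranges; under the intended structure, the only way $S_e$ can be hit is via $(s_{i_1}^L,c_{\rho(i_1)})$ or $(s_{i_2}^L,c_{\rho(i_2)})$, which occurs iff $\rho(i_1)\ne j_1$ or $\rho(i_2)\ne j_2$, i.e., iff the forbidden edge configuration is avoided.

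The forward direction of correctness is transparent: given a \kkbis\ solution $\rho$, the permutation $\sigma$ above hits every set by direct inspection. The main obstacle lies in the backward direction, namely enforcing the rigid intended structure purely via \cperm's existential constraints (``some consecutive pair must lie in $S_i$''), since one cannot directly forbid configurations. In particular, degenerate cases such as two columns of $C_i$ being adjacent could spuriously satisfy a range set, and analogous tricks could spoof an edge set. I plan to handle this by combining the linker sets (which rigidly chain the slot skeleton) with additional auxiliary sets pinning each column's neighbors to its proper slot, keeping $|U|=O(k)$ throughout. Once the structural lemma—that every \cperm\ solution has the alternating slot form—is established, reading off $\rho(i)$ as the column adjacent to $s_i^L$ yields a valid \kkbis\ solution, and the linear universe bound delivers the promised $2^{o(k\log k)}$ lower bound.
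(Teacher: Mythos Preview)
Your overall strategy---reduce from \kkbis, encode rows and columns as elements, and read off $\rho$ from the adjacencies in a solution permutation---matches the paper's. The gap is precisely where you flag it: the structural lemma for the backward direction. You defer it to unspecified ``additional auxiliary sets,'' but this is the entire difficulty of the proof, and your gadget as written cannot be salvaged with a few more polynomial-size sets. Concretely, your linker sets force $s_i^R\sim s_{i+1}^L$ but say nothing about the second neighbor of $s_i^L$ or $s_i^R$; nothing ties $s_i^L$ and $s_i^R$ to the \emph{same} column, nothing forbids two columns from being adjacent (which spuriously satisfies your range and edge sets), and a size-$2$ set $\{s_i^L,s_i^R\}$ would leave no room for the column between them. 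The existential nature of \cperm\ constraints makes it genuinely hard to forbid bad configurations.

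The paper's solution to this is substantially heavier than what you sketch. It uses \emph{three} parallel copies of the row/column elements (so $k'=24k$) and, crucially, an \emph{exponential} family of sets: one set for every subset $X\subseteq[2k]$ (and every pair of copies and every row). The point of this exponential family is that for any putative solution one can choose a bipartition $X$ that isolates the bad adjacency and derive a contradiction; the three copies are then combined via pigeonhole to show every $\bar r^\ell_i$ is ``good.'' Consequently the paper's instance has $m=2^{O(k)}$ sets, not $\mathrm{poly}(k)$, and the running-time calculation absorbs $m^{O(1)}=2^{O(k)}$ into $2^{o(k\log k)}$. Your promise of a polynomial-size instance would be a stronger reduction, but you have not shown it is achievable, and the paper's proof strongly suggests the exponential blow-up in $m$ is what buys the structural rigidity.
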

\begin{proof}
  Given an instance $I$ of \kkbis, we construct an equivalent instance
  $I'$ of \cperm. Let $k'=24k$ and for ease of notation, let us
  identify the numbers in $[k']$ with the elements $r^\ell_i$, $\bar
  r^\ell_i$, $c^\ell_j$, $\bar c^\ell_j$ for $1\le \ell \le 3$, $1\le
  i,j\le 2k$. The values $r^\ell_i$ represent the rows and the values
  $c^\ell_j$ represent the columns. If $\bar r^\ell_i$ and $c^\ell_j$
  are neighbors in $\rho$, then we interpret it as selecting element
  $j$ from row $i$. More precisely, we want to construct the sets
  $S_1$, $\dots$, $S_m$ in such a way that if $(1,\delta(1))$,
  $\dots$, $(2k,\delta(2k))$ is a solution of $I$, then the following
  permutation $\rho$ of $[k']$ is a solution of $I'$:
\begin{gather*}
r^1_1,\bar r^1_1, c^1_{\delta(1)}, \bar c^1_{\delta(1)},
r^1_2,\bar r^1_2, c^1_{\delta(2)}, \bar c^1_{\delta(2)},
\dots,
r^1_{2k},\bar r^1_{2k}, c^1_{\delta(2k)}, \bar c^1_{\delta(2k)},\\
r^2_1,\bar r^2_1, c^2_{\delta(1)}, \bar c^2_{\delta(1)},
r^2_2,\bar r^2_2, c^2_{\delta(2)}, \bar c^2_{\delta(2)},\dots,
r^2_{2k},\bar r^2_{2k}, c^2_{\delta(2k)}, \bar c^2_{\delta(2k)},\\
r^3_1,\bar r^3_1, c^3_{\delta(1)}, \bar c^3_{\delta(1)},
r^3_2,\bar r^3_2, c^3_{\delta(1)}, \bar c^3_{\delta(2)},\dots,
r^3_{2k},\bar r^3_{2k}, c^3_{\delta(2k)}, \bar c^3_{\delta(2k)}.
\end{gather*}
The first property
that we want to ensure is that every solution of $I'$ looks roughly
like $\rho$ above: pairs $r^\ell_i\bar r^\ell_i$ and pairs
$c^\ell_j\bar c^\ell_j$ alternate in some order. Then we can define a
permutation $\delta$ such that $\delta(i)=j$ if $r^1_i\bar
r^1_i$ is followed by the pair $c^1_j\bar c^1_j$. The sets in instance
$I'$ will ensure that this permutation $\delta$ is a solution of
$I$. Let instance $I'$ contain the following groups of sets:
\begin{enumerate}
\item For every $1\le \ell \le 3$ and
  $1\le i \le 2k$, there is a set $\{r^\ell_i,\bar r^\ell_i\}$ ,
\item For every $1\le \ell \le 3$ and
  $1\le j \le 2k$, there is a set $\{c^\ell_j,\bar c^\ell_j\}$, 
\item For every $1\le \ell'< \ell''\le 3$, $1\le i \le 2k$,
  $X\subseteq [2k]$, there is a set $\{\bar r^{\ell'}_i,\bar r^{\ell''}_i\} \cup \{c^{\ell'}_j \mid j\in X\}\cup \{c^{\ell''}_j
  \mid j\not\in X\}$,
\item For every $1\le
  i \le k$, there is a set $\{\bar r^1_i\}\cup\{ c^1_j\mid 1\le j \le k\}$,
\item For every $k+1\le
  i \le 2k$, there is a set $\{\bar r^1_i\}\cup\{ c^1_j\mid k+1\le j \le 2k\}$,
\item For every two adjacent vertices $(i_1,j_1)\in I_1$ and
  $(i_2,j_2)\in I_2$, there is a set $\{\bar r^1_{i_1},\bar
  r^1_{i_2}\}\cup \{ c^1_j\mid 1\le j \le k, j\neq j_1\}\cup\{
  c^1_j\mid k+1\le j \le 2k, j\neq j_2\}$.

\end{enumerate}
Recall that every edge of instance $I$ goes between the independent
sets $I_1=\{(i,j)\mid i,j\le k\}$ and $I_2=\{(i,j)\mid i,j\ge k+1\}$.
Let us verify first that if $\delta$ is a solution of $I$, then the
permutation $\rho$ described above satisfies every set. It is clear
that sets in the first two groups are satisfied. To see that every set
in group 3 is satisfied, consider a set corresponding to a particular
$1 \le \ell' < \ell'' \le 3$, $1\le i \le 2k$, $X\subseteq [2k]$. If
$\delta(i)\in X$, then $\bar r^{\ell'}_i$ and $c^{\ell'}_{\delta(i)}$
are neighbors and both appear in the set; if $\delta(i)\not \in X$,
then $\bar r^{\ell''}_i$ and $c^{\ell''}_{\delta(i)}$ are neighbors
and both appear in the set. Sets in group 4 and 5 are satisfied
because $\delta(i)\le k$ for $1\le i \le k$ and $\delta(i)\ge k+1$ for
$k+1\le i \le 2k$. Finally, let $(i_1,j_1)\in V_1$ and $(i_2,j_2)\in
V_2$ be two adjacent vertices and consider the corresponding set in
group 6.  As the solution of $I$ is an independent set, either
$\delta(i_1)\neq j_1$ or $\delta(i_2)\neq j_2$. In the first case,
$\bar r^1_{i_1}$ and $c^1_{\delta(i_1)}$ are neighbors and both appear
in the set; in the second case, $\bar r^1_{i_2}$ and
$c^1_{\delta(i_2)}$ are neighbors and both appear in the set.

Next we show that if $\rho$ is a solution of $I'$, then a solution for
$I$ exists. We say that an element $\bar r^\ell_i$ is {\em good} if
its neighbors are $r^\ell_i$ and $c^{\ell'}_j$ for some $1\le \ell'
\le 3$ and $1\le j \le 2k$. Similarly, an element $c^{\ell}_j$ is good
if its neighbors are $\bar c^\ell_j$ and $\bar r^{\ell'}_i$ for some
$1\le \ell' \le 3$ and $1\le i \le 2k$. Our first goal is to show that
every $\bar r^\ell_i$ and $c^\ell_j$ is good. The sets in group 1 and
2 ensure that $r^\ell_i$ and $\bar r^\ell_i$ are neighbors,  and
$c^\ell_j$ and $\bar c^\ell_j$ are neighbors.

We claim that for every $1\le \ell'<\ell'' \le 3$, and $1\le i \le
2k$, if elements $\bar r^{\ell'}_i$ and $\bar r^{\ell''}_i$ are not neighbors,
then both of them are good.  Let us build a $4k$-vertex graph $B$ whose vertices
are $c^{\ell'}_j$, $c^{\ell''}_j$ ($1\le j \le 2k$).  Let us connect
by an edge those vertices that are neighbors in $\rho$.  Moreover, let
us make $c^{\ell'}_j$ and $c^{\ell''}_j$ adjacent for every $1 \le j
\le 2k$. Observe that the degree of every vertex is at most 2 (as
$c^{\ell'}_j$ has only one neighbor besides $\bar c^{\ell'}_j$).
Moreover, $B$ is bipartite: in every cycle, edges of the form
$c^{\ell'}_jc^{\ell''}_j$ alternate with edges not of this form.
Therefore, there is a bipartition $(Y,\bar Y)$ of $B$ such that the set
$Y$ (and hence $\bar Y$)  contains
exactly one of $c^{\ell'}_j$ and $c^{\ell''}_j$ for every $1\le j \le
2k$.  Group 3
contains a set $S_Y=\{\bar r^{\ell'}_i,\bar r^{\ell''}_i\} \cup Y$ and
a set $S_{\bar Y}=\{\bar r^{\ell'}_i,\bar r^{\ell''}_i\} \cup \bar Y$:
as $Y$ contains exactly one of $c^{\ell'}_j$ and $c^{\ell''}_j$, there is a
choice of $X$ that yields these sets. Permutation $\rho$ satisfies
$S_Y$ and $S_{\bar Y}$, thus each of $S_Y$ and $S_{\bar Y}$ contains a
pair of neighboring elements. By assumption, this pair cannot be
$\bar r^{\ell'}_i$ and $\bar r^{\ell''}_i$. As $Y$ induces an independent
set of $B$, this pair cannot be contained in $Y$ either. Thus the only
possibility is that one of $\bar r^{\ell'}_j$ and $\bar r^{\ell''}_j$ is the
neighbor of an element of $Y$. If, say, $\bar r^{\ell'}_j$ is a neighbor of
an element $y\in Y$, then $\bar r^{\ell'}_j$ is good. In this case,
$\bar r^{\ell'}_j$ is not the neighbor of any element of $\bar Y$, which
means that the only way two members of $S_{\bar Y}$ are neighbors if
$\bar r^{\ell''}_j$ is a neighbor of a member of $\bar Y$, i.e.,
$\bar r^{\ell''}_j$ is also good.

At most one of $\bar r^{2}_i$ and $\bar r^{3}_i$ can be the neighbor
of $\bar r^{1}_i$, thus we can assume that $\bar r^{1}_i$ and $\bar
r^{\ell}_i$ are not neighbors for some $\ell\in \{2,3\}$. By the claim
in the previous paragraph, $\bar r^{1}_i$ and $\bar r^{\ell}_i$ are
both good. In particular, this means that $\bar r^{1}_i$ is not the
neighbor of $\bar r^{2}_i$ and $\bar r^3_i$, hence applying again the
claim, it follows that $\bar r^{2}_i$ and $\bar r^3_i$ are both good.
Thus $\bar r^{\ell}_i$ is good for every $1\le \ell \le 3$ and $1\le i
\le 2k$, and the pigeonhole principle implies that $c^{\ell}_j$ is
good for every $1\le \ell \le 3$ and $1\le i \le 2k$.

As every $c^{1}_j$ is good, the sets in groups 4 and 5 can be
satisfied only if every $\bar r^{1}_i$ has a neighbor $c^{1}_j$. Let
$\delta(i)=j$ if $c^1_j$ is the neighbor of $\bar r^{1}_i$; clearly
$\delta$ is a permutation of $[2k]$. We claim that $\delta$ is a
solution of $I$. The sets in group 4 and 5 ensure that $\delta(i)\le
k$ for every $1\le i \le k$ and $\delta(i)\ge k+1$ if $k+1\le i \le
2k$. To see that $(1,\delta(i))$, $\dots$, $(2k,\delta(2k))$ is an
independent set, consider two adjacent vertices $(i_1,j_1)\in I_1$ and
$(i_2,j_2)\in I_2$. We show that it is not possible that
$\delta(i_1)=j_1$ and $\delta(i_2)=j_2$. Consider the set $S$ in group
6 corresponding to the edge connecting $(i_1,j_1)$ and $(i_2,j_2)$. As
$\bar r^1_{i_1}$, $\bar r^1_{i_2}$, and every $c^1_j$ is good, then
only way $S$ is can be satisfied is that $\bar r^1_{i_1}$ or $\bar
r^1_{i_2}$ is the neighbor of some $c^1_j$ appearing in $S$. If
$\delta(i_1)=j_1$ and $\delta(i_2)=j_2$, then the $c^1_{j_1}$ and
$c^1_{j_2}$ are the neighbors of $\bar r^1_{i_1}$ and $\bar
r^1_{i_2}$, respectively, but $c^1_{j_1}$ and $c^1_{j_2}$ do not
appear in $S$. This shows that if there is a solution for $I'$, then there
is a solution for $I$ as well.

The size of the constructed instance $I'$ is polynomial in $2^k$. Thus
if $I'$ can be solved in time $2^{o(k'\log k')}\cdot |I'|
=2^{o(k\log k)}\cdot 2^{O(k)}=2^{o(k\log k)}$, then
this gives a $2^{o(k\log k)}$ time algorithm for \kkbis.
\end{proof}

\begin{theorem}
\label{th:dist}
Assuming the ETH, there is no $2^{o(d\log d)}\cdot n^{O(1)}$ time algorithm for \dist.
\end{theorem}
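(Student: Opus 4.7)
The plan is to reduce from \cperm\ (Theorem~\ref{th:cperm}). Given an instance of \cperm\ with universe $[k]$ and sets $S_1,\dots,S_m$, I will construct a graph $G$ on $n=\mathrm{poly}(k,m)$ vertices and set the distortion parameter $d=\Theta(k)$, such that $G$ admits a non-contracting embedding into $\mathbb{Z}$ of distortion at most $d$ if and only if the \cperm\ instance is a yes-instance. A $2^{o(d\log d)}\cdot n^{O(1)}$ algorithm for \dist\ then yields a $2^{o(k\log k)}\cdot m^{O(1)}$ algorithm for \cperm, contradicting Theorem~\ref{th:cperm} and hence the ETH.

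The construction has three layers. First, for every element $x\in[k]$ I attach a rigid \emph{block gadget} $B_x$ (for instance a clique or a star of size slightly larger than $d$ centered at a distinguished vertex $v_x$); in any distortion-$d$ embedding such a gadget is forced to occupy a contiguous window of roughly $d$ consecutive integers, so the $k$ blocks carve the line into a sequence of $k$ non-overlapping windows whose left-to-right order induces a permutation $\rho$ of $[k]$. Second, I add a global \emph{scaffold}: a long auxiliary path attached simultaneously to every $v_x$ through padding edges calibrated so that any distortion-$d$ embedding has $|g(v_{\rho(j)})-g(v_{\rho(j+1)})|\le C\cdot d$ for some fixed constant $C$, ensuring that consecutive $v_x$'s in the induced permutation sit in consecutive blocks on the line. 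Third, for every set $S_i$ I add a \emph{set gadget} $T_i$: a vertex $t_i$ joined to every $v_x$ with $x\in S_i$ by a short path, together with a long forced detour whose graph-distance structure is tuned so that in any distortion-$d$ embedding the gadget can be laid out only if two elements $x,y\in S_i$ are placed in neighboring blocks of $\rho$. (The detour lengths are chosen so that all other configurations would force some edge of $T_i$ to stretch beyond $d$.)

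The correctness argument has the usual two directions. For the forward direction, given a permutation $\rho$ satisfying \cperm, I lay out the blocks $B_{\rho(1)},\dots,B_{\rho(k)}$ left-to-right, embed each block internally in its window using its natural layout, route the scaffold across the line in the obvious way, and route each $T_i$ through the two witnessing neighboring blocks; a direct calculation then verifies every edge has stretch at most $d$ and no pair gets contracted. For the reverse direction, I argue in three steps: (i) each block $B_x$ is placed in a connected window of length $O(d)$, producing a well-defined permutation $\rho$ from the window order; (ii) the scaffold forces the windows to be contiguous with no large gaps; (iii) for every $T_i$, the only way the detour and its short paths to $v_x$, $x\in S_i$, fit within distortion $d$ is that some pair $x,y\in S_i$ occupies adjacent windows, i.e.\ $\rho$ hits $S_i$.

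The main obstacle is the third step of the reverse direction: ensuring that the set gadgets $T_i$ are simultaneously rigid enough to enforce the neighbor constraint yet loose enough to be embeddable whenever a valid permutation exists. Calibrating the length of the detour and the strength of the block gadget against the distortion budget $d=\Theta(k)$ so that the two conditions meet precisely is the delicate part of the argument; this is where the careful accounting reminiscent of the B\u{a}doiu--Dhamdhere--Gupta--\ldots\ style distortion gadgets will be needed. Once the calibration goes through, the size bound $n=\mathrm{poly}(k,m)$ and $d=O(k)$ deliver the claimed $2^{o(d\log d)}\cdot n^{O(1)}\Rightarrow 2^{o(k\log k)}\cdot m^{O(1)}$ implication, completing the ETH-based lower bound.
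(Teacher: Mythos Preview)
Your overall framework (reduce from \cperm, set $d=\Theta(k)$, force a permutation via rigid pieces on the line) matches the paper, but the set gadget you sketch has a genuine gap that the paper resolves by a different global architecture.

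In your construction there is a single block $B_x$ per element $x\in[k]$, and for each set $S_i$ a vertex $t_i$ joined by short paths to every $v_x$ with $x\in S_i$. The problem is that a set $S_i$ may contain three or more elements, and in any permutation satisfying the instance only \emph{two} of them are guaranteed to be neighbors; the remaining elements of $S_i$ can land arbitrarily far apart along the line, at mutual distance $\Omega(k\cdot d)$. Wherever you place $t_i$, at least one of its short paths to a far-away $v_x$ is stretched by $\Omega(k\cdot d)\gg d$. A ``long forced detour'' does not help here: if you lengthen the $t_i$--$v_x$ paths enough to absorb this stretch, then in the forward direction you lose the rigidity that pins $t_i$ between two neighboring blocks, so the gadget no longer certifies that $S_i$ is hit. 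This is exactly the calibration you flag as delicate, and it does not close with a single copy of the universe.

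The paper sidesteps the issue by \emph{replicating the universe once per set}. It builds $m$ copies $U_1,\dots,U_m$ of $[k]$, strings them along a path $v_1,\dots,v_{m+1}$ with cliques of size $d{+}1$ anchoring the two ends, and adds the matching edges $u_j^i u_j^{i+1}$ between consecutive copies. In any distortion-$d$ embedding each $U_i$ is squeezed into the interval between $v_i$ and $v_{i+1}$, and the inter-copy matching forces all $U_i$ to appear in the \emph{same} order $\rho$. The set vertex $s_i$ is then adjacent only to vertices of its own copy $U_i$, all of which lie within a window of width $d$; hence $s_i$ can always be inserted, and its placement between two consecutive $U_i$-vertices encodes precisely a neighboring pair in $S_i$. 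The replication is the missing idea in your proposal: it localizes each set constraint to a short interval while a separate mechanism (the matching edges) propagates the permutation globally.
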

\begin{figure}[t]
 \begin{center}
\includegraphics[scale=0.8]{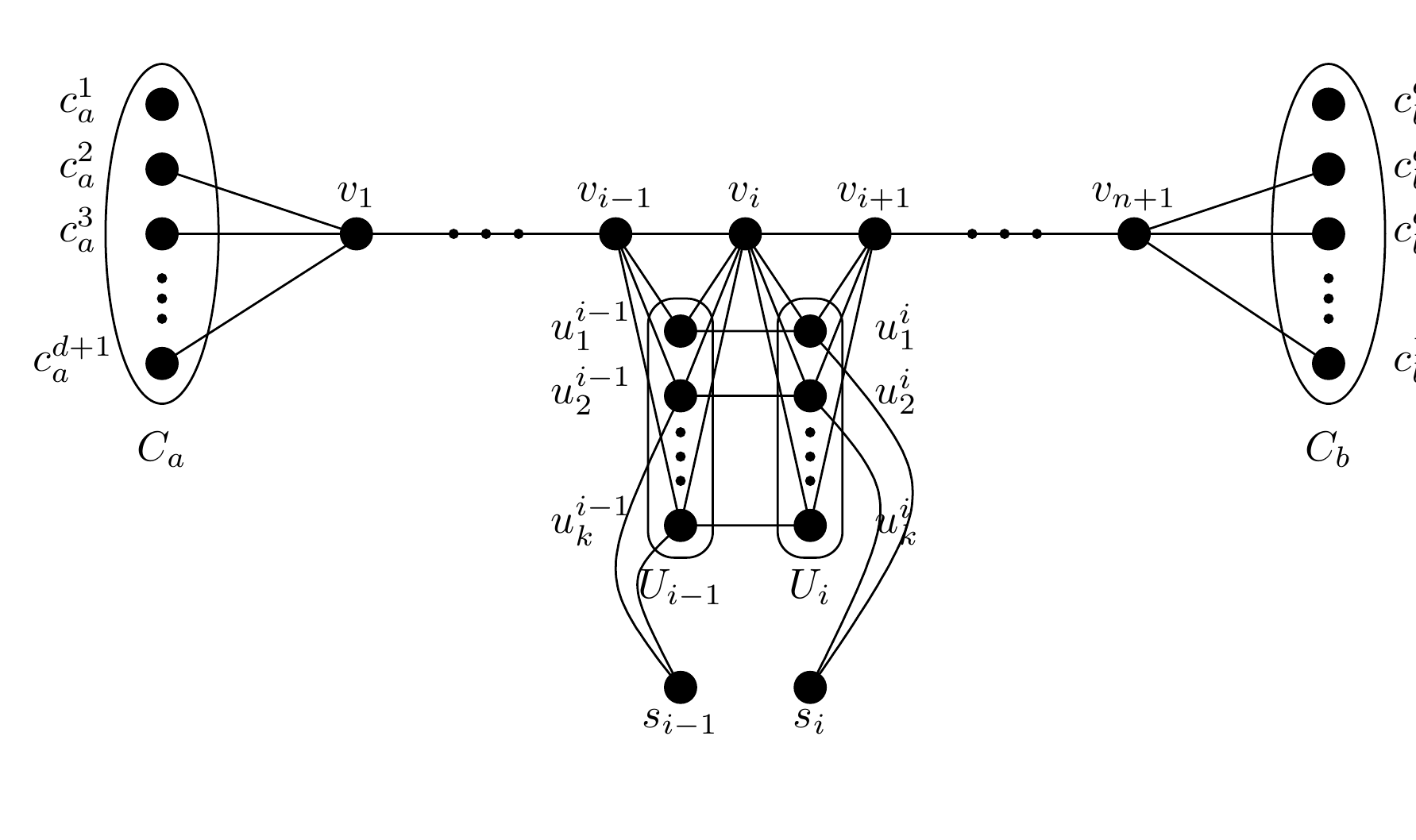}
 % reduction.pdf: 480x320 pixel, 72dpi, 16.93x11.29 cm, bb=0 0 480 320
\end{center}
\label{fig:distortion}
\caption{A construction used in Theorem~\ref{th:dist}}
\end{figure}

\begin{proof}
We prove the theorem by a reduction from the \cperm\ problem. Let $I$ be an instance of \cperm\, consisting 
of subsets $S_1$, $\dots$, $S_m$ of $[k]$. Now we show how to construct the graph $G$, an input to  
\dist\ corresponding to $I$.  
For an ease of presentation we identify $[k]$ with vertices 
$u_1,\ldots, u_k$. We also set $U=\{u_1,\ldots, u_k\}$ and $d=2k$. The vertex set of $G$ consists of the following set of vertices.  
\begin{itemize}
 \setlength\itemsep{-.9mm}
\item A vertex $u^{i}_j$ for every $1\leq i \leq m$ and  $1\leq j\leq k$. We also denote the set $\{u_1^i,\ldots ,u_k^i\}$ by $U_i$. 
\item A vertex $s_i$ for each set $S_i$. 
\item Two cliques $C_a$ and $C_b$ of size $d+1$ consisting of vertices $c_a^1,\ldots,c_a^{d+1}$ and $c_b^1,\ldots,c_b^{d+1}$ respectively. 
\item A path $P$ of length $m$ (number of edges) consisting of vertices $v_1,\ldots,v_{m+1}$.  
\end{itemize}
We add the following more edges among these vertices. We add edges from all the vertices in clique $C_a$ but $c_a^1$ to $v_1$ 
%, the first vertex on the path $P$ 
and add edges from all the vertices in clique $C_b$ but $c_b^1$ to $v_{m+1}$. 
%, the last vertex on the path $P$. 
For all 
$1\leq i < m$ and  $1\leq j\leq k$, make $u^{i}_j$ adjacent to $v_i$, $v_{i+1}$ and $u^{i+1}_j$. For $1\leq j\leq k$, make $u^m_j$ adjacent to  
$v_m$, $v_{m+1}$. Finally make $s_i$ adjacent to $u_j^i$ if $u_j\in S_i$. 
This concludes the construction. A figure corresponding to the construction can be found in Figure~\ref{fig:distortion}.

For our proof of correctness we also need the following known facts about distortion $d$ embedding of a graph into integers. 
For an embedding $g$, let $v_1$, $v_2,\dots,v_q$ be an ordering of
the vertices such that $g(v_1) < g(v_2) < \dots < g(v_n)$. If $g$ is such that for all $1\leq i < q$, 
$D(v_i,v_{i+1})=|g(v_i)-g(v_{i+1})|$, then the mapping $g$ is called {\em pushing embedding}. It is known 
that pushing embeddings are always non-contracting and if $G$ can be embedded into integers with 
distortion $d$, then there is a pushing embedding of $G$ into integers with distortion $d$~\cite{DBLP:conf/icalp/FellowsFLLRS09}.

Let a permutation $\rho$ of $[k]=U$ be a solution to $I$, an instance of \cperm. This automatically leads to a permutation 
on $U$ that we represent by $\rho(U)$. There is a natural bijection between $U$ and $U_i$ 
with $u_j\in U$ being mapped to $u_j^i$. So when we write $\rho(U_i)$ then this means that 
the vertices of $U$ are permuted with respect to $\rho$ and being identified with its counterpart in $U_i$. 
Now we give a pushing embedding for the vertices in $G$ with $c_a^1$ being 
placed at $0$.   
% All the vertices are placed on integer line with $c_a^1$ being 
% placed at $0$ and two consecutive vertices $u$, $v$ in the permutation being exactly $D(u,v)$ apart -- 
% their distance in the graph $G$. 
All the vertices except the set vertices $s_i$ appear in the following order 
$$c_a^1,\ldots,c_a^{d+1},v_1,\rho(U_1),v_2,\rho(U_2),v_3,\ldots,v_m,\rho(U_m),v_{m+1},c_b^{d+1},\ldots,c_b^1.$$
Since $\rho $ is a solution to $I$ we know that for every $S_i$ there exists a $1\leq j <k$ such that $\rho(j)\rho(j+1)\in S_i$. 
We place $s_i$ between $\rho(u_j^i)$ and $\rho(u_{j+1}^i)$. By our construction the given embedding is pushing and hence non-contracting. 
To show that for every pair of vertices $u,v \in V(G)$, $|g(u)-g(v)|\le d\cdot D(u,v)$, we only have to show that for every edge $uv\in E(G)$, 
$|g(u)-g(v)|\leq d$. This can be readily checked from the construction. What needs to be verified is that for any two adjacent vertices $u$ and $v$, the sequence of vertices between $u$ and $v$ in the pushing embedding give a total distance at most $d\cdot D(u,v)$. 
The cruical observation is that the distance between two consecutive vertices from $U_i$ is $2$, and hence it must be at least distance 
$2$ apart on the line. If $s_i$ is adjacent to two consecutive vertices in $U_i$ we can ``squeeze'' in $s_i$ between those two vertices without 
disturbing the rest of the construction.

In the reverse direction, assume that we start with a distortion $d$ pushing embedding of $G$.   
Consider the layout of the graph induced on $C_a$ and the vertex $v_1$. This is a clique of size $d+2$ 
minus an edge and hence $C_a \cup \{v_1\}$ can be layed out in two ways:  
$c_a^1,C_a\setminus \{c_a^1\}, v_1$ or  $v_1,C_a\setminus \{c_a^1\}, c_a^1$. Since we can 
reverse the layout, we can assume without loss of generality that it is $c_a^1,C_a\setminus \{c_a^1\}, v_1$. Without loss of generality 
we can also assume that $v_1$ is placed on position $0$. Since every vertex in $U_1$ is adjacent to $v_1$ and the negative positions are 
taken by the vertices in $C_a$, the $k=d/2$ vertices 
of $U_1$ must lie on the positions $\{1,\ldots,d\}$. We first argue that no vertex of $U_1$ occupies the position $d$. Suppose it does. Then the 
rightmost vertex of $U_2$ (to the right of $v_1$ in the embedding) must be on position at least $2d$. Simultaneously $v_2$ must be 
on position at most $d-1$ since $d$ is already occupied and $v_2$ is adjacent to $v_1$. But $v_2$ is adjacent to the rightmost vertex of $U_2$ and hence the distance on the line between them becomes at least $d+1$, a contradiction. So $U_1$ must use only positions in $\{1,\ldots,d-1\}$. Since the distance between two consecutive vertices in $U_1$ is $2$ together with the fact that we started with a pushing embedding imply that 
the vertices of $U_1$ occupy all odd positions of $\{1,\ldots ..,d-1\}$. Now, $U_2$ must be on the positions 
in $\{d+1,\ldots ,2d\}$ with the rightmost vertex in $U_2$ being on at least $2d-1$. Since $d-1$ is occupied by someone in $U_1$ and $v_2$ 
is adjacent to both $v_1$ and the rightmost vertex of $U_2$ it follows that $v_2$ must be on position $d$. 

We can now argue similarly to the previous paragraph that $U_2$ does not use position $2d$, and hence $v_3$ is on position $2d$ while $U_2$ 
must use the odd positions of $\{d+1,\ldots,2d-1\}$. We can repeat this argument for all $i$ and position the vertex  $v_i$ of the path at 
$d(i-1)$ and place the vertices of $U_i$ at odd positions between $d(i-1)$ and $di$. Of course, all the vertices of the clique 
$C_b$ will come after $v_{m+1}$. 

Consider the order in which the embedding puts the vertices of $U_1$. We claim that it 
must put the vertices of $U_2$ in the same order. Look at the embedding of $U_1$ and $U_2$ from left to right and let $j$ be the first index 
where $u_\alpha^1$ of $U_1$ is placed between $0$ and $d$ while $u_\beta^2$ of $U_2$ is placed between $d$ and $2d$ and $\alpha \neq \beta $. This implies that 
$u_\alpha^2$
%because if for some vertex $u$ 
appears further back in the permutation of $U_2$ and hence the distance between the positions of $u_\alpha^1$ and $u_\alpha^2$ in $U_1$ 
is more than $d$ while $u_\alpha^1$ and $u_\alpha^2$ are adjacent to each other in the graph. By repeating this argument for all $i$ and 
$i+1$ we can show that order of all $U_i$'s is the same. Consider $s_i$. It must be put on some even position, with some vertices of 
$U_j$ coming before and after $s_i$. But then, because we started with pushing embedding we have that 
$s_i$ is adjacent to both those vertices, and hence $i=j$ as $s_i$ is adjacent to only the vertices in $U_i$. 

Now we take the permutation $\rho$ for $[k]$, imposed by the ordering of $U_1$, as a solution to the instance $I$ of \cperm.  For every set 
$S_i$ we need to show that there exists a $1\le j <k$ such that $\rho(j),\rho(j+1)\in S_i$. Consider the corresponding $s_i$ in the embedding 
and look at the vertices that are placed left and right of it. Let these be $u_\alpha^i$ and $u_\beta^i$. Then by construction $\alpha$ and $\beta$ are neighbors  
to $s_i$ in $G$ and hence $\alpha$ and $\beta$ belong to $S_i$. Now since the ordering of $U_i$'s are same we have that they are 
consecutive  in the permutation $\rho$. This concludes the proof in the reverse direction. 
%and the two vertices are in $x_i$ and come consecutively in the order of $U$.

The claim in the previous paragraph shows that an algorithm finding a distortion $d$ embedding of $G$ into  line 
%, an instance to \dist\  corresponding to an instance $I$ of  \cperm, 
solves the instance $I$ of \cperm. Note the number of vertices in $G$ is bounded by a polynomial in $k$ and $m$. Therefore a $2^{o(d \log d)} \cdot |V(G)|^{O(1)} $ algorithm for \dist\ would give a   
 $2^{o(k \log k)} \cdot (km)^{O(1)} $ algorithm for \cperm, violating the ETH by Theorem~\ref{th:cperm}. 
\end{proof}

\section{Disjoint Paths}
There are many natural graph problems that are fixed-parameter
tractable parameterized by the treewidth of the input graph. In most
cases, these results can be obtained by well-understood dynamic
programming techniques. In fact, Courcelle's Theorem provide a clean
way of obtaining such results. If the dynamic programming needs to
keep track of a permutation, partition, or a matching at each node,
then running time of such an algorithm is typically of the form $w^{O(w)}\cdot n^{O(1)}$
on graphs with treewidth $w$ \cite{Scheffler94}. We demonstrate a problem where this form
of running time is necessary for the solution and it cannot be
improved to $2^{o(w\log w)}\cdot n^{O(1)}$. We start with definitions 
of treewidth and pathwidth. 

\paragraph{Definitions of Treewidth and Pathwidth.}
\label{apx:twpw}
A \emph{tree decomposition} of a graph $G$ is a pair $(\mathcal{X},T)$ where $T$
is a tree and ${\cal X}=\{X_{i} \mid i\in V(T)\}$ is a collection of subsets
of $V$ such that:
\begin{enumerate}
\item  $\bigcup_{i \in V(T)} X_{i} = V$, 
%\noindent 
\item  for
each edge $xy \in E$, $\{x,y\}\subseteq X_i$ for some
$i\in V(T)$; 
%\noindent  
\item for each $x\in V$ the set $\{ i \mid x \in X_{i} \}$
induces a connected subtree of $T$. 
\end{enumerate}

The \emph{width} of the  tree decomposition is $\max_{i \in V(T)} \{|X_{i}| - 1\}$. The \emph{treewidth} of a
graph $G$ is the minimum width over all tree decompositions of $G$.
We denote by $\tw(G)$ the treewidth  of graph $G$. If in the definition of treewidth we restrict 
the tree $T$ to be a path then we get the notion of pathwidth and denote it by  $\pw(G)$. 

Now we return to our problem. 
Given an undirected graph $G$ and $p$  vertex pairs $(s_i,t_i)$, the \disjpath\ problem asks whether there exists $p$ mutually vertex disjoint paths in $G$ linking these pairs. This is one of the classic problems in combinatorial optimization and algorithmic graph theory,  
and has many applications, for example in transportation networks, VLSI layout, and 
virtual circuits routing in high-speed networks. 
The problem is NP-complete if $p$ is part of the input and remains so 
even if restrict the input graph to be planar~\cite{Karp75,Lynch75}.
However if $p$ is fixed then the problem is famously fixed-parameter
tractable as a consequence of the seminal Graph Minors theory of
Robertson and Seymour~\cite{RobertsonS95b}.   A basic building block in their algorithm for \disjpath\  
is an algorithm for \disjpath\ on graphs of bounded treewidth.
%A basic building block for this algorithm is a dynamic programming algorithm over graphs of bounded treewidth. 
%The idea of this algorithm is that if input has high enough treewidth 
%then we can find an irrelevant vertex, deletion of which does not change the solution to the problem, and we repeat this 
%until the treewidth becomes too small. 
%Once the graph has small treewidth one can solve the problem using dynamic 
%programming over the tree-decomposition~\cite{RobertsonS95b,Scheffler94}.  We refer to survey by Robertson and Seymour~\cite{RobertsonS90} for an outline 
%of this algorithm as well as to the recent paper of  Kawarabayashi and
%Wollan \cite{KawarabayashiW10} for a new and shorter proof of this result. 
To our interest is the following 
parameterization of  \disjpath.
% with the treewidth of the input graph. 

\problem{\disjpath}
 {A graph $G$ together with a tree-decomposition of width $w$, and $p$  vertex pairs $(s_i,t_i)$.}{$w$}
{Does there exist $p$ mutually vertex disjoint paths in $G$ linking $s_i$ to $t_i$?}

The best known algorithm for this problem runs in time $2^{O(w\log w)}\cdot n$~\cite{Scheffler94} and here 
we show that this is indeed optimal.  To get this lower bound we first give a linear parameter reduction from
 \kkhs\ to \dirpath,   a variant of \disjpath\ where the input is a directed graph, parameterized by pathwidth  
 of the underlying undirected graph. Finally we obtain a lower bound of $2^{o(k\log k)} |V(G)|^{O(1)}$ on \disjpath\  
 parameterized by pathwidth under the ETH,  by giving a linear parameter reduction from \dirpath\  parameterized by 
 pathwidth to \disjpath\  parameterized by pathwidth. Obviously, this proves the 
same lower bound under the (potentially much smaller) parameter treewidth as well.

\begin{theorem}
Assuming the ETH, there is no $2^{o(w\log w)}\cdot n^{O(1)}$ time
algorithm  for \dirpath.
\end{theorem}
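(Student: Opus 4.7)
The plan is to prove the lower bound by a linear-parameter reduction from \kkhs\ (Theorem~\ref{th:kkhs}): given a hitting-set instance with universe $[k]\times[k]$ and sets $S_1,\dots,S_m$, I would construct a directed graph $G$ of size $\mathrm{poly}(k,m)$, a list of $p=k+m$ source-sink pairs, and an explicit path decomposition of $G$ of width $w=O(k)$, such that $G$ admits the required vertex-disjoint paths iff the \kkhs-instance is solvable. Feeding this into a hypothetical $2^{o(w\log w)}\cdot n^{O(1)}$ algorithm for \dirpath\ would then solve \kkhs\ in $2^{o(k\log k)}\cdot(km)^{O(1)}$ time, violating Theorem~\ref{th:kkhs} and hence the ETH.

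The graph would be arranged as a horizontal \emph{spine} with one block per set $S_\ell$. There are $k$ \textbf{row-selector} demands $(s_i,t_i)$ whose paths sweep left-to-right across every block, and $m$ local \textbf{set-check} demands $(s'_\ell,t'_\ell)$ whose routing is confined to block $\ell$. The decisive design choice, needed to keep pathwidth $O(k)$ rather than the $\Omega(k^2)$ of a naive ``$k$ parallel column tracks per row'' encoding, is that each row contributes only $O(1)$ active vertices to each cut; the column chosen in row $i$ is encoded not by which track the selector takes but by the \emph{pairing pattern} of the $k$ selector segments as they exit each block, controlled by small ``swap'' gadgets at the block boundaries. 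A single block realises at most $k!$ pairings, which already suffices for $2^{\Theta(k\log k)}$ configurations; to encode arbitrary mappings $\rho:[k]\to[k]$ rather than only permutations, I would compose $O(1)$ parallel permutation layers so that the composed routing realises an arbitrary $\rho$. The set-check gadget at block $\ell$ is then wired to the selector segments so that the local demand $s'_\ell \to t'_\ell$ is routable precisely when the currently encoded $\rho$ satisfies $(i,\rho(i))\in S_\ell$ for some $i$; directedness is exploited throughout to forbid backtracking and rule out parasitic routings. Pathwidth $O(k)$ then follows immediately from the linear layout, since at every cut the bag contains only the $O(k)$ active selector vertices plus the $O(1)$ internal vertices of the current block.

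Correctness would be verified by the usual two-direction argument. Given a hitting set $\rho$ for \kkhs, I set the swap gadgets so that the pairing pattern along the spine encodes $\rho$, and route each $s'_\ell \to t'_\ell$ through the element of $S_\ell$ that is hit; conversely, any successful routing of the $k+m$ demands determines a well-defined $\rho$ from the pairing pattern, and the routability of each $s'_\ell \to t'_\ell$ forces $\rho$ to hit $S_\ell$. The main obstacle is the explicit construction of the swap and set-check gadgets, which must simultaneously achieve (i)~only $O(1)$ active vertices per row at every cut (for the $O(k)$ pathwidth bound), (ii)~an exact biconditional between routability of $s'_\ell \to t'_\ell$ and hitting of $S_\ell$, with directedness ruling out spurious routings, and (iii)~enough expressive power across layers to encode an arbitrary mapping $\rho:[k]\to[k]$ despite each single layer supporting only the $k!$ permutations. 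Once these gadgets are pinned down, both the pathwidth verification and the two-direction correspondence between solutions are routine.
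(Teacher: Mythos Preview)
Your proposal is a plan rather than a proof, and the plan contains a concrete mathematical error. You want to encode an arbitrary mapping $\rho:[k]\to[k]$ (that is what a \kkhs\ solution is) using ``$O(1)$ parallel permutation layers'' whose ``composed routing realises an arbitrary $\rho$''. But a composition of permutations of $[k]$ is again a permutation of $[k]$; no constant number of permutations, combined by composition, can represent a non-injective $\rho$. Switching the source problem to one of the permutation variants (e.g.\ \kkbis) would dodge this, but you explicitly chose \kkhs. Beyond this, the entire technical content---the swap gadgets that make the pairing pattern well-defined and consistent across blocks, and the set-check gadget that tests a disjunction over up to $k$ rows while each row contributes only $O(1)$ vertices to the bag---is deferred to ``the main obstacle'', so there is no proof here to evaluate.

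For contrast, the paper's reduction does \emph{not} try to keep each block to $O(k)$ vertices. Each per-set gadget $G_{k,S}$ has $\Theta(k^2)$ vertices: a full $k\times k$ grid $v_{i,j}$ together with, for each row $i$, a private length-$\Theta(k)$ directed path $P_i$ carrying ``complement'' vertices $v^*_{i,j}$ with the property that $v^*_{i,j}$ is free in a solution iff $v_{i,j}$ is used. Consecutive gadgets are glued by identifying the $v^*$-layer of one with the $v$-layer of the next, which forces all gadgets to represent the \emph{same} mapping $\rho$, and a single $(s,t)$ demand per gadget checks hitting of $S$. The pathwidth bound is obtained not by making the gadget small but by exhibiting bags $B_{t,i,j}$ that contain, from gadget $G_t$, only the $2k$ vertices $a_1,\dots,a_k,b_1,\dots,b_k$, the two vertices $s,t$, one whole path $P_i$ (of length $O(k)$), and $O(1)$ further vertices---so each bag has size $O(k)$ even though the gadget has $\Theta(k^2)$ vertices. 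In short, you tried to engineer away the $\Theta(k^2)$ grid with an encoding trick that does not exist, whereas the paper keeps the grid and shows directly that it nevertheless has pathwidth $O(k)$.
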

\begin{proof}
  The key tool in the reduction from \kkhs\ to \dirpath\ is the
  following gadget. For every $k\ge 1$ and set $S\subseteq [k]\times
  [k]$, we construct the gadget $G_{k,S}$ the following way (see Figure~\ref{fig:path} for illustration).
\begin{figure}
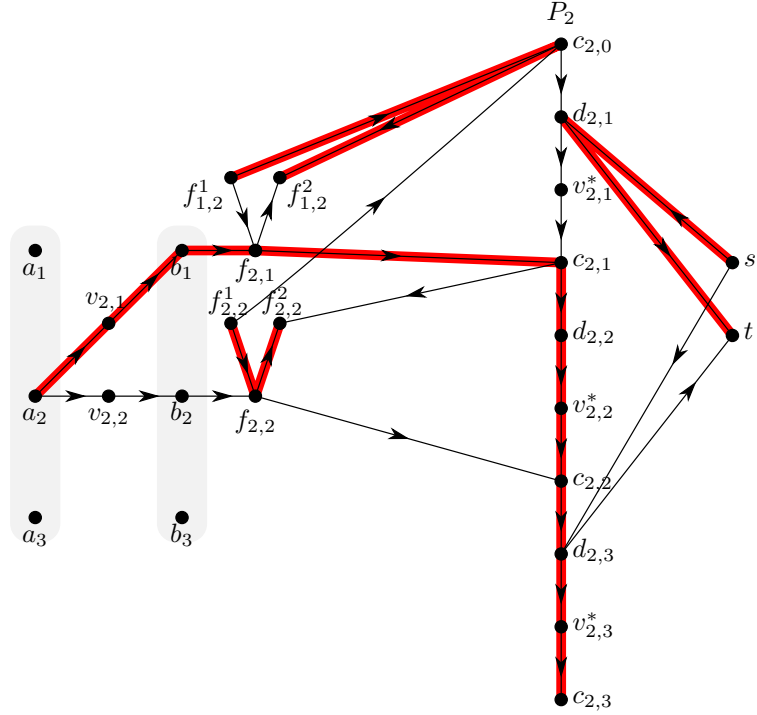

\begin{center}
{\small \svg{0.65\linewidth}{path}}
\caption{Part of the gadget $G_{3,S}$ with $(2,1),(2,3)\in S$. The highlighted paths satisfy the demands $(a_2,c_{2,3})$, $(f^1_{1,2},f^2_{1,2})$, $(f^1_{2,2},f^2_{2,2})$ and $(s,t)$.}\label{fig:path}
\end{center}
\end{figure}
\begin{itemize}
 \setlength\itemsep{-4pt}
\item For every $1\le i \le k$, it contains vertices $a_i$, $b_i$.
\item For every $1\le i, j \le k$, it contains a vertex $v_{i,j}$ and
  edges $\ora{a_iv_{i,j}}$, $\ora{v_{i,j}b_j}$.
\item For every $1\le i \le k$, it contains a directed path
  $P_i=c_{i,0}d_{i,1}v^*_{i,1}c_{i,1}\dots d_{i,k}v^*_{i,k}c_{i,k}$.
\item For every $1\le i, j \le k$, it contains vertices $f_{i,j}$,
  $f^1_{i,j}$, $f^2_{i,j}$ and edges $\ora{b_{j}f_{i,j}}$,
  $\ora{f_{i,j}c_{i,j}}$, $\ora{f^1_{i,j}f_{i,j}}$,
  $\ora{f_{i,j}f^2_{i,j}}$, $\ora{f^1_{i,j}c_{i,0}}$,
  $\ora{c_{i,j-1}f^2_{i,j}}$.
\item It contains two vertices $s$ and $t$, and for every $(i,j)\in S$,
  there are two edges $\ora{sd_{i,j}}$, $\ora{d_{i,j}t}$.
\end{itemize}
The demand pairs in the gadget are as follows:
\begin{itemize}
 \setlength\itemsep{-.9mm}
\item For every $1\le i \le k$, there is a demand $(a_i,c_{i,k})$.
\item For every $1\le i, j \le k$, there is a demand
  $(f^1_{i,j},f^2_{i,j})$.
\item There is a demand $(s,t)$.
\end{itemize}
This completes the description of the gadget. The intuition behind the construction is the following. To satisfy the demand $(a_i,c_{i,k})$, the path needs to leave $a_i$ to $v_{i,j}$ for some $1\le j \le k$.  Thus if a 
collection of paths form a solution for the gadget, then for every
$1\le i \le k$, exactly one of the vertices $v_{i,1}$, $\dots$,
$v_{i,k}$ is used by the paths. We say that a solution {\em represents}
the mapping $\rho: [k]\to [k]$ if for every $1\le i \le k$, vertex $v_{i,\rho(i)}$ is used by
the paths in the solution. Moreover, if the path satisfying $(a_i,c_{i,k})$ leaves $a_i$ to $v_{i,j}$, then it enters the path $P_i$ via the vertex $f_{i,j}$, and reaches $c_{i,k}$ on the path $P_i$. In this case, the demand   $(f^1_{i,j},f^2_{i,j})$ cannot use vertex $f_{i,j}$, and has to use the part of $P_i$ from $c_{i,0}$ to $c_{i,j-1}$. Then these two paths leave free only vertex $v^*_{i,j}$ of $P_i$ and no other $v^*$. This means that the $v_{i,j}$ and $v^*_{i,j}$ vertices behave exactly the opposite way: if $v_{i,\rho(i)}$ is used by the solution, then every vertex $v^*_{i,1}$, $\dots$, $v^*_{i,k}$ is used, with the exception of $v^*_{i,j}$. 
The following claim formalizes this important property of the gadget.
\begin{claim}
\label{claim:dirdispath}
For every $k\ge 1$ and $S\subseteq [k]\times [k]$, gadget $G_{k,S}$
has the following properties:
\begin{enumerate}
 \setlength\itemsep{-4pt}
\item For every $\rho: [k] \to [k]$ that hits $S$, gadget $G_{k,S}$ has a solution that
  represents $\rho$, and $v^*_{i,\rho(i)}$ is not used by the paths in
  the solution for any $1\le i\le k$.
\item If $G_{k,S}$ has a solution that represents $\rho$, then $\rho$
  hits $S$ and vertex $v^*_{i,j}$ is used by the paths in the solution
  for every $1\le i \le k$ and $j\neq \rho(i)$.
\end{enumerate}
\end{claim}
\begin{proof}
  To prove the first statement, we construct a solution the following way.
  Demand $(a_{i},c_{i,j})$ is satisfied by the path
  $a_iv_{i,\rho(i)}b_{\rho(i)}f_{i,\rho(i)}c_{i,\rho(i)}\dots
  c_{i,k}$, where we use a subpath of $P_i$ to go from $c_{i,\rho(i)}$
  to $c_{i,k}$. For every $1\le i,j\le k$, if $j\neq \rho(i)$, then
  the demand $(f^1_{i,j},f^2_{i,j})$ is satisfied by the path
  $f^1_{i,j}f_{i,j}f^2_{i,j}$. If $j=\rho(i)$, then vertex $f_{i,j}$
  is already used by the demand $(a_{i},c_{i,j})$. In this
  case demand $(f^1_{i,j},f^2_{i,j})$ is satisfied by the path
  $f^1_{i,j}c_{i,0}\dots c_{i,j-1}f^2_{i,j}$. Finally, as $\rho$ hits
  $S$, there is a $1\le i \le k$ such that $(i,\rho(i))\in S$ and
  hence the edges $\ora{sd_{i,\rho(i)}}$ and $\ora{d_{i,\rho(i)}t}$
  exist. Therefore, we can satisfy the demand $(s,t)$ via
  $d_{i,\rho(i)}$. Note that this vertex is not used by the other
  paths: the path satisfying demand $(a_{i},c_{i,k})$ uses $P_i$ only from
  $c_{i,\rho(i)}$ to $c_{i,k}$, the path satisfying demand
  $(f^1_{i,\rho(i)},f^2_{i,\rho(i)})$ uses $P_i$ from $c_{i,0}$ to
  $c_{i,\rho(i)-1}$, and no other path reaches $P_i$. This also
  implies that $v^*_{i,\rho(i)}$ is used by none of the paths, as
  required.

  For the second part, consider a solution of $G_{k,S}$ representing
  some mapping $\rho$. This means that the path of demand
  $(a_i,c_{i,k})$ uses vertex $v_{i,\rho(i)}$ and hence $b_{i,\rho(i)}$. The
  only way to reach $c_{i,k}$ from $b_{i,\rho(i)}$ without going through any
  other terminal vertex is using the path $f_{i,\rho(i)}c_{i,\rho(i)}\dots
  c_{i,k}$. This means that demand $(f^1_{i,\rho(i)},f^2_{i,\rho(i)})$
  cannot use vertex $f_{i,\rho(i)}$, hence it has to use the path
  $f^1_{i,\rho(i)}c_{i,0}\dots c_{i,\rho(i)-1} f^2_{i,\rho(i)}$. It
  follows that for every $1\le i \le k$ and $j\neq \rho(i)$, vertices
  $d_{i,j}$ and $v^*_{i,j}$ are used by the paths satisfying demands
  $(a_i,c_{i,k})$ and $(f^1_{i,\rho(i)},f^2_{i,\rho(i)})$. This shows that every
  $v^*_{i,j}$ with $j\neq \rho(i)$ is used by the paths in the
  solution.  Moreover, the path satisfying $(s,t)$ has to go through
  vertex $d_{i,\rho(i)}$ for some $i$. By the way the edges incident to $s$ and $t$
  are defined, this is only possible if $\rho(i)\in S$, that is,
  $\rho$ hits $S$.
\end{proof}

Let $S_1$, $\dots$, $S_m$ be the sets appearing in the \kkhs\
instance $I$. We construct an instance $\vec I$ of \dirpath\ consisting of $m$
gadgets $G_1$, $\dots$, $G_m$, where gadget $G_t$ $(1\le t \le m)$ is a copy of the
gadget $G_{k,S_i}$ defined above. For every $1\le t < m$ and every
$1\le i,j \le k$,  we identify vertex $v^*_{i,j}$ of $G_t$ and vertex
$v_{i,j}$ of $G_{t+1}$. This completes the description of the
instance $\vec I$ of \dirpath.

We have to show that the pathwidth of the constructed graph $\vec G$ of $\vec I$ is $O(k)$ and
that $\vec I$ has a solution if and only if $I$ has. To bound the
pathwidth of $\vec G$, for every $0\le t \le m$, $1\le i,j\le k$, let us define the bag $B_{t,i,j}$ such that it
contains vertices $a_1$, $\dots$, $a_k$, $b_1$, $\dots$, $b_k$, $s$,
$t$, $f_{i,j}$, $f^1_{i,j}$, $f^2_{i,j}$, and the path
$P_i$ of gadget $G_t$ (unless $t=0$), and vertices $a_1$, $\dots$,
$a_k$, $b_1$, $\dots$, $b_k$ of gadget $G_{t+1}$ (unless $t=m$). It can be easily
verified that the size of each bag is $O(k)$ and if two vertices are
adjacent, then they appear together in some bag. Furthermore, if we
order the bags lexicographically according to $(t,i,j)$, then each
vertex appears precisely in an interval of the bags. This shows that
the pathwidth of $\vec G$ is $O(k)$.

Next we show that if $I$ has a solution $\rho: [k]\to [k]$, then $\vec I$ also has a
solution. As $\rho$ hits every $S_t$, by the first part of the Claim, each gadget $G_t$ has a solution
representing $\rho$. To combine these solutions into a solution for
$\vec I$, we have to make sure that the vertices $v_{i,j}$, $v^*_{i,j}$
that were identified are used only in one gadget. Since the solution
for gadget $G_t$ represents $\rho$, it uses vertices $v_{1,\rho(i)}$,
$\dots$, $v_{k,\rho(k)}$, but no other $v_{i,j}$ vertex. As vertex
$v_{i,j}$ of gadget $G_t$ was identified with vertex $v^*_{i,j}$ of
gadget $G_{t-1}$, these vertices might be used by the solution of
$G_{t-1}$ as well. However, the solution of $G_{t-1}$ also represents
$\rho$ and as claimed in the first part of the Claim, the solution
does not use vertices $v^*_{1,\rho(1)}$, $\dots$,
$v^*_{k,\rho(k)}$. Therefore, no conflict arises between the solutions
of $G_t$ and $G_{t-1}$. 

Finally, we have to show that a solution for $\vec I$ implies that a
solution for $I$ exists. We say that a solution for $\vec I$ is {\em normal}
with respect to $G_t$ if the paths satisfying the demands in $G_t$ do
not leave $G_t$ (the vertices $v_{i,j}$, $v^*_{i,j}$ that were
identified are considered as part of both gadgets, so we allow the
paths to go through these vertices). We show by induction that the
solution for $\vec I$ is normal for every $G_t$. Suppose that this is true
for $G_{t-1}$. If some path $P$ satisfying a demand in $G_t$ leaves
$G_t$, then it has to enter either $G_{t-1}$ or $G_{t+1}$. If $P$
enters a vertex of $G_{t+1}$ that is not in $G_t$, then it cannot go
back to $G_t$: the only way to reach a vertex $v_{i,j}$ of $G_{t+1}$
is from vertex $a_i$, which has indegree 0. Therefore, let us suppose
that $P$ enters $G_{t-1}$ at some vertex $v^*_{i,j}$ of $G_{t-1}$. The
only way the path can return to $G_t$ is via some vertex $v^*_{i,j'}$
of $G_{t-1}$ with $j'\ge j$. By the induction hypothesis, the
solution is normal with respect to $G_{t-1}$, thus the second part
of the Claim implies that there is a unique $j$ such that $v^*_{i,j}$
is not used by the paths satisfying the demands in $G_{t-1}$. As $P$
can use only this vertex $v^*_{i,j}$, it follows that $j'=j$ and hence path $P$ does not use any vertex of
$G_{t-1}$ not in $G_t$, In other words, $P$ does not leave $G_t$.

Suppose now that the solution is normal with respect to every $G_t$,
which means that it induces a solution for every gadget. Suppose that the
solution of gadget $G_t$ represents mapping $\rho_t$. We claim that
every $\rho_t$ is the same. Indeed, if $\rho_t(i)=j$, then the
solution of $G_t$ uses vertex $v_{i,j}$ of $G_t$, which is identical
to vertex $v^*_{i,j}$ of $G_{t-1}$. This means that the solution of
$G_{t-1}$ does not use $v^*_{i,j}$, and by the second part of the Claim, this
is only possible if $\rho_{t-1}(i)=j$. Thus $\rho_{t-1}=\rho_t$ for
every $1<i \le m$, let $\rho$ be this mapping. Again by the claim, $\rho$ hits
every set $S_t$ in instance $I$, thus $\rho$ is a solution of $I$.
\end{proof}

% \begin{lemma}\label{lem:pathdivide}
% Let $G$ be a graph (possibly with parallel edges) having pathwidth at
% most $w$. Let $G'$ be obtained from $G$ by subdividing some of the
% edges. Then the pathwidth of $G'$ is at most $w+1$. 
% \end{lemma}
For our main proof we will also need the following lemma. 
\begin{lemma}[\cite{Bienstock89-Gra}]
\label{lem:pathdivide}
Let $G$ be a graph (possibly with parallel edges) having pathwidth at
most $w$. Let $G'$ be obtained from $G$ by subdividing some of the
edges. Then the pathwidth of $G'$ is at most $w+1$. 
\end{lemma}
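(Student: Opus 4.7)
My plan is to start with a path decomposition $\mathcal{P} = (B_1, B_2, \ldots, B_\ell)$ of $G$ of width $w$ (so every $|B_i| \le w+1$), and modify it into a path decomposition of $G'$ of width at most $w+1$ by inserting one extra bag per subdivided edge. For each edge $e = uv$ of $G$ that is replaced by a two-edge path $u,v_e,v$ in $G'$, the axioms of a path decomposition guarantee an index $i(e)$ with $\{u,v\} \subseteq B_{i(e)}$. The construction is to place, immediately after $B_{i(e)}$ in the sequence, a new bag $B_{i(e)} \cup \{v_e\}$. If several subdivided edges share the same anchor index $i$, I insert the corresponding new bags one after another, each adding just one new subdivision vertex to $B_i$.

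The verification has three parts. First, every inserted bag has size at most $(w+1)+1 = w+2$, so the width of the new decomposition is at most $w+1$. Second, every edge of $G'$ is covered: non-subdivided edges coincide with edges of $G$ and are covered by the original bags, while for a subdivided edge with new vertex $v_e$ both new edges $uv_e$ and $v_ev$ are covered by the inserted bag $B_{i(e)} \cup \{v_e\}$. Third, the contiguity condition must be checked for both old and new vertices.

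The contiguity check is the only subtle point, and it is really where I expect any difficulty to lie. Each inserted bag is a superset of some existing bag $B_i$, so no original vertex $x$ has its set of occurrences split: if $x \in B_i$, then $x$ also lies in every new bag inserted immediately after $B_i$, which merely extends the interval of $x$ by at most a few consecutive positions; if $x \notin B_i$, then $x$ is not in any inserted bag at position $i$ either, so its interval is unaffected. The new vertices $v_e$ each appear in exactly one bag, so their contiguity is trivial.

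Putting these pieces together gives a path decomposition of $G'$ of width at most $w+1$, proving the lemma. The argument is essentially local and combinatorial; no probabilistic or structural machinery beyond the definition of a path decomposition is needed, and the same construction works verbatim even when $G$ has parallel edges, since we only use the existence of a bag containing both endpoints of the edge being subdivided.
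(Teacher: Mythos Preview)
The paper does not prove this lemma itself---it is quoted from \cite{Bienstock89-Gra} and invoked as a black box in the proof of Theorem~\ref{lem:disjpath}---so there is no in-paper argument to compare yours against.

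Your proof is correct. The three verification steps are sound; in particular, the contiguity argument for original vertices is the right one: every bag inserted at anchor $i$ is a superset of $B_i$, so each old vertex's interval of occurrences is only stretched, never broken, and each new subdivision vertex lives in a single bag.

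One minor caveat: you explicitly treat only the case where each subdivided edge is replaced by a path of length two. The statement can also be read as allowing an edge to be subdivided several times (replaced by a longer path $u,w_1,\ldots,w_k,v$), and your construction does not immediately cover that, since inserting $B_{i(e)}\cup\{w_j\}$ for each $j$ separately would fail to put consecutive $w_j,w_{j+1}$ in a common bag. This is easy to patch (for instance, choose $i(e)$ to be the last bag containing $u$, so that $u$ can be dropped from subsequent inserted bags to make room for two consecutive $w_j$'s), and in any case both applications of the lemma in this paper---building $G$ from $D$, and inserting the vertices $v^s_i,v^t_i$---use only single subdivisions, so your version already suffices for everything the paper needs.
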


\begin{theorem}
\label{lem:disjpath}
Assuming the ETH, there is no $2^{o(w\log w)}\cdot n^{O(1)}$ time
algorithm  for \disjpath.
\end{theorem}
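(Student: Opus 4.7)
The plan is to reduce \dirpath\ parameterized by pathwidth to \disjpath\ parameterized by pathwidth, and then chain this with the preceding theorem. Concretely, I start from the directed instance $(\vec G,\{(s_i,t_i)\}_{i=1}^p)$ of pathwidth $O(k)$ produced in the previous theorem by reduction from \kkhs, and I construct an undirected instance $(G,\mathcal{D})$ whose solvability is equivalent to that of the directed one, while $\pw(G)=O(k)$ and $|\mathcal{D}|$ is polynomial in the size of the original \kkhs\ instance.

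The construction replaces each arc $\ora{uv}\in E(\vec G)$ by a local \emph{direction-enforcing gadget}: a constant-size undirected substructure on the edge $uv$ obtained by subdividing the edge with a few new vertices and attaching a constant number of fresh demand pairs whose terminals all live inside the gadget (plus, at most, the endpoints $u$ and $v$). The defining property that I would prove for the gadget is that, in any feasible vertex-disjoint routing of the full demand set, the auxiliary demand pairs of each gadget admit essentially a unique configuration, and in that configuration the gadget can carry at most one additional ``through'' path, which must enter at $u$ and leave at $v$. This reproduces exactly the behavior of the directed arc $\ora{uv}$.

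I would then verify the reduction in both directions. Given a directed solution, I route every original demand along its directed path, feeding each used arc-gadget in the forward direction and completing the auxiliary demands inside each gadget using the complementary local configuration; given an undirected solution, the gadgets' direction property forces every original demand to correspond to a directed path in $\vec G$. For the pathwidth bound I would invoke Lemma~\ref{lem:pathdivide}: since each gadget is obtained by subdividing a single edge and adding constantly many further local vertices and edges, each bag of the path decomposition of $\vec G$ is enlarged by only $O(1)$ while processing the edge's gadget, and so $\pw(G)\le\pw(\vec G)+O(1)=O(k)$.

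The chief obstacle is the gadget design: the gadget must simultaneously be small (so Lemma~\ref{lem:pathdivide} gives only an $O(1)$ additive blow-up in pathwidth, keeping the reduction a linear-parameter reduction) and rigid enough that direction is enforced \emph{irrespective} of how the remaining demand pairs in the graph are routed. Undirected disjoint paths is notoriously permissive, so the auxiliary pairs must be chosen to leave no ``backward'' slack inside the gadget; I would look for a construction built from a short subdivided path supplemented by one or two blocker demand pairs that consume every local configuration other than the $u$-to-$v$ traversal, and then verify the direction property by a local case analysis. Once this is in place, a hypothetical $2^{o(w\log w)}\cdot n^{O(1)}$ algorithm for \disjpath\ would, through the constructed equivalence, yield a $2^{o(k\log k)}\cdot n^{O(1)}$ algorithm for \dirpath\ of pathwidth $O(k)$, contradicting the preceding theorem and hence the ETH.
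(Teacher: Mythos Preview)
Your reduction has a genuine gap at its core: a per-arc ``direction-enforcing'' gadget in the undirected setting cannot exist. Suppose your gadget $H_{\ora{uv}}$ has external attachment points $u$ and $v$, internal vertices, and some constant number of auxiliary demand pairs all lying inside $H_{\ora{uv}}$. If there is a configuration of the auxiliary paths that leaves room for one through-path from $u$ to $v$, then exactly the same configuration leaves room for the reverse through-path from $v$ to $u$: the through-path is an undirected path, and reversing it does not touch the auxiliary routing at all. So no local case analysis can yield the property you want --- the undirected Disjoint Paths problem is symmetric with respect to reversal of any single path, and a gadget whose only interface with the outside world is the two vertices $u,v$ cannot break that symmetry. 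Concretely, take arcs $\ora{ab}$, $\ora{cb}$ in $\vec G$: with per-arc gadgets only, an undirected solution can route a demand through $a\!-\!H_{\ora{ab}}\!-\!b\!-\!H_{\ora{cb}}\!-\!c$, which corresponds to no directed walk in $\vec G$.

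The paper's proof avoids this obstruction by working at \emph{vertices}, not arcs. It first splits every vertex $v$ of $\vec G$ into $v_\text{in},v_\text{out}$ (with a subdivision vertex $e_{uv}$ for each arc $\ora{uv}$), and then, for each $v$, adds $d(v)-1$ short blocker demands $(v_i^s,v_i^t)$ whose terminals are attached to consecutive pairs among the edge-vertices $e_{u_1v},\dots,e_{u_dv}$ incident to $v_\text{in}$. These blockers consume all but one of the $e_{u_iv}$, so any main path visiting $v_\text{in}$ as an internal vertex must use $v_\text{out}$ as its second neighbour; this is what forces the path to correspond to a directed walk. The pathwidth analysis is then also more delicate than an application of Lemma~\ref{lem:pathdivide}: because the number of blocker edges at $v$ is $\Theta(d(v))$ rather than $O(1)$, the paper orders the $e_{u_iv}$ by their last bag index $r(\cdot)$ and inserts each $e_{u_iv}$ only into bags up to $\ell(e_{u_{i+1}v})$, arguing that no two inserted vertices collide in a bag. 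Your proposal would need both of these ingredients --- the in/out split and the per-vertex degree-saturating blockers, together with the ordering trick for the width bound --- to go through.
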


\begin{proof}
Let $\vec I$ be a instance of \dirpath\ on a directed graph $D$ having
pathwidth $w$. We transform $D$ into an undirected graph $G$, where two
adjacent vertices $v_\text{in}$, $v_{\text{out}}$ correspond to each
vertex $v$ of $D$, and if $\ora{uv}$ is an edge of $D$, then we
introduce a new vertex $e_{uv}$ that is adjacent to both $u_\text{out}$ and
$v_\text{in}$. It is not difficult to see that the pathwidth of $G$ is at
most $2w+2=O(w)$: $G$ can be obtained from the underlying graph of $D$
by duplicating vertices (which at most doubles the size of each bag) and
subdividing edges (which increases pathwidth at most by one).

Let $I$ be an instance of \disjpath\ on $G$ where there is a demand
$(v_\text{out},u_\text{in})$ corresponding to every demand of $(v,u)$
of $\vec I$. It is clear that if $\vec I$ has a solution, then $I$ has
a solution as well: every directed path from $u$ to $v$ in $D$ can be
turned into a path connecting $u_\text{out}$ and $v_\text{in}$ in $G$.
However, the converse is not true: it is possible that an undirected
path $P$ in $G$ reaches $v_\text{in}$ from $e_{uv}$ and instead of
continuing to $v_\text{out}$, it continues to some $e_{wv}$. In this
case, there is no directed path corresponding to $P$ in $D$. We add
further edges and demands to forbid such paths.

Let $B_1$, $\dots$, $B_n$ be a path decomposition of $G$ having width
$w'=O(w)$. For every vertex $x$ of $G$, let $\ell(x)$ and $r(x)$ be
the index of the first and last bags, respectively, were $x$
appears. It is well-known that the decomposition can be chosen such
that $r(x)\neq r(y)$ for any
two vertices $x$ and $y$.

We modify $G$ to obtain a graph $G'$ the following way. If vertex $v$
has $d$ inneighbors $u_1$, $\dots$, $u_d$ in $D$, then $v_\text{in}$
has $d+1$ neighbors in $G$: $v_\text{out}$ and $d$ vertices
$e_{u_1v}$, $\dots$, $e_{u_dv}$. Suppose that the neighbors of $v$ are
ordered such that $r(e_{u_1v})< \dots < r(e_{u_dv})$. We introduce
$2d-2$ new vertices $v^s_1$, $\dots$, $v^s_{d-1}$, $v^t_1$, $\dots$,
$v^t_{d-1}$ such that $v^s_i$ and $v^t_i$ are both adjacent to
$e_{u_iv}$ and $e_{u_{i+1}v}$.  For every $1\le i\le d-1$, we
introduce a new demand $(v^s_i,v^t_i)$. Repeating this procedure for
every vertex $v$ of $D$ creates an instance $I'$ of undirected
\disjpath\ on a graph $G'$.

We show that these new vertices and edges increase the pathwidth at
most by a constant factor. Observe that $G'$ can be obtained from $G$
by adding two parallel edges between $e_{u_iv}$ and $e_{u_{i+1}v}$ and
subdividing them. Thus by Lemma~\ref{lem:pathdivide}, all we need to
show is that adding these new edges increases pathwidth only by a
constant factor. If $r(e_{u_iv})\ge \ell(e_{u_{i+1}v})$, then the
parallel edges between $e_{u_iv}$ and $e_{u_{i+1}v}$ can be added
without changing the path decomposition: bag $B_{r(e_{iv})}$ contains
both vertices. If $r(e_{u_iv})< \ell(e_{u_{i+1}v})$, then let us
insert vertex $e_{u_iv}$ into every bag $B_j$ for $r(e_{u_iv})< j \le
\ell(e_{u_{i+1}v})$. Now bag $B_{\ell(e_{u_{i+1}v})}$ contains both
$e_{u_iv}$ and $e_{u_{i+1}v}$, thus we can add two parallel edges
between them. Note that vertex $v_\text{in}$ appears in every bag
where $e_{u_iv}$ is inserted: if not, then either $v_\text{in}$ does
not appear in bags with index at most $r(e_{u_iv})$, or it does not
appear in bags with index at least $\ell(e_{u_{i+1}v})$, contradicting
the fact that $v_\text{in}$ is adjacent to both $e_{u_iv}$ and
$e_{u_{i+1}v}$. Furthermore, vertices $e_{u_iv}$ and $e_{u_jv}$ are not
inserted into the same bag for any $i\neq j$: if $j>i$, then
$r(e_{u_jv})> r(e_{u_{i+1}v})\ge \ell(e_{u_{i+1}v})$.  Therefore,
the number of new vertices in each bag is at most the original size of the bag,
i.e., the size of each bag increases by at most a factor of 2.

We claim that $I'$ has a solution if
and only if $\vec I$ has. If $\vec I$ has a solution, then the
directed path satisfying demand $(u,v)$ gives in a natural way an
undirected path in $G'$ that satisfies demand
$(u_\text{out},v_\text{in})$. Thus we can obtain a pairwise disjoint
collection of paths that satisfy the demands of the form
$(u_\text{out},v_\text{in})$.  Note that if $v_\text{out}$, $e_{u_1v}$, $\dots$,
$e_{u_dv}$ are the neighbors of $v_\text{in}$ in $G'$, then the paths
in this collection use at most one of the vertices $e_{u_1v}$, $\dots$,
$e_{u_dv}$, say, $e_{u_jv}$. Now we can satisfy the demands
$(v^s_i,v^t_i)$ for every $1\le i \le d-1$: for $i<j$, we can use the
path $v^s_ie_{u_iv}v^t_i$, and for $i\ge j$, we can use the path
$v^s_ie_{u_{i+1}v}v^t_i$. Thus instance $I'$ has a solution.

For the other direction, suppose that $I'$ has a solution. Let us call
a path of this solution a {\em main path} if it satisfies a demand of
the form $(u_\text{out},v_\text{in})$. We claim that if $v_\text{in}$
is an internal vertex of a main path $P$, then $P$ contains
$v_\text{out}$ as well. Otherwise, $P$ has to contain at least two
of the neighbors $e_{u_1v}$, $\dots$, $e_{u_dv}$ of
$v_\text{out}$. In this case, less than $d-1$ vertices out of
$e_{u_1v}$, $\dots$, $e_{u_dv}$ remain available for the $d-1$ demands
$(v^s_1,v^t_1)$, $\dots$, $(v^s_d,v^t_d)$, a contradiction.

Consider a main path $P$ that satisfies a demand
$(u_\text{out},v_\text{in})$ of $I'$. Clearly, $P$ cannot go through
any terminal vertex other than $u_\text{out}$ and $v_\text{in}$. As
$u$ has indegree 0 in $D$, path $P$ has to go to some $e_{uw}$ and
then to $w_\text{in}$ after starting from $u_\text{out}$. By our claim
in the previous paragraph, the next vertex has to be $w_\text{out}$,
then again some $e_{wz}$ and $z_{\text{in}}$ and so on. Thus there is
a directed path in $D$ that corresponds to $P$ in $G'$. This means
that directed paths corresponding to the main paths of the solution
for $I'$ form a solution for $\vec I$.
\end{proof}

\section{Chromatic Number}
In this section, we give another lower bound result for a problem that is known to admit an algorithm with running time 
 $w^{O(w)}\cdot n^{O(1)}$ on graphs with treewidth $w$. In particular we show that the running time cannot be
improved to $2^{o(w\log w)}\cdot n^{O(1)}$ unless the ETH collapses. Given a graph $G$, a function $f:V(G)\rightarrow \{1,\ldots,\ell\}$, is called an {\em $\ell$-proper coloring} of $G$, if for any edge $uv\in E(G)$, we have that $f(u)\neq f(v)$. The chromatic number of a graph $G$ is the minimum positive integer $\ell$ for which $G$ admits an proper $\ell$-coloring and is denoted by $\chi(G)$. In the \chnum\ problem, we are given a graph $G$ and objective is to find the value of 
$\chi(G)$. It is well known that if $G$ has treewidth $w$ then $\chi(G)\leq w+1$. Using, this we can obtain an algorithm for 
\chnum\ running in time $w^{O(w)}\cdot n^{O(1)}$ on graphs with treewidth $w$ \cite{DBLP:journals/dam/JansenS97}. We show that in fact this running time is optimal. 

In what follows, we give a lower bound for a parameter even larger than the treewidth of the input graph. Given a graph 
$G$, a subset of vertices $C$ is called {\em vertex cover} if for every $uv\in E$, either $u\in C$ or $v\in C$. In other words, $G-C$ is an independent set. In particular, we will study the following parameterization of the problem. 

\problem{\chnum}
 {A graph $G$ together with a vertex cover $C$ of size at  most $k$ and a positive integer $\ell$.}{$k$}
{Is $\chi(G)\leq \ell$?}
It is well known that if $G$ has a vertex cover of size $k$, then its treewidth is upper bounded by $k+1$ and thus we can test whether $\chi(G)\leq \ell$ in time  $k^{O(k)}\cdot n^{O(1)}$. 

%In this section we give another example of a problem for which it is known that it admits an algorithm with running time 
% $w^{O(w)}\cdot n^{O(1)}$ on graphs with treewidth $w$ and we show that it cannot be
%improved to $2^{o(w\log w)}\cdot n^{O(1)}$ unless the ETH collapses. 

\begin{theorem}
\label{thm:chromaticnum}
Assuming the ETH, there is no $2^{o(k\log k)}\cdot n^{O(1)}$ time
algorithm  for \chnum\ parameterized by vertex cover number.
\end{theorem}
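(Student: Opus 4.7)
The plan is to reduce from \kkbis\ (Theorem~\ref{th:kkbis}) to \chnum\ parameterized by vertex cover, using a standard palette-plus-blocker gadget. Given an instance $I$ of \kkbis\ on $[2k]\times[2k]$, I would construct a graph $H$ together with a vertex cover of size $O(k)$, and set $\ell=2k+1$, so that $\chi(H)\le\ell$ if and only if $I$ has a solution. Under this correspondence, a proper coloring will encode the permutation $\rho$ sought in \kkbis, with row vertices carrying the values $\rho(i)$.

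The vertex cover $C$ consists of a \emph{palette} clique $\{p_0,p_1,\dots,p_{2k}\}$ of size $2k+1$ together with a \emph{row} clique $\{r_1,\dots,r_{2k}\}$; after permuting color names we may assume $p_j$ receives color $j$. I would connect $r_i$ to $p_0$ for every $i$, connect $r_i$ to $p_{k+1},\dots,p_{2k}$ for $i\le k$, and connect $r_i$ to $p_1,\dots,p_k$ for $i>k$. Together with the row clique, these edges force $r_1,\dots,r_{2k}$ to be colored by a permutation $\rho$ of $[2k]$ with $\rho(\{1,\dots,k\})=\{1,\dots,k\}$ and $\rho(\{k+1,\dots,2k\})=\{k+1,\dots,2k\}$---exactly the shape of a \kkbis\ candidate. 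Outside $C$, for every edge $e=(i_1,j_1)(i_2,j_2)$ of $I$ (so $j_1\le k<j_2$), I would add one \emph{blocker} vertex $u_e$ adjacent to $r_{i_1}$, $r_{i_2}$, and to every palette vertex $p_j$ with $j\notin\{j_1,j_2\}$. The blockers form an independent set, so $C$ is indeed a vertex cover and $|C|=4k+1=O(k)$.

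For correctness, each blocker $u_e$ is forced to take a color in $\{j_1,j_2\}$, and both choices are killed precisely when $\rho(i_1)=j_1$ and $\rho(i_2)=j_2$; hence a proper $(2k+1)$-coloring exists iff there is a permutation $\rho$ of the prescribed form for which $(1,\rho(1)),\dots,(2k,\rho(2k))$ induces an independent set in $I$. Both directions are then routine: given a \kkbis\ solution, color $p_j:=j$, $r_i:=\rho(i)$, and each $u_e$ by whichever of $j_1,j_2$ is still free. The graph has $O(k^4)$ vertices, so a $2^{o(k'\log k')}\cdot n^{O(1)}$ algorithm for \chnum\ parameterized by vertex cover size $k'=4k+1$ would yield a $2^{o(k\log k)}$ algorithm for \kkbis, contradicting Theorem~\ref{th:kkbis} and hence the ETH.

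The construction itself is clean; the main subtlety is choosing the source problem correctly. Reducing from the generic \kkcl\ or \kkis\ would run into the annoying case $j_1=j_2$ (two forbidden pairs sharing a column), where the blocker degenerates to a single forced color and over-constrains the rows, forbidding each of $r_{i_1}=j$ and $r_{i_2}=j$ individually rather than just their conjunction. Starting from \kkbis\ sidesteps this entirely, because every edge goes between $I_1$ and $I_2$, forcing $j_1\le k<j_2$ and hence $j_1\neq j_2$ for free. Once this is in place, the palette/row/blocker gadget transfers the $2^{\Omega(k\log k)}$ lower bound of Theorem~\ref{th:kkbis} to \chnum\ parameterized by vertex cover, which in turn implies the same lower bound under the (potentially smaller) treewidth parameter.
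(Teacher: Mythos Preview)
Your proof is correct and uses essentially the same gadget as the paper: a vertex cover consisting of two cliques of size $O(k)$ (one acting as a color palette, one encoding the permutation) together with independent ``blocker'' vertices, each adjacent to two row vertices and to all but two palette vertices, so that the blocker is uncolorable exactly when the forbidden pair is selected. The only real difference is the source problem: the paper reduces from \kkpermcl\ (adding a blocker for every \emph{non}-edge with $x\neq y$, and relying on the permutation constraint to handle same-column non-edges), whereas you reduce from \kkbis\ and exploit its bipartite structure to guarantee $j_1\neq j_2$; this costs you a slightly larger vertex cover ($4k+1$ versus the paper's $2k$) and an extra color, but the argument is otherwise identical.
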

\begin{proof}
We prove the theorem by a reduction from the \kkpermcl\  problem.  Let $(I,k)$ be an instance of 
\kkpermcl\ consisting of a graph $H$ over the vertex set $[k]\times [k]$  and a positive integer $k$. Recall that in the \kkpermcl\  problem, the goal is to check whether there is a clique containing exactly one vertex from each {\em row,} and containing exactly one vertex from each {\em column.} In other words, the
vertices selected in the solution are $(1,\rho(1))$, $\dots$, $(k,\rho(k))$ for some
{\em permutation} $\rho$ of $[k]$.

Now we show how to construct the graph $G$, an input to  \chnum\ starting from $H$. The vertex set of $G$ consists of the following set of vertices and edges.  
\begin{itemize}
\item We have two cliques $C_a$ and $C_b$ of size $k$. The vertex set of $C_x$, $x\in \{a,b\}$, consists of 
$\{x_1,\ldots,x_k\}$. 
%We add an edge between $b_i$ and $c_j$ if $i\neq j$. That is, there is a complete bipartite graph between $C_b$ and 
%$C_c$ except edges $b_ic_i$, $i\in [k]$.
\item For every $i,j,x,y\in[k]$, $i\neq j$ and $x\neq y$, for which $(i,x)$ and $(j,y)$ are not adjacent in $H$, we have 
a new vertex $w_{xy}^{ij}$. We first make $w_{xy}^{ij}$ adjacent to $b_i$ and $b_j$. Finally, we add edges between 
 $w_{xy}^{ij}$ and $\{a_1,\ldots,a_k\}\setminus \{a_x,a_y\}$. 
% \todo[inline]{Is there a reason for having both $b_i$ and $c_i$? They are forced to have the same color. This seems like an unnecessary complication.}
\end{itemize}
This concludes the construction. 

We now show that $H$ has a permutation clique if and only if $\chi(G)= k$. Let the
vertices selected in the permutation clique are $(1,\rho(1))$, $\dots$, $(k,\rho(k))$ for some 
permutation $\rho$ of $[k]$. Now we define a proper $k$-coloring of $G$.  
For every $j\in[k]$, we color the vertex $a_j$ with $j$ and the vertex $b_j$  with $\rho(j)$. The only vertices that are left uncolored are $w_{xy}^{ij}$. Observe that the only colors that we can use for $w_{xy}^{ij}$ are 
$\{x,y\}$. Thus, if we can show that $Z=\{x,y\}\setminus \{\rho(i),\rho(j)\}$ is non-empty then we can use any color in $Z$ to color  $w_{xy}^{ij}$. But that follows since there is an edge between $(i,\rho(i))$ and $(j,\rho(j))$ and there is no edge between  $(i,x)$ and $(j,y)$ by definition of $w_{xy}^{ij}$.

Next we show the reverse direction. Let $f$ be a proper $k$-coloring function for $G$. Without loss of generality we can assume that  $f(a_j)=j$.  
%Also, observe that because of our construction $f(b_j)=f(c_j)$ for $j\in[k]$. 
For every 
$i\in[k]$, define $\rho(i)=f(b_i)$. Observe that since  $C_b$ is a clique and $f$ is a proper $k$-coloring for $H$ and hence in particular for $C_b$, we have that $\rho$ is a permutation of $[k]$. We claim that 
$(1,\rho(1))$, $\dots$, $(k,\rho(k))$ forms a permutation clique of $H$. Towards this we only need to show that there is an edge between every $(i,\rho(i))$ and  $(j,\rho(j))$ in $H$. For contradiction assume that $(i,\rho(i))$ and  $(j,\rho(j))$ are not adjacent in $H$. Consider, the vertex $w^{ij}_{\rho(i)\rho(j)}$ in $G$. It can only be colored with either $\rho(i)$ or $\rho(j)$. However, $f(b_i)=\rho(i)$ and $f(b_j)=\rho(j)$. This contradicts the fact that $f$ is a 
proper $k$-coloring of $G$. This concludes the proof in the reverse direction. 

Finally, observe that the vertices of $C_a$ and $C_b$ form a vertex cover for $G$ of size $2k$. 
The claim in the previous paragraph shows that an algorithm finding a $\chi(G)$ 
%, an instance to \dist\  corresponding to an instance $I$ of  \cperm, 
solves the instance $(I,k)$ of  \kkpermcl. Note the number of vertices in $G$ is bounded by a polynomial in $k$ and the vertex cover of $G$ is bounded by $2k$. Therefore a $2^{o(k \log k)}n^{O(1)}$ algorithm for \chnum\ would give a   $2^{o(k \log k)}$ algorithm for \kkpermcl, violating the ETH by Theorem~\ref{th:kkpermcl}. 
\end{proof}

%%% Local Variables: 
%%% mode: latex
%%% TeX-master: "mainS.tex"
%%% End: 

%\input{conclusion}
\section{Conclusion}
In this paper we showed that several parameterized problems have
slightly superexponential running time unless the ETH fails.  In
particular we showed for four well-studied problems arising in three
different domains that the known superexponential algorithms are
optimal: assuming the ETH, there is no $2^{o(d\log d)}\cdot |I|^{O(1)}$ or
$2^{o(d\log |\Sigma|)}\cdot |I|^{O(1)}$ time algorithm for \clstr,
$2^{o(d\log d)}\cdot |I|^{O(1)}$ time algorithm for \dist, and
$2^{o(w\log w)}\cdot |I|^{O(1)}$ time algorithm for \disjpath\ and \chnum\
parameterized by treewidth. 
% If the reader believes the ETH, then these
% results give the useful information that one should not waste time on
% trying to improve the $2^{O(k\log k)}$ dependence to $2^{o(k\log k)}$
% for any of these problems, as it is unlikely that such algorithms
% exist.  If the reader does not believe the ETH, then at least these result
% show that one should not spend too much time on improving the
% $2^{O(k\log k)}$ dependence for these problems, as such an improvement
% would disprove the ETH as a corollary; therefore, probably it makes more
% sense to concentrate on disproving the ETH directly instead.
We believe that many further 
results of this form can be obtained by using the framework of the current paper. Two concrete problems 
that might be amenable to our framework are: 
\begin{itemize}
\setlength{\itemsep}{-5pt}
\item Are the known parameterized algorithms for  \textsc{Point Line Cover}~\cite{LangermanM05,GrantsonL06} and \textsc{Directed Feedback Vertex Set}~\cite{ChenLLOR08}, parameterized by the solution size, running in time $2^{O(k\log k)}\cdot  |I|^{O(1)}$ optimal?
\end{itemize}
In the conference version of this paper \cite{DBLP:conf/soda/LokshtanovMS11}, we asked further questions of this form, which have been answered by now.
\begin{itemize}
\item Is the $2^{O(k\log k)}\cdot  |I|^{O(1)}$ time parameterized algorithm for \textsc{Interval Completion}~\cite{VillangerHPT09} optimal?
In 2016, Cao \cite{DBLP:conf/soda/Cao16} showed that this is not the case: the problem can be solved in single-exponential time $6^k\cdot n^{O(1)}$. In fact, recently  Bliznets et al.~\cite{BliznetsFPP16} obtained an algorithm with running time $k^{O(\sqrt{k})}\cdot n^{O(1)}$ for \textsc{Interval Completion}. 

\item  Are the known parameterized algorithms for  \textsc{Hamiltonian Path}~\cite{grohe-flum-param},  \textsc{Connected Vertex Cover}~\cite{Moser05} and \textsc{Connected Dominating Set}~\cite{1070514}, parameterized by the treewidth $w$ of the input graph, running in time $2^{O(w\log w)}\cdot  |I|^{O(1)}$ optimal?
In 2011, Cygan et al.~introduced the technique of Cut \& Count \cite{Cygan2011}, which is able to give $2^{O(w)}\cdot |I|^{O(1)}$ time randomized algorithms for all these problems. Later, deterministic algorithms with this running time were  found \cite{BodlaenderCKN2015,FedorDS2014}. Cygan et al.~showed also that, assuming the ETH, there is no $2^{o(w\log w)}\cdot n^{O(1)}$ time algorithm for \textsc{Cycle Packing} on graphs of treewidth $w$.
\end{itemize}
It seems that our paper raised awareness in the field of parameterized
algorithms that tight lower bounds are possible even for running times
that may look somewhat unnatural, and in particular if a problem can
be solved in time $2^{O(k\log k)}\cdot n^{O(1)}$, then it is worth
exploring whether this can be improved to single-exponential or a
lower bound can be proved. The invention of the Cut \& Count technique and the related results of Cygan~et al.~\cite{Cygan2011} seem to be influenced by this realization. By now, there are other papers building on our work and investigating the optimality of $2^{O(k\log k)}\cdot n^{O(1)}$ time algorithms in the context of bounded-treewidth graphs or graph modification problems \cite{DBLP:conf/wg/BonnetBKM16,DBLP:journals/tcs/BroersmaGP13,DBLP:conf/mfcs/Pilipczuk11,DBLP:conf/isaac/DrangeDH14}
%\todo[inline]{Check if these statements make sense. Further references?}

%%% Local Variables: 
%%% mode: latex
%%% TeX-master: "mainS.tex"
%%% End: 

%\newpage

\bibliographystyle{siam}
\bibliography{superexp}
%\newpage 
%\input{appendix}
\end{document}